\newcommand{\C}{\mathbb{C}}
\theoremstyle{plain}
\newtheorem{theorem}{Theorem}[section]
\newtheorem{proposition}[theorem]{Proposition}
\newtheorem{corollary}[theorem]{Corollary}
\newtheorem{lemma}[theorem]{Lemma}
\theoremstyle{definition}
\newtheorem{definition}[theorem]{Definition}
\theoremstyle{remark}
\newtheorem{remark}[theorem]{Remark}
\newcommand{\Z}{\mathbb{Z}}
\newcommand{\bK}{\mathbb{K}}
\newcommand{\calD}{\mathcal{D}}
\newcommand{\cS}{\mathcal{S}}
\newcommand{\cW}{\mathcal{W}}
\newcommand{\bW}{\mathbb{W}}
\newcommand{\res}{\mathop{\rm res}}
\newcommand{\ii}{\mathrm{i}}
\DeclareMathOperator{\Tr}{\mathrm{Tr}}
\newcommand{\coeff}[2]{\mathop{[{#1}^{#2}]}}
\newcommand{\restr}[2]{\mathop{\big\lfloor_{{#1}\to {#2}}}}
\numberwithin{equation}{section}
\title{KP integrability through the $x-y$ swap relation}
\author[A.~Alexandrov]{A.~Alexandrov}
\address{A.~A.: Center for Geometry and Physics, Institute for Basic Science (IBS), Pohang 37673, Korea
}
\email{alex@ibs.re.kr}
\author[B.~Bychkov]{B.~Bychkov}
\address{B.~B.: Department of Mathematics, University of Haifa, Mount Carmel, 3498838, Haifa, Israel}
\email{bbychkov@hse.ru}
\author[P.~Dunin-Barkowski]{P.~Dunin-Barkowski}
\address{P.~D.-B.: Faculty of Mathematics, National Research University Higher School of Economics, Usacheva 6, 119048 Moscow, Russia; HSE--Skoltech International Laboratory of Representation Theory and Mathematical Physics, Skoltech, Bolshoy Boulevard 30 bld. 1, 121205 Moscow, Russia; and NRC “Kurchatov Institute” -- ITEP, 117218 Moscow, Russia}
\email{ptdunin@hse.ru}
\author[M.~Kazarian]{M.~Kazarian}
\address{M.~K.: Faculty of Mathematics, National Research University Higher School of Economics, Usacheva 6, 119048 Moscow, Russia; and Igor Krichever Center for Advanced Studies, Skoltech, Bolshoy Boulevard 30 bld. 1, 121205 Moscow, Russia}
\email{kazarian@mccme.ru}
\author[S.~Shadrin]{S.~Shadrin}
\address{S.~S.: Korteweg-de Vries Institute for Mathematics, University of Amsterdam, Postbus 94248, 1090GE Amsterdam, The Netherlands}
\email{S.Shadrin@uva.nl}	
\begin{document}
	
\begin{abstract} We discuss a universal relation that we call the $x-y$ swap relation, which plays a prominent role in the theory of topological recursion, Hurwitz theory, and free probability theory. We describe in a very precise and detailed way the interaction of the $x-y$ swap relation and KP integrability. As an application, we prove a recent conjecture that relates some particular instances of topological recursion to the Mironov--Morozov--Semenoff matrix integrals.
\end{abstract}
	
\maketitle

\setcounter{tocdepth}{2}
\tableofcontents


\part{Formulations and results}

\section{Pre-introduction}

This paper deals with the so-called $x-y$ swap relation as the main object of study. This relation, in its several different reincarnations, appears to be ubiquitous in various areas of mathematics~\cite{alexandrov2022universal,borot2021analytic,BGF-conjecture,borot-comb,FullySimpleProof,borot2021topological,hock2022simple,hock2022xy,hock2023laplace}. We discuss in details the structure of this relation, representing it in several different ways, and prove that this relation preserves KP integrability as a general property of systems of differentials.

In order to present the results of the paper, we have decided to split them in three independent layers. Firstly, we just introduce in Part 1 of the Introduction (Section~\ref{sec:intro1}) in a precise but very rough way the main results that can be understood without going deeply into the structure of the $x-y$ swap relation and/or KP integrability. Then, in Part 2 of the Introduction (Section~\ref{sec:intro2KP}) we go deeper inside the structures related to KP integrability and formulate the results on their transformation rules under the $x-y$ swap relation. Finally, in Part 3 of the Introduction (Section~\ref{sec:intro3FullPicture}) we present a system of internal constructions and equivalent reformulations of $x-y$ symmetry that all together fully revisit the $x-y$ swap relation and connect it to KP integrability. Importantly, these reformulations avoid the complicated combinatorics of sums over graphs which appear in the original formulation of the $x-y$ swap relation.

One of the basic principles that we formulate in this paper is the concept of KP integrability as a property of a system of symmetric meromorphic differentials $\{\omega^{(g)}_n\}_{g\geq 0, n\geq 1}$ that are all regular along the diagonal, except for the $(g,n)=(0,2)$ case. This principle is stated in Theorem~\ref{thm:KPglobal} as an assertion that the KP integrability property does not depend on the point of expansion.

The main results of the paper are Theorem~\ref{thm:KP-duality} that basically states that the $x-y$ swap relation preserves KP integrability; Theorem~\ref{thm:transformation-KKernel} that gives a Gaussian integral transformation for the corresponding Baker--Akhiezer kernels; and Theorem~\ref{thm:transformation-Omega} that represents a deep structure behind the $x-y$ swap relation that is responsible for the transfer of integrability properties.

The main applications concern the questions of KP integrability in the theory of spectral curve topological recursion~\cite{EO,B-E}: Theorem~\ref{thm:r-BGW} resolves a conjecture in~\cite{alexandrov2023higher,chidambaram2023relations} that a particular instance of topological recursion reproduces the Mironov--Morozov--Semenoff matrix integral, and Theorem~\ref{thm:TR-main} describes a quite general situation when our theory provides an explicit description of the KP tau function associated to the correlation differentials of topological recursion. We also prove that in the context of topological recursion the KP integrability property requires that the spectral curve is rational, see Theorem~\ref{thm:TR-rational-curve-refined}.

Beyond applications in topological recursion we have a quite general statement on the particular representation as Gaussian integrals for the basis vectors that span the half-infinite planes (or, equivalently, points in the Sato--Wilson Grassmannian) corresponding to tau functions obtained by application of $x-y$ swap relation to the trivial case, see Corollary~\ref{cor:Phi-TirivalDual}.

\subsection{Notation} $\ii$ denotes $\sqrt{-1}$. For a function $f(x)$ the expression  $\restr{x}{a} f(x)$ stands for $f(a)$. By $\llbracket n \rrbracket$ we mean the set $\{1,\dots,n\}$, and $x_{\llbracket n \rrbracket}$ denotes the set of variables $\{x_1,\dots,x_n\}$.

\subsection{Organization of the paper} Sections~\ref{sec:intro1}--\ref{sec:intro3FullPicture} form together an introduction, where we formulate our results as described above. We supply our statements with immediate proofs if they are sufficiently short and do not use involved computations with the Gaussian integrals (the latter integrals are the main technique we use in this paper). In Section~\ref{sec:quasiclassical} we prove the results that concern the semi-classical limit of the corresponding KP tau functions (which are, in particular, necessary to describe the possible form of the $(0,2)$ differentials in KP integrable case). In Section~\ref{sec:Gaussian} we develop the technique of computations with the Gaussian integrals and their connection to the ingredients of the $x-y$ swap relations. This section is the technical core of this paper. Finally, in Section~\ref{sec:ReformulationXY-duality} we present necessary details to complete the proof of identities of Section~\ref{sec:intro3FullPicture}.

It is important to mention that we assume that the reader is a bit familiar with Kadomtsev--Petviashvili hierarchy, which we call elsewhere KP hierarchy for brevity. Otherwise we refer to~\cite{MJD} as the main source on this topic. Also we do not recall at all the concept of spectral curve topological recursion. As we explain below, in many cases it is sufficient to know that it can be replaced by the $x-y$ swap formula, and otherwise we refer the reader to the foundational survey~\cite{EO} and more recent sources (e.g.~\cite{alexandrov2022universal}) for precise statements that we use.

\subsection{Acknowledgments} A.A.  was supported by the Institute for Basic Science (IBS-R003-D1). Research of B.B. was supported by the ISF Grant 876/20. P.D.-B. and M.K. were supported by the International Laboratory of Cluster Geometry NRU HSE, RF Government grant, ag. № 075-15-2021-608 dated 08.06.2021. S.S. was supported by the Netherlands Organization for Scientific Research. A.A., P.D.-B., M.K., and S.S. are grateful for the hospitality to the University of Haifa, where part of this research was carried out. A.A. and S.S. thank IHES for hospitality as well. A.A. thanks the mathematical research institute MATRIX in Australia where part of this research was performed. B.B and S.S. thank the Erwin Schr\"odinger  International Institute for Mathematics and Physics in Austria where part of this research was performed. We
are also grateful to the anonymous referee for the helpful remarks and suggestions.

\section{Introduction, part 1: first layer of concepts and results} \label{sec:intro1}

\subsection{\texorpdfstring{$n$}{n}-point differentials} \label{sec:n-pt-diff} Let $\Sigma$ be a smooth complex curve that we refer below as the \emph{spectral curve}. The main object of this paper is a collection of $n$-differentials $\omega^{(g)}_n$ on $\Sigma^n$ defined for all $g\ge0$, $n\ge1$.

We assume that all $\omega^{(g)}_n$'s are symmetric and meromorphic with no poles on the diagonals for $(g,n)\ne(0,2)$, and $\omega^{(0)}_2$ is also symmetric and meromorphic but it has a second order pole on the diagonal with biresidue~$1$.
It will be convenient to arrange the $n$-differentials $\omega^{(g)}_n$, $g\geq 0$, for each fixed $n$ into generating series:
\begin{equation}\label{eq:omega-allgenus}
	\omega_n\coloneqq \sum_{g=0}^\infty\hbar^{2g-2+n}\omega^{(g)}_n.
\end{equation}

Systems of $n$-point differentials provide a natural and convenient way to assemble the answers to various problems in many interrelated areas of mathematics and theoretical physics, for instance, in matrix models, enumerative algebraic geometry, and combinatorics.

\subsection{KP integrability} KP integrability plays a prominent role in mathematics and mathematical physics, the basic reference is~\cite{MJD}.

The KP hierarchy can be seen as an integrable system of PDEs for a so-called tau function that can be written in several different ways. For this paper it is important that for formal solutions it has an interpretation as Pl\"ucker relations for the open part of semi-infinite Grassmannian.  These relations ensure that a particular vector $v\in \bigoplus_{\lambda} v_\lambda$ (here $v_\lambda = z^{\lambda_1-1}\wedge z^{\lambda_2-2}\wedge z^{\lambda_3-3}\wedge\dots$  and the sum  is taken over all partitions $\lambda\vdash d$, $d \in\Z_{\geq 0}$ arranged as $\lambda_1\geq \lambda_2\geq\lambda_3\geq ...$) can be represented as
\begin{align} \label{eq:decomposable-vector}
v= \Phi_1 \wedge \Phi_2 \wedge \Phi_3 \wedge \dots,	
\end{align}
where $\Phi_i = z^{-i}(1+O(z))$. For a vector $v$, the corresponding tau function $\tau_v$ can be written in formal variables $p_1,p_2,\dots$ as
\begin{align}
	\tau_v\coloneqq \langle 0| \exp\left({\sum_{i=1}^\infty} \frac{p_i }{i} J_i \right) v.	
\end{align}
Here $J_a = \sum_{i\in\Z}:\psi_{i-a}\psi^*_i:$, where $\psi_i= z^i\wedge$, $\psi^*_i = \partial_{z^i}$, the normal ordering is defined as $:\psi_{i}\psi^*_j: = \psi_{i}\psi^*_j$ for $j\geq 0$ and $-\psi^*_j\psi_{i}$ for $j< 0$, and  $\langle 0|$ is the operator which gives a constant term of the formal power series in variables $p_1,p_2,\ldots.$ The condition given by Equation~\eqref{eq:decomposable-vector} is then equivalent to the Hirota bilinear identity
\begin{align}
	\oint_\infty e^{\sum_{i=1}^\infty \frac{p_i-q_i}{i} z^i} e^{\sum_{i=1}^\infty z^{-i} (\partial_{q_i}-\partial_{p_i})} \tau_v(p_1,p_2,\dots) \tau_v(q_1,q_2,\dots) dz =0.
\end{align}

There is a version of it that takes into account topological expansions of the corresponding $\tau$ functions, which is sometimes called the $\hbar$-KP integrability, see~\cite{TakasakiTakebe95}.

\subsection{KP integrability as a property of a system of differentials} To pass from a system of differentials to KP tau functions, we need the following definition:

\begin{definition}
A point $o\in\Sigma$ is called \emph{regular} for the system of differentials $\{\omega^{(g)}_n\}$ if $\omega^{(g)}_n-\delta_{(g,n),(0,2)}\frac{dx_1dx_2}{(x_1-x_2)^2}$ is regular at $(o,\dots,o)\in\Sigma^n$ for all $(g,n)\in\Z_{\geq 0}\times\Z_{>0}$, $(g,n)\not=(0,1)$, where~$x$ is a local coordinate on $\Sigma$ at~$o$.
\end{definition}

\begin{remark}
	Note that the regularity condition for $\{\omega^{(g)}_n\}$ is independent of a choice of local coordinate. 	Note also that we have no condition for $\omega^{(0)}_1$.
\end{remark}

For a regular point~$o$, and an arbitrary local coordinate~$x$, the forms $\omega^{(g)}_n$ can be expanded into a series
\begin{equation}\label{eq:omega-expansion}
	\omega_n=\sum_{g=0}^\infty\hbar^{2g-2+n}\omega^{(g)}_n=\delta_{n,2}\tfrac{dx_1dx_2}{(x_1-x_2)^2}+
	\sum_{k_1,\dots,k_n=1}^{\infty}f_{k_1,\dots,k_n}\prod_{i=1}^n k_i x_i^{k_i-1} dx_i,
\end{equation}
where the coefficients $f_{k_1,\dots,k_n}$ expand as $\sum_{g=0}^\infty \hbar^{2g-2+n} f^{(g)}_{k_1,\dots,k_n}$.
Introduce the associated potential $F=F_{o,x}$ as
\begin{equation}\label{eq:F-expansion}
	F(p_1,p_2,\dots)\coloneqq\sum_{g,n}\hbar^{2g-2+n}F^{(g)}_{n}=\sum_{n=1}^\infty\frac{1}{n!}
	\sum_{k_1,\dots,k_n=1}^{\infty}f_{k_1,\dots,k_n}p_{k_1}\dots p_{k_n},
\end{equation}
where $F^{(g)}_n$ are homogeneous formal series of degree~$n$.

This way we associate to a given collection of differentials $\omega^{(g)}_n$ a large family of potentials $F_{o,x}$: the freedom in its definition consists of the choice of a regular point $o$ and the choice of a local coordinate $x$ at $o$.
We would like to study conditions assuring that
\begin{equation}\label{eq:ZFrel}
	Z_{o,x}\coloneqq\exp(F_{o,x})
\end{equation} is a tau function of the KP hierarchy.


\begin{theorem}\label{thm:KPglobal} 
	\emph{The KP integrability is an internal property of the collection of differentials:} if $Z_{o,x}$ is a tau function of KP hierarchy for some choice of a regular point $o$ and a coordinate $x$ at this point, then it is a tau function of KP hierarchy for any other choice of a regular point and a local coordinate at that point.
\end{theorem}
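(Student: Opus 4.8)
The plan is to reduce the statement to two independent facts: invariance of KP integrability under (a) a change of local coordinate $x\mapsto \tilde x$ at a \emph{fixed} regular point $o$, and (b) a change of the expansion point $o\mapsto \tilde o$. For (a), the key observation is that changing the local coordinate amounts to a change of the formal variables $p_k$ in the potential $F_{o,x}$. Concretely, if $\tilde x = \tilde x(x)$ with $\tilde x(0)=0$, $\tilde x'(0)\neq 0$, then each monomial differential $k x^{k-1}dx$ is a triangular (in $k$) linear combination of the $\tilde k \tilde x^{\tilde k-1}d\tilde x$ plus a correction coming from the shift of the expansion point of $\omega^{(0)}_2$; the upshot is that $Z_{o,\tilde x}$ is obtained from $Z_{o,x}$ by an invertible upper-triangular linear substitution of the times $p_k$ together with multiplication by $\exp$ of a quadratic form in the $p_k$ (the latter accounting for the discrepancy between $\tfrac{d\tilde x_1 d\tilde x_2}{(\tilde x_1-\tilde x_2)^2}$ and $\tfrac{dx_1dx_2}{(x_1-x_2)^2}$, which is regular on the diagonal and hence contributes only to $F^{(0)}_2$ and below). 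Both operations — $GL_\infty$-type triangular changes of times and multiplication by the exponential of a quadratic/linear expression in the times — are classical symmetries of the KP hierarchy (they correspond to the action of the appropriate group on the Sato Grassmannian, e.g.\ via the vertex operators generating $\widehat{\mathfrak{gl}}_\infty$), so they send tau functions to tau functions. One must check that these substitutions are compatible with the $\hbar$-grading, i.e.\ preserve $\hbar$-KP integrability in the sense of~\cite{TakasakiTakebe95}; this follows because the coordinate change is $\hbar$-independent and the quadratic correction is homogeneous of the correct degree.

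For (b), moving the expansion point, the cleanest route is to avoid a direct comparison of the two potentials and instead use a continuity/deformation argument: the set of regular points is open in $\Sigma$, and connected components of it can be joined to $o$ by a path; along such a path the coefficients $f^{(g)}_{k_1,\dots,k_n}$ vary holomorphically, so it suffices to prove that KP integrability is preserved under an infinitesimal move of the expansion point. Infinitesimally, shifting $o$ in the direction of a local vector field corresponds to acting on $Z_{o,x}$ by a first-order differential operator in the times $p_k$ that is precisely one of the Hamiltonians of the KP hierarchy flow (this is the standard statement that differentiating $\omega^{(g)}_n$ in an extra insertion point, integrated appropriately, reproduces the action of the principal Heisenberg/KP vector fields on the tau function). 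Since the KP hierarchy flows preserve the locus of tau functions, integrability is an invariant of the whole connected family, and then it is constant on $\Sigma$ because $\Sigma$ is connected (being a smooth complex curve) and the regular locus is dense.

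The main obstacle I anticipate is making step (b) fully rigorous at the level of \emph{formal} tau functions: the "$Z$ changes by a KP flow when $o$ moves" statement is intuitively the content of the Baker--Akhiezer/vertex-operator formalism, but to use it here one needs the machinery that links the differentials $\omega^{(g)}_n$ to the wave function and the point of the Sato Grassmannian — i.e.\ exactly the Baker--Akhiezer kernel that the paper introduces later. So in practice I would expect the actual proof to either (i) invoke the later structural results (the kernel $\mathbb{K}$ and its transformation behaviour) and show that a change of $(o,x)$ acts on the kernel by a transformation preserving the Grassmannian condition, or (ii) give a self-contained argument that the Hirota bilinear identity for $Z_{o,x}$ is equivalent to a condition on the $\omega^{(g)}_n$ that is manifestly coordinate- and point-independent — for instance a residue identity on $\Sigma$ of the schematic form $\oint \big(\text{wave function built from }\omega\big)\,\big(\text{dual wave function}\big)=0$ whose formulation does not refer to the chosen expansion data. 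Option (ii), if it can be pushed through, is the conceptually right proof; option (i) is the expedient one given what is available in the paper. A secondary, more routine point to check carefully is the $(0,2)$ correction under coordinate change: one must verify that $\tfrac{d\tilde x_1 d\tilde x_2}{(\tilde x_1-\tilde x_2)^2}-\tfrac{dx_1dx_2}{(x_1-x_2)^2}$ is indeed holomorphic on the diagonal so that it only modifies $F$ by a quadratic form in the times (with no pole-induced obstruction), and that the resulting quadratic-form twist is of the type that preserves KP.
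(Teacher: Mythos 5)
Your primary two-step route (coordinate change at a fixed point, then motion of the point) is not what the paper does, and as written it has real gaps. In step (a), the claim that a triangular linear substitution of the times and multiplication by $\exp$ of a quadratic form in the times are \emph{each} classical KP symmetries is not correct: neither operation separately preserves the Hirota equations in general. What is true is that the specific combination induced by a change of local coordinate is a single element of the Virasoro subgroup of $\widehat{GL}_\infty$ (the paper cites Shiota for exactly this), so the conclusion of (a) can be salvaged, but only by invoking that combined statement rather than your decomposition. Step (b) is the more serious gap: the assertion that an infinitesimal shift of the expansion point acts on $Z_{o,x}$ by a KP Hamiltonian flow is not substantiated and is not how one can argue here --- moving $o$ is not a shift of the times $p_k$, and you yourself flag that making this rigorous at the level of formal tau functions requires machinery you do not have at this stage.

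That said, you correctly diagnose the situation and your ``option (ii)'' is precisely the paper's proof. The paper first records (Lemma~\ref{lem:determinantal-formulas}, after Zhou) that KP integrability of a formal series is \emph{equivalent} to the determinantal formulas expressing the $n$-point functions through the kernel $K$. It then proves Proposition~\ref{prop:bK-global}: the Baker--Akhiezer kernel admits the explicit closed expression~\eqref{eq:bK-def} directly in terms of the differentials $\omega^{(g)}_n$ (an exponential of iterated integrals from $z_2$ to $z_1$, with the $(0,2)$ term regularized by $\omega^{(0),\rm sing}_2$), and the identity~\eqref{eq:omega02eq-solution} shows this expression is independent of the choice of local coordinate --- this is exactly the ``routine point'' about the regularity of $\tfrac{d\tilde x_1 d\tilde x_2}{(\tilde x_1-\tilde x_2)^2}-\tfrac{dx_1dx_2}{(x_1-x_2)^2}$ on the diagonal that you ask to be checked. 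Hence $\bK$ is a globally defined meromorphic object near the diagonal determined by the $\{\omega^{(g)}_n\}$ alone, the determinantal formulas~\eqref{eq:omega-kappa-relation} become an identity between globally defined meromorphic differentials, and validity of this identity in the power expansion at one regular point forces it globally and therefore at every other regular point and coordinate. No deformation, no infinitesimal KP flows, and no separate treatment of coordinate changes versus point changes are needed. If you want to turn your proposal into a proof, drop route (a)--(b) entirely and develop the invariant kernel argument.
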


We prove this theorem in Section~\ref{sec:DetFormProofThmKPGlobal}.

\begin{remark} \label{rem:01irrelevant}
	The choice of $\omega^{(0)}_1$ does not affect the KP integrability property, it can be chosen in an arbitrary way. For definiteness, in \eqref{eq:omega-expansion} and in all expressions below which contain a summation over genus we do not include the terms with $(g,n)=(0,1)$.
\end{remark}
\begin{remark}
A change of the local spectral parameter (local coordinate on the spectral curve) is known to provide a symmetry of the KP hierarchy \cite{Shiota}, given by an element of the Virasoro subgroup of $\widehat{GL}_\infty$.
\end{remark}
\subsection{\texorpdfstring{$x-y$}{x--y} swap relation} The $x-y$ swap relation is given by several equivalent formulas that act on a system of differentials $\omega^{(g)}_n$ and produce a new system of differentials denoted by $\omega^{\vee,(g)}_n$. In order to apply it we need to assume that $\Sigma$ is endowed a pair of meromorphic functions on it, denoted by $x$ and $y$.  We present the $x-y$ swap relation here in the form given in~\cite{hock2022simple,alexandrov2022universal}.


As we have mentioned in Remark~\ref{rem:01irrelevant}, the choice of $\omega^{(0)}_1$ is irrelevant for the KP integrability property, so for instance we can fix it to be $\omega^{(0)}_1=-ydx$ following the tradition coming from the theory of spectral curve topological recursion. However, it will be more convenient to fix it as $\omega^{(0)}_1=0$ and to single out the contribution of $-ydx$ in the formulas explicitly.

Then we define a new system of differentials $\omega^{\vee,(g)}_n$ given by
\begin{multline} \label{eq:MainFormulaSimple}
	\frac{\omega_{n}^{\vee,(g)} (z_{\llbracket n\rrbracket})}{\prod_{i=1}^n dy_i} 
	 =	(-1)^n  \coeff \hbar {2g}
    \prod_{i=1}^n  \sum_{r_i=0}^\infty  \partial_{y_i}^{r_i} [u_i^{r_i}]
         \frac{dx_i}{dy_i}\frac{e^{-u_i(\cS(u_i\hbar\partial_{x_i})-1)y_i}}{u_i}
	\\
    \sum_{\Gamma} \frac{\hbar^{2g(\Gamma)}}{|\mathrm{Aut}(\Gamma)|}
	\prod_{e\in E(\Gamma)} \prod_{j=1}^{|e|}\restr{(\tilde u_j, \tilde x_j)}{ (u_{e(j)},x_{e(j)})} \tilde u_j \cS(\hbar \tilde u_j  \partial_{\tilde x_j})
   \sum_{\tilde g=0}^\infty \hbar^{2\tilde g}\frac{\tilde \omega^{(\tilde g)}_{|e|}(\tilde z_{\llbracket |e|\rrbracket})}{  \prod_{j=1}^{|e|}d\tilde x_j}.
\end{multline}
Here $x_i=x(z_i)$, $y_i=y(z_i)$, where $z_i$ is a point on the $i$-th copy of $\Sigma$ in $\Sigma^n$; the same applies to $\tilde x_i = x(\tilde z_i)$. The sum is taken over all connected  bipartite graphs (or hypergraphs or just graphs with multiedges, for brevity)  $\Gamma$ with $n$ labeled vertices and multiedges of arbitrary index $\geq 1$, where the index of a multiedge is the number of its ``legs'' and we denote it by $|e|$. For a multiedge $e$ with index $|e|$  we control its attachment to the vertices by the associated map $e\colon \llbracket |e| \rrbracket \to
\llbracket n \rrbracket
$ that we denote also by $e$, abusing notation (so $e(j)$ is the label of the vertex to which the $j$-th ``leg'' of the multiedge $e$ is attached).
Terms with $(g,n)=(0,1)$ are excluded from the summation, and $\tilde \omega^{(0)}_{2}(\tilde x_1,\tilde x_2) =  \omega^{(0)}_{2}(\tilde x_1,\tilde x_2) - \frac{d\tilde x_1d\tilde x_2}{(\tilde x_1-\tilde x_2)^2}$ if $e(1)=e(2)$, and $\tilde \omega^{(0)}_{2}(\tilde x_1,\tilde x_2) =  \omega^{(0)}_{2}(\tilde x_1,\tilde x_2)$ otherwise. For all $(g,n)\not= (0,2)$ we simply have $\tilde \omega^{(g)}_{n} =  \omega^{(g)}_{n}$.
 By $g(\Gamma)$ we denote the first Betti number of $\Gamma$. By $[\hbar^{2g}]$ (respectively, $[w_i^{k_i}]$) we denote the operator that extracts the corresponding coefficient from the whole expression to the right of it. By $\restr{a}{b}$ we denote the operator of substitution $a\to b$. The function $\cS(z)$ is defined as
\begin{align}
	\cS(z)\coloneqq \frac {e^{z/2}-e^{-z/2}} z.
\end{align}

In the theory of topological recursion, this formula expresses the swap of $x$ and $y$~\cite{alexandrov2022universal}, see also~\cite{hock2022xy} in genus $0$.
It also expresses a higher genera extension of the correspondence between moments and cumulants in the higher order free probability theory~\cite{borot2021analytic}.

\begin{remark}
	Note that it is absolutely non-trivial that the right hand side of \eqref{eq:MainFormulaSimple} does not have poles on the diagonals. It follows from \cite[Proposition 4.6]{alexandrov2022universal} where we proved it using formal expansion in the Fock space.
\end{remark}

\begin{remark} The formulation of the $x-y$ swap relation does not explicitly require that $x$ and $y$ are meromorphic; it is sufficient to assume that $dx$ and $dy$ are meromorphic (so these $1$-forms might have non-trivial residues). It is possible to extend the results of this paper that concern KP integrability to this more general setup, see \cite{ABDKS4}.
\end{remark}

\subsection{\texorpdfstring{$x-y$}{x--y} swap relation and KP integrability}
One of the main results of this paper can be formulated as follows:

\begin{theorem}\label{thm:KP-duality} 
	\emph{The $x-y$ swap relation preserves KP integrability}: the original system of differentials $\omega^{(g)}_n$ is KP integrable (in the sense of Theorem~\ref{thm:KPglobal}) if and only if the dual system of differentials $\omega^{\vee,(g)}_n$ is KP integrable.
\end{theorem}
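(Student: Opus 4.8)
The strategy is to reduce the statement to a tractable assertion about the half-infinite subspaces (points of the Sato Grassmannian) attached to the two systems of differentials, and then to exhibit the $x-y$ swap as an explicit linear isomorphism of these subspaces that preserves decomposability. By Theorem~\ref{thm:KPglobal} we may test KP integrability at any convenient regular point and in any convenient local coordinate; the natural choice on the $\omega$ side is the coordinate $x$ (so the ``times'' $p_k$ are dual to $x^k$), and on the $\omega^\vee$ side the coordinate $y$. Concretely, I would first encode $Z_{o,x}=\exp(F_{o,x})$ as a vector $v$ in the fermionic Fock space via the wave function / Baker--Akhiezer vector: from the $\omega_n$ one builds, by the standard procedure of exponentiating the one-point function and dressing with the connected differentials, the basis vectors $\Phi_i = x^{-i}(1+O(x))$ spanning the corresponding half-infinite plane $\mathcal{W}$. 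KP integrability of $Z_{o,x}$ is exactly the statement that $\mathcal{W}$ is a well-defined point of the Grassmannian, i.e.\ that these formal vectors close up into a decomposable element \eqref{eq:decomposable-vector}.

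The heart of the argument is then to show that the $x-y$ swap \eqref{eq:MainFormulaSimple}, which on the level of differentials is a complicated sum over bipartite multigraphs, becomes on the level of Baker--Akhiezer data a \emph{Gaussian integral transform} — this is precisely the content flagged in the introduction as Theorem~\ref{thm:transformation-KKernel} and Theorem~\ref{thm:transformation-Omega}, and Corollary~\ref{cor:Phi-TirivalDual}. The plan is: (i) repackage the graph sum in \eqref{eq:MainFormulaSimple} using the operator $\widehat{\mathcal{S}}$-type vertex operators and the Gaussian-integral technology developed in Section~\ref{sec:Gaussian}, so that the map $\omega \mapsto \omega^\vee$ is realized as conjugation of the Baker--Akhiezer kernel by an explicit Gaussian kernel in the spectral variables $z$ and $z^\vee$ (roughly, integration against $e^{\pm z z^\vee/\hbar}$-type weights, implementing the Fourier/Laplace-like duality between the coordinates $x$ and $y$); (ii) observe that such a Gaussian integral transform is a linear operator on the relevant completed space of formal series in the spectral parameter, hence sends the half-infinite plane $\mathcal{W}$ associated to $\omega$ to a half-infinite plane $\mathcal{W}^\vee$ associated to $\omega^\vee$; (iii) since a linear bijection of the ambient space carries decomposable vectors to decomposable vectors, $\mathcal{W}$ is a point of the Grassmannian if and only if $\mathcal{W}^\vee$ is. Because the transform is invertible (the inverse swap is given by the same formula with $x$ and $y$ interchanged, up to the sign conventions fixed above), the equivalence goes both ways, giving the ``if and only if''.

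Two technical points need care. First, one must make sure that the objects being matched up on the two sides are genuinely the Baker--Akhiezer vectors of the \emph{same} candidate tau function up to the allowed ambiguities: the $(0,1)$ and $(0,2)$ terms. The $\omega^{(0)}_1$ ambiguity is harmless by Remark~\ref{rem:01irrelevant}; the $\omega^{(0)}_2$ term requires the semiclassical analysis of Section~\ref{sec:quasiclassical} to confirm that the diagonal pole structure that $\omega^{\vee,(0)}_2$ acquires is exactly the one compatible with being a KP $(0,2)$ differential in the coordinate $y$ (this is why that section is needed). Second, one must justify interchanging the formal graph sum, the coefficient extractions $[\hbar^{2g}]$ and $[u_i^{r_i}]$, and the Gaussian integration — i.e.\ that everything converges in the appropriate $\hbar$-adic / formal sense; this is where the estimates and the Fock-space formalism of~\cite{alexandrov2022universal} and Section~\ref{sec:Gaussian} do the work. \textbf{The main obstacle} I expect is step (i): turning the combinatorial graph expansion of \eqref{eq:MainFormulaSimple} into a clean, manifestly linear (Gaussian-integral) operator on Baker--Akhiezer data, and doing so in a way that transparently commutes with the Grassmannian structure rather than just matching genus-by-genus coefficients. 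Once the swap is exhibited as such an operator, the preservation of decomposability — and hence of KP integrability — is essentially formal.
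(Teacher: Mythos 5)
Your plan correctly identifies the relevant machinery (the Gaussian-integral transform of the Baker--Akhiezer kernel, i.e.\ Theorems~\ref{thm:transformation-KKernel} and~\ref{thm:transformation-Omega}), but the logical assembly in steps (ii)--(iii) has a genuine gap. The framing of KP integrability as ``the half-infinite plane $\mathcal{W}$ is a well-defined point of the Grassmannian'' is off: from \emph{any} system of differentials one can build the kernel $\bK$ by~\eqref{eq:bK-def}, expand it to get vectors $\Phi_i=x^{-i}(1+O(x))$, and the plane they span is automatically a point of the big cell, so the wedge $\Phi_1\wedge\Phi_2\wedge\cdots$ is always decomposable. What KP integrability actually asserts is that the whole tower of $n$-point differentials is recovered from that kernel by the determinantal formulas~\eqref{eq:omega-kappa-relation} (equivalently, that the vector in the \emph{full} Fock space encoding $Z_{o,x}$, not the plane, is decomposable); this is Lemma~\ref{lem:determinantal-formulas}. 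Consequently your step (iii) --- ``a linear bijection of the ambient one-particle space carries decomposable vectors to decomposable vectors'' --- is vacuous: it only says the transformed kernel again spans a plane. It does not address the actual content of the theorem, namely why the system $\{\omega^{\vee,(g)}_n\}$ defined by the graph sum~\eqref{eq:MainFormulaSimple} coincides, for every $n\ge 2$ and every genus, with the determinantal expressions built from the transformed kernel $K^\vee$. (The genuinely $GL_\infty$-type argument only works when there is a regular point where $x$ has a simple zero and $y$ a simple pole, as the paper notes after the theorem statement.)

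The missing idea is the off-diagonal extension of the $n$-point differentials. The paper introduces functions $\Omega_n(w_{\llbracket n\rrbracket},\bar w_{\llbracket n\rrbracket})$ on $\Sigma^{2n}$ which restrict to $\omega_n$ on the diagonals $w_i=\bar w_i$ and which, in the integrable case, equal the cyclic sums $(-1)^{n-1}\sum_{\sigma\in C_n}\prod_i K(w_i,\bar w_{\sigma(i)})$ (Theorem~\ref{thm:KPintegr-extended}, proved by a Wick-formula computation). The crucial structural fact, Equation~\eqref{eq:Omega-vee-int}, is that the swap acts on $\Omega_n$ as a \emph{product of $n$ independent Gaussian transforms, one per pair $(w_i,\bar w_i)$}. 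Since each factor $K(w_i,\bar w_{\sigma(i)})$ of a cyclic product draws its two arguments from two different pairs, and the one-kernel transform~\eqref{eq:xyKvee-shortform} factorizes over its two arguments, the cyclic-product structure is transported verbatim with $K$ replaced by $K^\vee$; restricting to the diagonal then yields the determinantal formulas for $\omega^\vee_n$ and hence KP integrability, with the converse by symmetry. Without this per-index factorization on the off-diagonal objects, one is stuck matching the graph sum to determinantal formulas coefficient by coefficient --- exactly the obstacle you flagged but did not overcome. (A minor point: Section~\ref{sec:quasiclassical} is not what controls the $(0,2)$ term here; the regularizing term in~\eqref{eq:bK-def} and the shift of $\tilde\omega^{(0)}_2$ in~\eqref{eq:MainFormulaSimple} already handle the diagonal poles on both sides.)
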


This theorem is proved in Section~\ref{Sec:proofof2.5}.

On the one hand, this theorem is quite natural since the $x-y$ swap relation is itself coming from an element of $\widehat{GL}_\infty$ that preserves KP integrability of tau functions. Namely, if there exists a point $o\in \Sigma$ regular for the differentials $\{\omega_n^{(g)}\}$ such that functions $x$ and $y$ have at this point a simple zero and a simple pole, respectively, and $x(z)y(z)\to 1$ as $z\to o$,  then (see \cite{alexandrov2022universal}) we have
\begin{align}
	Z^\vee_{o,y^{-1}} = \calD_{1+\theta} Z_{o,x},
\end{align}
where the operator $\calD_{1+\theta}$ defined as a diagonal operator in the basis of Schur polynomials $s_\lambda\coloneqq \tau_{v_\lambda}$ acting as
\begin{align}
	\calD_{1+\theta} \colon s_\lambda \mapsto \Big(\prod_{(i,j)\in\lambda} (1+\hbar(i-j)) \Big) s_\lambda
\end{align}
is known to preserve the KP integrability. On the other hand, in the absence of a regular point with these properties, Theorem~\ref{thm:KP-duality} becomes a very deep statement on the structure of $x-y$ swap relation that requires a full revision of our understanding of the meaning and interpretations of the latter formula. We prove it at the end of Section \ref{Sec:proofof2.5}.

\subsection{Higher BGW tau functions} Topological recursion associates to a particular input that consists of a spectral curve $\Sigma$, functions $x$ and $y$ on $\Sigma$, and a choice of the Bergman kernel, the output that consists of a system of symmetric $n$-differentials $\omega^{(g)}_n$ on $\Sigma^n$. We refer to~\cite{EO} and~\cite{B-E} for the necessary definitions but omit them in this paper for the following reason. In the context of this paper in the special case when $\Sigma=\C P^1$ and $y$ is an affine coordinate, the resulting differentials $\omega^{(g)}_n$ are obtained by the $y-x$ swap relation (that is, the inverse of $x-y$ swap relation, which is given by~\eqref{eq:MainFormulaSimple} with the interchanged $x$ and $y$ and $\omega^{(g)}_n$'s and $\omega^{\vee,(g)}_n$'s, respectively) applied to $\{\omega^{(g),\vee}_n\}_{g\geq 0, n\geq 1}$ defined as $\omega^{(0),\vee}_1= -xdy$, $\omega^{(0),\vee}_2 = dy_1dy_2/(y_1-y_2))^2$, and $\omega^{(g),\vee}_n = 0$ for $2g-2+n>0$, see~\cite[Section 7]{alexandrov2022universal}. This description is sufficient to formulate the following result.

Consider a system of differentials constructed with the input given by $\Sigma=\C P^1$ with a global rational coordinate $z$, $x=z^{r}/r$, and $y=z^{-1}$, and $B =\frac{dz_1dz_2}{(z_1-z_2)^2}$. One can show that $z=\infty$ is a regular point for this system of differentials, so we consider the partition function $Z_{\infty,z^{-1}}$.

\begin{theorem}\label{thm:r-BGW} 
	The partition function $Z_{\infty,z^{-1}}$ is given by the following matrix integral:
	\begin{align}\label{eq:MMBGW}
		Z_{\infty,z^{-1}} = \frac{  \int_{\mathcal{H}_N} d M \exp\bigg(\frac{1}{{\hbar}} \Tr \Big( \frac{\Lambda^r M}{r} +\frac{M^{1-r}}{r(r-1)}- {\hbar}N \ln M\Big)\bigg) }
		{  \int_{\mathcal{H}_N} d M \exp\bigg(\frac{1}{{\hbar}} \Tr \Big( \frac{M^{1-r}}{r(r-1)}- {\hbar}N \ln M\Big)\bigg) },
	\end{align}
	where $\mathcal{H}_N$ is the space $N\times N$ Hermitian matrices and $dM$ is the standard measure on it, and the variables $p_i$ are the Miwa variables, $p_i = \Tr \Lambda^{-i}$, $i=1,2,\dots$.
\end{theorem}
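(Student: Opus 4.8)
The plan is to identify the partition function $Z_{\infty,z^{-1}}$ with the known generating function for the Mironov--Morozov--Semenoff (MMS) matrix integral by means of three reductions. First, I would unwind the definition of the input: by the description recalled just before the statement, the differentials $\omega^{(g)}_n$ produced by topological recursion on $\Sigma = \C P^1$ with $x = z^r/r$, $y = z^{-1}$, $B = dz_1\,dz_2/(z_1-z_2)^2$ are obtained by applying the $y-x$ swap relation to the trivial dual system $\omega^{\vee,(0)}_1 = -x\,dy$, $\omega^{\vee,(0)}_2 = dy_1\,dy_2/(y_1-y_2)^2$, $\omega^{\vee,(g)}_n = 0$ for $2g-2+n > 0$. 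Equivalently, $\{\omega^{(g)}_n\}$ is the $x-y$ dual of a system whose associated partition function is a trivial (Gaussian-type) tau function, so Theorem~\ref{thm:KP-duality} already guarantees that $Z_{\infty,z^{-1}}$ is a KP tau function. The content of the theorem is then the explicit identification of this tau function, and the natural tool is the semiclassical/Gaussian-integral machinery of Section~\ref{sec:Gaussian} together with Corollary~\ref{cor:Phi-TirivalDual}, which provides the basis vectors $\Phi_i$ spanning the point of the Sato Grassmannian as explicit Gaussian integrals precisely in the case of an $x-y$ swap applied to the trivial case.

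Second, I would compute the point of the Grassmannian explicitly. Using Corollary~\ref{cor:Phi-TirivalDual} with the specific functions $x = z^r/r$, $y = z^{-1}$ (so that $x\,dy$ and the relevant combinations are monomial in $z$), the Gaussian integrals defining the $\Phi_i$ should evaluate to closed-form expressions: one expects, after performing the Gaussian integration, a ratio of integrals of the shape appearing on the right-hand side of~\eqref{eq:MMBGW}, with the matrix variable $M$ encoding the half-infinite wedge structure via Miwa/character expansion. Concretely, the wave function $\Psi(z) = \Phi_1(z) + \Phi_2(z) + \dots$ (or the associated kernel from Theorem~\ref{thm:transformation-KKernel}) should be expressible, after Gaussian integration in an auxiliary variable, as a one-dimensional integral $\int dm\, \exp(\hbar^{-1}(\dots))$ matching the integrand $\frac{\Lambda^r M}{r} + \frac{M^{1-r}}{r(r-1)} - \hbar N \ln M$ upon passage to the matrix-model (Harish-Chandra--Itzykson--Zuber / Miwa) form; the denominator integral is exactly the normalization coming from the $\Lambda^r$-independent part of the wave function. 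Here I would lean on the fact that the MMS integral~\eqref{eq:MMBGW} is itself known to be a KP (indeed hypergeometric/generalized Kontsevich-model type) tau function whose Grassmannian point is generated by such Gaussian integrals, so the identification reduces to matching the two explicit families of basis vectors.

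Third, I would match the initial conditions: both tau functions are normalized to $1$ at the origin in the $p$-variables (for $Z_{\infty,z^{-1}}$ this is the ratio structure; for the MMS side it is the explicit normalization by the denominator), and they sit on the same orbit of $\widehat{GL}_\infty$, so agreement of the spanning vectors $\Phi_i$ forces equality of the tau functions. The translation between the spectral-parameter $z$ on $\C P^1$ and the matrix eigenvalue variable is governed by the change $x = z^r/r$, which is exactly the substitution producing the $\Lambda^r$ and $M^{1-r}$ terms (the shift by $1-r$ comes from the $\cS$-operator contributions and the $\omega^{(0)}_1 = -x\,dy$ term, i.e. from the $-\hbar N \ln M$ insertion, which is the $(0,1)$-data made visible), so this is a bookkeeping step once the Gaussian integrals are evaluated.

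The main obstacle I expect is the explicit evaluation of the Gaussian integrals from Corollary~\ref{cor:Phi-TirivalDual} for the monomial spectral curve $x = z^r/r$, $y = z^{-1}$ and recognizing the result as the $r$-dependent integrand of~\eqref{eq:MMBGW}: the $\cS$-operators $\cS(\hbar u \partial_x)$ acting on powers of $x$ produce the nontrivial $M^{1-r}/(r(r-1))$ correction term, and tracking the precise exponent $1-r$, the coefficient $1/(r(r-1))$, and the $-\hbar N\ln M$ term through the saddle-point/Gaussian computation — while keeping the normalization by the denominator integral exactly right — is where the real work lies. Once that identification is in hand, the equality of tau functions follows formally from Theorems~\ref{thm:KPglobal}, \ref{thm:KP-duality} and the fact that a KP tau function is determined by its Grassmannian point.
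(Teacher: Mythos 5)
Your proposal follows essentially the same route as the paper: the differentials are recognized as the $x-y$ swap of trivial dual data, Corollary~\ref{cor:Phi-TirivalDual} (via Theorem~\ref{thm:TR-main}, i.e.\ part (2) of Corollary~\ref{cor:ex}) gives the explicit Gaussian-integral basis vectors $\Phi_i$ spanning the Grassmannian point, these are matched against the known Grassmannian-point description of the higher BGW tau function from the matrix-integral side, and uniqueness of the tau function given its Grassmannian point closes the argument. The only difference is one of emphasis — the paper outsources the final matching of the $\Phi_i$'s to the explicit formula in the cited work on higher BGW tau functions rather than re-deriving the matrix-integral form, which is exactly the ``bookkeeping'' step you flag as the main remaining work.
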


This theorem is proven in Section~\ref{sec:toporec-nonramified-y}.

\begin{remark}
	The latter matrix integral is considered as formal matrix integral, asymptotically expanded in the vicinity of Gaussian point, c.f. \cite{MiMoSeff96, alexandrov2023higher} and Section \ref{sec:FormalGaussianIntegral}.
\end{remark}

\begin{remark}
	The latter matrix integral is considered in~\cite{MiMoSeff96, alexandrov2023higher} and is known as a higher Br\'ezin--Gross--Witten tau function, or simply $r$-BGW tau function.
	The statement of Theorem~\ref{thm:r-BGW} was conjectured in~\cite[Conjecture 4.1]{alexandrov2023higher} and~\cite[Conjecture F]{chidambaram2023relations}.
\end{remark}

\begin{remark}
	Partition function $Z_{\infty,z^{-1}}$ has a geometric meaning. The formal power series expansion of its logarithm gives the intersection numbers of the tautological $\psi$-classes with the so-called $r$-theta classes $\Theta^{r}_{g,n}$ on the moduli spaces of curves $\overline{\mathcal{M}}_{g,n}$~\cite{chidambaram2023relations}. In particular, for $r=2$ it is reduced to a generating series of $\psi$- and a combination of $\kappa$-classes with some distinguished polynomial behavior~\cite{kazarian2021polynomial}.
\end{remark}

\begin{remark}
From the identification with the matrix integral it immediately follows \cite{MiMoSeff96} that $Z_{\infty,z^{-1}}$ is a tau function of the $r$-reduction of the KP hierarchy (or Gelfand--Dickey hierarchy).
\end{remark}

\section{Introduction, part 2: a deeper look into KP integrability and \texorpdfstring{$x-y$}{x-y} swap relation} \label{sec:intro2KP}

\subsection{Baker--Akhiezer kernel and determinantal formulas} For a given formal power series $\tau=\tau(p_1,p_2,\dots)=1+O(p_1,p_2,\dots)$ the kernel $K(x_1,x_2)$ is defined as
\begin{align}\label{eq:K-def}
	K(x_1,x_2) \coloneqq \frac{1}{x_1-x_2} \Big(\prod_{i=1}^\infty \restr{p_i}{x_1^i-x_2^i} \Big) \tau.
\end{align}
If $\tau$ is a tau function of the KP hierarchy, then $K$ is called the Baker--Akhiezer kernel.
It is a sum of one specific term $(x_1-x_2)^{-1}$ and some formal power series in $x_1,x_2$.

Under the assumption of KP integrability it characterizes the solution uniquely, using formuals \eqref{eq:W-K-relation} and \eqref{eq:K-def}, and can be used to express the formal $n$-point functions $W_n$ associated to $\tau$ by the following formulas. Let
\begin{align} \label{eq:W-definition}
	W_n(x_1,\dots,x_n)\coloneqq \restr{p_1,p_2,\dots}{0} \Big(\prod_{i=1}^n \sum_{k=1}^\infty k x_i^{k-1} \partial_{p_k} \Big)\log \tau,
\end{align}
then we have:
\begin{align} \label{eq:W-K-relation}
	W_1 (x_1) & = \lim_{x_1'\to x_1} \Big(K(x_1,x_1')- \frac{1}{x_1-x_1'}\Big); \\ \notag
	W_2(x_1,x_2) & =-K(x_1,x_2)K(x_2,x_1) -\frac{1}{(x_1-x_2)^2}; \\ \notag
	W_n(x_{\llbracket n\rrbracket})  & = (-1)^{n-1} \sum_{\sigma\in C_n} \prod_{i=1}^n K(x_{i},x_{\sigma(i)}), & n\geq 3,
\end{align}
where $C_n\subset S_n$ is the set of cycles of length $n$.
These formulas are called the determinantal formulas and are direct consequences of the Wick formula \cite{Zhou}. More precisely, from~\cite[Section 5]{Zhou} we have:

\begin{lemma}\label{lem:determinantal-formulas} 
	A formal power series $\tau$ is a KP tau function if and only if Equations~\eqref{eq:W-K-relation} hold.
\end{lemma}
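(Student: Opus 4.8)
The plan is to establish the equivalence stated in Lemma~\ref{lem:determinantal-formulas} by going through the fermionic (free-field) description of KP tau functions, following the approach of~\cite{Zhou}. First I would recall that any formal power series $\tau=1+O(p)$ with $\tau=\tau_v$ for a decomposable vector $v$ as in~\eqref{eq:decomposable-vector} admits a description via the Baker--Akhiezer wave function $\Psi(x)$ and its dual $\Psi^*(x)$, which are obtained from $\tau$ by the standard Sato formulas (vertex operator action / shift of times by $\pm[x^{-1}]$). The two-point object $K(x_1,x_2)$ in~\eqref{eq:K-def} is, up to the explicit $(x_1-x_2)^{-1}$ pole, precisely the regularized pairing $\langle \Psi^*(x_2)\Psi(x_1)\rangle_\tau$; this is the content of identifying~\eqref{eq:K-def} with the Baker--Akhiezer kernel.

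Next I would invoke Wick's theorem in the fermionic Fock space: for a decomposable $v$, the vacuum expectation values of products of the fermionic fields $\psi(x),\psi^*(x)$ are computed as determinants (Pfaffian-type, but here of the $\psi\psi^*$ pairing) of the two-point functions. Concretely, the connected $n$-point correlator $\langle\langle \psi^*(x_1)\psi(x_1)\cdots\psi^*(x_n)\psi(x_n)\rangle\rangle$ equals $(-1)^{n-1}\sum_{\sigma\in C_n}\prod_i K(x_i,x_{\sigma(i)})$ for $n\ge 3$, with the stated corrections for $n=1,2$ coming from the self-contraction and the subtraction of the free pole. Translating this correlator into the generating-function language: the operator $\sum_k k x^{k-1}\partial_{p_k}$ applied to $\log\tau$ and then specialized at $p=0$ extracts exactly this connected correlator, which is the definition~\eqref{eq:W-definition} of $W_n$. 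This gives the forward implication: if $\tau$ is a KP tau function, Equations~\eqref{eq:W-K-relation} hold.

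For the converse, I would argue that Equations~\eqref{eq:W-K-relation} for all $n$ are in fact equivalent to the full system of fermionic Wick relations, and these are equivalent to decomposability of $v$, hence to the Hirota bilinear identity. The key point is that~\eqref{eq:W-K-relation} together with~\eqref{eq:K-def} reconstructs all the $W_n$, and hence all of $\log\tau$ (since $\tau=1+O(p)$ is determined by its $W_n$'s), from the single kernel $K$; conversely any $K$ of the form $(x_1-x_2)^{-1}+(\text{power series})$ that is built from a $\tau$ via~\eqref{eq:K-def} and satisfies the closure relations~\eqref{eq:W-K-relation} forces the Plücker/Hirota relations. This is precisely the statement proved in~\cite[Section 5]{Zhou}, so at this level the proof is a citation supplemented by matching conventions. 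The main obstacle, and the only real work, is bookkeeping: checking that our normalizations of $K$ (the placement of the $x_1-x_2$ factor, the sign conventions in the cycle sum, the $n=1,2$ boundary corrections, and the $\hbar$/genus grading) agree exactly with those of~\cite{Zhou}, so that the cited equivalence applies verbatim to Equations~\eqref{eq:W-K-relation} as we have written them.
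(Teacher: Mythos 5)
Your proposal is correct and follows essentially the same route as the paper: the paper gives no independent proof of Lemma~\ref{lem:determinantal-formulas} but states it as a direct consequence of the Wick formula, citing~\cite[Section 5]{Zhou}, which is exactly the fermionic-Fock-space argument you sketch and then reduce to a citation plus convention matching. Your additional remarks on the forward direction (Wick contractions for decomposable vectors giving the cycle sums) and on why the converse reduces to Zhou's equivalence are consistent with the paper's intent.
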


\subsection{Determinantal formulas for systems of differentials and a proof of Theorem~\ref{thm:KPglobal}}

\label{sec:DetFormProofThmKPGlobal}

In the context of KP integrability as a property of a system of differentials, we consider a given system of differentials $\{\omega^{(g)}_n\}$, and we assume that the tau function and the corresponding kernel $K$ of the previous section are associated with a particular choice of the regular point~$o$ of the spectral curve and the local coordinate~$x$ at this point. Since linear terms of the potential do not affect KP integrability, let us assume for definiteness that
\begin{equation}\label{eq:assume-omega01-equal0}
\omega^{(0)}_1\coloneqq 0,	
\end{equation}
see Remark~\ref{rem:01irrelevant}.

We introduce an invariant version of the Baker--Akhiezer kernel as the bi-half-differential
\begin{equation}\label{eq:K-bK}
\bK(x_1,x_2)\coloneqq K(x_1,x_2)\sqrt{dx_1 dx_2}.
\end{equation}

\begin{remark} By a half-differential we understand a local section of a line bundle $L$ on $\Sigma$ such that $L^{\otimes 2}\cong K_\Sigma$. A particular choice of $L$ (known as theta characteristic) is fixed and is the same for all variables, and the exposition below doesn't depend on this choice.
\end{remark}

Note that if $\tau=Z_{o,x}$ via~\eqref{eq:omega-expansion}--\eqref{eq:ZFrel} for the system of differentials $\{\omega^{(g)}_n\}$, then the $n$-point functions $W_n$ defined in~\eqref{eq:W-definition} can be expressed in terms of the $\{\omega^{(g)}_n\}$'s themselves as follows:
\begin{align}
	W_n = \frac{\omega_n }{\prod_{i=1}^n dx_i}-\delta_{n,2}\frac{1}{(x_1-x_2)^2},
\end{align}
where $\omega_n= \sum_{g=0}^\infty \hbar^{2g-2+n}\omega^{(g)}_n$.

Then the determinantal formulas~\eqref{eq:W-K-relation} in the KP integrable case can be rewritten as
\begin{align} \label{eq:omega-kappa-relation}
	\omega_1(x_1) & = \lim_{x_1'\to x_1} \Big(\bK(x_1,x_1')-\frac{\sqrt{dx_1dx'_1}}{x_1-x_1'}\Big); \\ \notag
	\omega_n(x_{\llbracket n\rrbracket})  & = (-1)^{n-1} \sum_{\sigma\in C_n} \prod_{i=1}^n
	\bK(x_{i},x_{\sigma(i)}), & n\geq 2.
\end{align}
Note that we consider~\eqref{eq:K-bK}--\eqref{eq:omega-kappa-relation} at the moment as equalities of power expansions at the point~$o$ in the local coordinate~$x$ of the spectral curve.

\begin{remark}
	Equation~\eqref{eq:omega-kappa-relation} is not a completely straightforward rewriting of~\eqref{eq:W-K-relation}; the identification of the formulas for $\omega_1$ and $W_1$ uses a corollary of the KP integrability equations for $\omega^{(0)}_2$ under the assumption~\eqref{eq:assume-omega01-equal0}, cf.~Sections~\ref{sec:KP-quasiclassical} and~\ref{sec:quasiclassical} below.
\end{remark}

\begin{proposition}\label{prop:bK-global}
The kernel~$\bK$ extends as a formal series in~$\hbar$ whose coefficients are global meromorphic differentials on~$\Sigma^2$ defined in a vicinity of the diagonal.
The introduced differential $\bK$ is independent of a choice of the point~$o$ and of the local coordinate~$x$, and is uniquely determined by the system of differentials~$\{\omega^{(g)}_n\}$.
\end{proposition}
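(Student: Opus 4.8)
The plan is to establish the two assertions of the proposition in turn: first that $\bK$ assembles into a formal $\hbar$-series of global meromorphic bi-half-differentials near the diagonal of $\Sigma^2$, and second that this object does not depend on the auxiliary choices $(o,x)$. Both will follow from the determinantal formulas \eqref{eq:omega-kappa-relation} together with Theorem~\ref{thm:KPglobal}. The key point is that \eqref{eq:omega-kappa-relation} is not merely a definition of $\bK$ in terms of $\{\omega_n^{(g)}\}$ but an \emph{overdetermined} system that pins $\bK$ down: the case $n=2$ reads $\omega_2 = -\bK(x_1,x_2)\bK(x_2,x_1) - \sqrt{dx_1dx_2}/(x_1-x_2)^2$ up to the splitting into the two cyclic orderings, while $n=1$ recovers $\omega_1$ as the regularized diagonal value. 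So the strategy is to recover $\bK$ locally from $\omega_1$ and $\omega_2$, check the local pieces glue, and then invoke independence of the expansion point.

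\emph{Step 1 (locality and global extension).} Fix a regular point $o$ and a coordinate $x$; by KP integrability $\tau = Z_{o,x}$ is a KP tau function, hence by Lemma~\ref{lem:determinantal-formulas} the kernel $K(x_1,x_2)$ exists as $(x_1-x_2)^{-1}$ plus a formal power series in $x_1,x_2$ with coefficients that are formal $\hbar$-series. Multiplying by $\sqrt{dx_1dx_2}$ gives $\bK$ near $(o,o)$. To see the coefficients are germs of \emph{meromorphic} differentials with a pole only along the diagonal, I would combine two inputs: (i) the product $\bK(x_1,x_2)\bK(x_2,x_1)$ equals $-\omega_2 - \sqrt{dx_1dx_2}/(x_1-x_2)^2$, whose right-hand side is, by hypothesis on the system $\{\omega_n^{(g)}\}$, meromorphic on $\Sigma^2$ with at worst a double diagonal pole of biresidue governed by the $(0,2)$ term; and (ii) the symmetry/antisymmetry structure of the kernel — $\bK(x_1,x_2) = \sqrt{dx_1dx_2}/(x_1-x_2) + (\text{regular})$ — which fixes the sign of the square root locally. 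Extracting $\bK$ from the symmetric product $\bK(x_1,x_2)\bK(x_2,x_1)$ together with the diagonal normalization $\bK(x_1,x_2)(x_1-x_2)/\sqrt{dx_1dx_2}\to 1$ determines $\bK$ uniquely as a meromorphic bi-half-differential in a neighborhood of the diagonal, order by order in $\hbar$. Here I use that a nowhere-vanishing holomorphic function with a prescribed square is determined up to global sign, and the diagonal normalization kills the sign ambiguity.

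\emph{Step 2 (independence of $(o,x)$ and gluing).} Given a second choice $(o',x')$, Theorem~\ref{thm:KPglobal} guarantees $Z_{o',x'}$ is again a KP tau function, with its own kernel $\bK'$ defined near $(o',o')$. On the overlap of the two diagonal neighborhoods, both $\bK$ and $\bK'$ satisfy the same relation $\bK(x_1,x_2)\bK(x_2,x_1) = -\omega_2 - \sqrt{dx_1dx_2}/(x_1-x_2)^2$ with the \emph{same} right-hand side (the $\omega_n^{(g)}$ are intrinsic to the system), the same diagonal normalization, and the same leading singularity; by the uniqueness established in Step~1 they must agree where both are defined. This local agreement, as $(o,x)$ ranges over all regular points, produces a single well-defined formal $\hbar$-series of meromorphic bi-half-differentials on an open neighborhood of the diagonal in $\Sigma^2$, independent of all choices and determined by $\{\omega_n^{(g)}\}$ alone.

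\emph{Main obstacle.} The delicate point is Step~1: inverting the quadratic relation to recover $\bK$ from $\bK(x_1,x_2)\bK(x_2,x_1)$ requires care about (a) the square-root/theta-characteristic bookkeeping, so that the half-differential is globally consistent and the sign is fixed by the diagonal expansion rather than chosen arbitrarily, and (b) the identification of $\omega_1$ with the regularized diagonal limit of $\bK$, which — as flagged in the remark after \eqref{eq:omega-kappa-relation} — is not formal but uses the quasiclassical/KP consequence for $\omega_2^{(0)}$ under the normalization $\omega_1^{(0)}=0$; I would borrow that input from Sections~\ref{sec:KP-quasiclassical} and~\ref{sec:quasiclassical}. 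Once these are in place, meromorphy in a diagonal neighborhood is immediate because the right-hand side of the $n=2$ relation is manifestly meromorphic there.
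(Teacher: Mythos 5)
Your proposal diverges from the paper's proof in a way that introduces two genuine problems. The first is circularity: in Step~2 you invoke Theorem~\ref{thm:KPglobal} to conclude that $Z_{o',x'}$ is again a KP tau function, but in the paper Theorem~\ref{thm:KPglobal} is itself deduced as a direct corollary of Proposition~\ref{prop:bK-global}, so you cannot use it here. Relatedly, your whole strategy routes through the determinantal formulas~\eqref{eq:omega-kappa-relation}, which hold only in the KP integrable case, whereas the proposition (and the paper's proof of it) does not assume integrability at all. The paper's argument is much more direct: it exhibits the closed formula~\eqref{eq:bK-def}, namely $\bK=\frac{\sqrt{dx_1dx_2}}{x_1-x_2}\exp\bigl(\sum\frac{\hbar^{2g-2+n}}{n!}\int_{z_2}^{z_1}\!\cdots\!\int_{z_2}^{z_1}\omega^{(g)}_n+\cdots\bigr)$, which is just the Miwa specialization $p_i\mapsto x_1^i-x_2^i$ of $F$ applied to~\eqref{eq:K-def}. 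This expression is manifestly determined by the $\{\omega^{(g)}_n\}$ alone, is meromorphic in a vicinity of the diagonal once $x$ is taken to be a global meromorphic function, and its independence of the local coordinate reduces to the elementary identity~\eqref{eq:omega02eq-solution} for the regularizing term. No reconstruction from the correlators is needed.

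The second problem is that your reconstruction in Step~1 does not actually determine $\bK$. You treat $\bK(x_1,x_2)\bK(x_2,x_1)$ as a prescribed square, but $\bK$ is not symmetric: writing $\bK=\frac{\sqrt{dx_1dx_2}}{x_1-x_2}e^{S(x_1,x_2)}$, the exponent $S$ contains the iterated integrals $\int_{z_2}^{z_1}\cdots\int_{z_2}^{z_1}\omega^{(g)}_n$, which change sign under $z_1\leftrightarrow z_2$ for odd $n$ (and $\omega^{(1)}_1$, $\omega^{(g)}_3$, etc.\ are generically nonzero). Hence the product $\bK(x_1,x_2)\bK(x_2,x_1)$ sees only the symmetric part of $S$, and the $n=2$ relation together with the diagonal normalization leaves the antisymmetric part of $S$ completely undetermined (for instance, any replacement $\bK\mapsto\frac{h(x_1)}{h(x_2)}\bK$ with $h$ nonvanishing preserves the $n=2$ product, all cyclic products with $n\ge2$, and the diagonal leading term, changing only $\omega_1$ by $d\log h$). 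To pin $\bK$ down from the determinantal side you would need the full tower of $n$-point relations together with the $n=1$ normalization $\omega^{(0)}_1=0$, and your gluing argument in Step~2 would then have to be redone on that basis. As it stands, the claimed uniqueness is false, and with it the independence of $(o,x)$ does not follow.
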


\begin{proof}
For the proof we provide an alternative invariant expression for~$\bK$ directly in terms the differentials~$\omega^{(g)}_n$:
\begin{align}\label{eq:bK-def}
 	\bK(z_1,z_2)& \coloneqq \frac{\sqrt{dx_1\;dx_2}}{x_1-x_2}
 \exp \bigg(
 \sum\limits_{2g-2+n>0}\frac{\hbar^{2g-2+n}}{n!}
 		\int\limits_{z_2}^{z_1}\dots\int\limits_{z_2}^{z_1}\omega^{(g)}_n+
 \frac12
 		\int\limits_{z_2}^{z_1}\!\!\int\limits_{z_2}^{z_1}(\omega^{(0)}_2-\omega^{(0),{\rm sing}}_2)\bigg),
 \\\omega^{(0),{\rm sing}}_2&\coloneqq\frac{dx_1dx_2}{(x_1-x_2)^2}.
\end{align}

The integrals $\int_{z_2}^{z_1}$ under the exponent in this expression of $\bK$ are single-valued functions in the vicinity of the diagonal, since we the integral path is chosen in the vicinity of the diagonal. To this end we note that the only possible multi-valuedness can occur in the neighborhood of a simple pole $p$ of $\omega^{(g)}_n$. In this case we choose a branch of the logarithm of $\log{z_1/z_2}$ that is a single valued function near $(p,p)\in\Sigma^2$ once we remove a cut connecting the germs of the divisors $z_1=p$ and $z_2=p$ near the diagonal. We regard $\bK$ as a (bi-half-)differential on~$\Sigma^2$ and use $z_1$ and $z_2$ as a notation for the corresponding points of the spectral curve, and $x_i=x(z_i)$. This formula makes sense if $z_1$ and $z_2$ are close to one another and is actually a reformulation of~\eqref{eq:K-def}.

The definition~\eqref{eq:bK-def} involves a regularizing term $\omega^{(0)\rm, sing}_2$ that does depend on a choice of local coordinate~$x$ --- the equation does not make  sense without this regularizing term. However, it can be easily checked by direct substitution that a change of local coordinate implying the corresponding change of the regularizing term leads to an equivalent equation. Indeed, for any other local coordinate $y$ near $o$, $y=y(x)$, we have:
\begin{equation}\label{eq:omega02eq-solution}
	\frac{dy(x_1)dy(x_2)}{(y(x_1)-y(x_2))^2}=\frac{dx_1dx_2}{(x_1-x_2)^2}\,\exp\left({\int\limits_{x_2}^{x_1}\int\limits_{x_2}^{x_1}
		\left(\frac{dy(x_1)dy(x_2)}{(y(x_1)-y(x_2)^2}-\frac{dx_1dx_2}{(x_1-x_2)^2}\right)}\right).
\end{equation}

If we take for~$x$ any global meromorphic function then the form~$\bK$ defined by~\eqref{eq:bK-def} is also meromorphic in a vicinity of the diagonal in~$\Sigma^2$. This completes the proof.
\end{proof}
For the KP integrable cases below this globally defined differential $\bK$ we also call the Baker--Akhiezer kernel.

\begin{remark}
By construction, the differential~$\bK$ is singular on the diagonal. However, the difference $\bK-\frac{\sqrt{dx_1dx_2}}{x_1-x_2}$ is regular at $x_1=x_2$ for any choice of the local coordinate~$x$.
\end{remark}

\begin{proof}[Proof of Theorem~\ref{thm:KPglobal}]
Theorem~\ref{thm:KPglobal} is a direct corollary of Proposition~\ref{prop:bK-global}. Indeed, given a system of differentials  $\{\omega^{(g)}_n\}$, assume that $Z_{o,x}$ for a particular regular point~$o$ of the spectral curve and a local coordinate~$x$ is a tau function of the KP hierarchy. Then Equations~\eqref{eq:W-K-relation}, and hence,
Equations~\eqref{eq:omega-kappa-relation} hold in the power expansions at the point $(o,\dots,o)\in\Sigma^n$. But both sides of~\eqref{eq:omega-kappa-relation} have global invariant meaning, and hence,~\eqref{eq:omega-kappa-relation} holds true as an equality of global differentials defined in a neighborhood of the diagonal in~$\Sigma^n$. It follows that~\eqref{eq:omega-kappa-relation}, and hence,~\eqref{eq:W-K-relation} hold for the power expansions at any other point regular for all differentials~$\omega^{(g)}_n$ implying KP integrability of the corresponding potential.
\end{proof}

\subsection{KP integrability in the semi-classical limit and application to topological recursion}
\label{sec:KP-quasiclassical}

Consider a system of differentials $\{\omega^{(g)}_n\}$ and assume that they satisfy the KP integrability property in the sense of Theorem~\ref{thm:KPglobal}.
As we pointed out in Remark~\ref{rem:01irrelevant}, the $(0,1)$ term is irrelevant for the KP integrability property. But then KP integrability implies severe restrictions for the $(0,2)$ term (which we call the semi-classical limit) that we formulate now.
\begin{proposition}\label{prop:omega02KP} 
Let $\{\omega^{(g)}_n\}$ be a system of symmetric meromorphic differentials satisfying KP integrability property in the sense of Theorem~\ref{thm:KPglobal}. Let $o$ be a regular point for this system. Then there exists a local coordinate~$z$ at this point such that the $(0,2)$-term of the corresponding potential $F_{o,z}=\log(Z_{o,z})$ vanishes. Respectively, for this choice of the local coordinate we have
	\begin{equation}\label{eq:omega02z}
		\omega^{(0)}_2=\frac{dz_1dz_2}{(z_1-z_2)^2}.
	\end{equation}
Moreover, such local coordinate~$z$ extends as a global meromorphic function on~$\Sigma$ and the above equality holds globally.
	The function~$z$ with this property is defined uniquely up to a linear fractional transformation.
\end{proposition}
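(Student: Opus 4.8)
The plan is to translate the statement into a concrete problem about the potential $F^{(0)}_2$ and then solve it. Write $\omega^{(0)}_2 = \frac{dx_1dx_2}{(x_1-x_2)^2} + \big(\sum_{k_1,k_2\ge 1} f^{(0)}_{k_1,k_2} k_1 k_2 x_1^{k_1-1}x_2^{k_2-1}\big)dx_1dx_2$ in some fixed local coordinate $x$ at the regular point $o$, so that $F^{(0)}_2 = \tfrac12\sum_{k_1,k_2} f^{(0)}_{k_1,k_2} p_{k_1}p_{k_2}$ with $(f^{(0)}_{k_1,k_2})$ a symmetric matrix. The search for a coordinate $z=z(x)$ killing the $(0,2)$ term is then, by the transformation rule in \eqref{eq:omega02eq-solution}, exactly the requirement $\omega^{(0)}_2 = \frac{dz_1dz_2}{(z_1-z_2)^2}$, i.e. $\log\frac{z(x_1)-z(x_2)}{x_1-x_2} - \text{(terms that vanish symmetrically)}$ absorbs the regular part; more precisely one must show the regular symmetric bilinear form $\sum f^{(0)}_{k_1,k_2} x_1^{k_1}x_2^{k_2}$ appearing in $2\int\!\!\int (\omega^{(0)}_2 - \omega^{(0),\mathrm{sing}}_2)$ equals $2\log\frac{(z(x_1)-z(x_2))/(x_1-x_2)}{\sqrt{z'(x_1)z'(x_2)}}$ for a suitable series $z(x) = x + O(x^2)$.

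First I would invoke KP integrability in the form of the determinantal formulas \eqref{eq:omega-kappa-relation} at genus zero. The leading ($\hbar^0$, i.e.\ $\omega^{(0)}_2$) part of the relation $\omega_2 = -\bK(x_1,x_2)\bK(x_2,x_1) - \tfrac{1}{(x_1-x_2)^2}(\cdots)$, together with the defining exponential formula \eqref{eq:bK-def} for $\bK$, reduces modulo $\hbar$ to a closed statement purely about $F^{(0)}_n$, $n\ge 2$: namely the genus-zero determinantal/Wick structure forces $\omega^{(0)}_2$ to be of the form $\frac{d\phi(x_1)d\phi(x_2)}{(\phi(x_1)-\phi(x_2))^2}$ for some formal series $\phi$. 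This is the classical statement that a genus-zero KP (dispersionless) tau function has a two-point function of this ``propagator'' shape; it can be extracted either from \eqref{eq:omega-kappa-relation} at order $\hbar^0$ or from the dispersionless Hirota equations, and is presumably done carefully in Section~\ref{sec:quasiclassical}. Setting $z = \phi$ then gives \eqref{eq:omega02z} as a formal identity at $o$.

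Second, I would upgrade the formal series $z=\phi(x)$ to a global meromorphic function on $\Sigma$. Here the key input is Proposition~\ref{prop:bK-global}: $\bK$ is a globally defined meromorphic bi-half-differential near the diagonal, independent of $o$ and $x$. At order $\hbar^0$, $\bK$ degenerates and the object that survives is exactly $\frac{\sqrt{dz_1 dz_2}}{z_1-z_2}$ written invariantly; since the left-hand side is global, the ratio $\frac{z(x_1)-z(x_2)}{\sqrt{z'(x_1)z'(x_2)}\,(x_1-x_2)}$ — a priori only a germ near the diagonal — extends to a global function on $\Sigma^2$ minus the diagonal, and from this one reconstructs $dz$ (hence $z$ up to a constant) as a global meromorphic object: $\partial_{x_1}\partial_{x_2}\log\big(z(x_1)-z(x_2)\big)$ is the globally defined differential $-\omega^{(0)}_2$, and integrating twice in a neighbourhood of a chosen point, then analytically continuing along $\Sigma$ using the globalness, yields a global meromorphic $z$. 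Finally, uniqueness: if $z$ and $\tilde z$ both satisfy \eqref{eq:omega02z} globally, then $\frac{dz_1 dz_2}{(z_1-z_2)^2} = \frac{d\tilde z_1 d\tilde z_2}{(\tilde z_1-\tilde z_2)^2}$, and \eqref{eq:omega02eq-solution} (now read backwards) shows $\tilde z$ is a Möbius function of $z$; this is the classical computation that $\frac{dw_1 dw_2}{(w_1-w_2)^2}$ is Möbius-invariant and that no other change of variable preserves it.

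The main obstacle I anticipate is the globalization step rather than the formal step: extracting from the local ``propagator form'' of $\omega^{(0)}_2$ a genuinely global meromorphic primitive $z$ on $\Sigma$, and controlling its behaviour (e.g.\ that it has no monodromy and that $\omega^{(0)}_2 = \frac{dz_1 dz_2}{(z_1-z_2)^2}$ then holds as an identity of global meromorphic differentials, not just near the diagonal). The resolution is to use that both sides are meromorphic differentials on $\Sigma^2$ agreeing on an open set, hence everywhere; the potential monodromy of $z$ around poles of $\omega^{(0)}_2$ is exactly the issue flagged in the proof of Proposition~\ref{prop:bK-global} (choice of branch of $\log$), and one must check that the residues of $\omega^{(0)}_2$ along the appropriate divisors are such that $\exp$ of the double integral is single-valued — equivalently, that $z$ is single-valued — which should follow from KP integrability constraints on those residues. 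I expect the detailed argument for this to be carried out in Section~\ref{sec:quasiclassical}, with the present statement being essentially a packaging of ``dispersionless KP $\Rightarrow$ rational-type two-point function'' plus the invariance results already established.
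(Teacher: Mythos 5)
Your overall architecture (local normal form, then globalization, then M\"obius uniqueness) matches the paper's, and you correctly isolate the two nontrivial steps. But both of them are left unresolved in your proposal, and the second one is a genuine gap. For the local step you appeal to ``the classical statement that a genus-zero KP tau function has a propagator-shaped two-point function'' without proving or precisely citing it. The paper does give this a self-contained proof: after normalizing the coordinate so that $\omega^{(0)}_2(\cdot,v)=dz/z^2$ for a fixed tangent vector $v$ at $o$ (equivalently $c_{1,i}=c_{i,1}=0$ in the expansion of $H^{(0)}_2$), it solves the semi-classical equation \eqref{eq:omega02eq} degree by degree; the degree-$d$ obstruction is governed by an explicit linear operator $A_d$ on symmetric polynomials whose kernel is one-dimensional, spanned by $\sum_{i}z_1^iz_2^{d-i}$, and exactly that kernel direction is killed by the normalization. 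This is the content you are deferring; citing \cite{TakasakiTakebe95} would be acceptable, but as written the step is missing.

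The serious gap is globalization. Your plan --- recover $dz$ from the globally defined $\omega^{(0)}_2$ by ``integrating twice in a neighbourhood and then analytically continuing'' --- does not work: the form $\alpha=\omega^{(0)}_2(\cdot,v)$ is indeed a global meromorphic $1$-form, but $\int\alpha$ (hence $z=-(\int\alpha)^{-1}$) is in general multivalued because of periods of $\alpha$ over cycles of $\Sigma$ and residues at its other poles; ``no monodromy'' is precisely what has to be proved, and your suggestion that it ``should follow from KP integrability constraints on the residues'' is not an argument (and period integrals, not just residues, are the obstruction). The paper's resolution is different and avoids integration entirely: running the local normal-form construction at every point $o'$ near $o$ shows that $dw/w^2$ is a \emph{global} meromorphic form for every M\"obius transform $w=(az+b)/(cz+d)$ of $z$ sufficiently close to the identity; in particular $dz/(z-t)^2$ is global for all small $t$, so the ratio
\begin{equation*}
\frac{dz/z^2}{dz/(z-t)^2}=(1-tz^{-1})^2
\end{equation*}
is a globally meromorphic function, polynomial in $t$, whose coefficient of $t$ exhibits $z^{-1}$ (hence $z$) as a global meromorphic function. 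Without this (or some equivalent) idea your proof of the global statement is incomplete. The uniqueness-up-to-M\"obius part of your argument is fine.
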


For the proof of this Proposition see Section~\ref{sec:quasiclassical}. Its local version is well known and proved, in particular, in~\cite[Proposition B.2]{TakasakiTakebe95}.
This lemma gives a restriction on the global behaviour of~$\omega^{(0)}_2$ which does not involve any choice of the point and the local coordinate.
In particular, observe that the equation $z_1=z_2$ may be satisfied for two distinct points and the divisor of poles of~$\omega^{(0)}_2$ may have components distinct from the diagonal. The only case when these additional poles do not appear is when~$z$ is of degree~$1$, that is, when~$\Sigma=\C P^1$ and~$z$ is an affine coordinate. Thus, we have the following immediate corollary:

\begin{corollary} \label{cor:no-other-poles} 
	If $\omega^{(0)}_2$ has no other poles than on the diagonal and $\{\omega^{(g)}_n\}$ satisfy the KP integrability property, then the spectral curve is rational.
\end{corollary}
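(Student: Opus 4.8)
The plan is to read this off from Proposition~\ref{prop:omega02KP}. First I would invoke that proposition: since $\{\omega^{(g)}_n\}$ is KP integrable, there is a global meromorphic function~$z$ on~$\Sigma$, unique up to a fractional linear transformation, with $\omega^{(0)}_2=\tfrac{dz_1dz_2}{(z_1-z_2)^2}$ as an equality of global differentials on~$\Sigma^2$. Let $d=\deg z$ be the degree of~$z$ regarded as a branched covering $z\colon\Sigma\to\C P^1$.

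The second step is to pin down the polar locus of $\tfrac{dz_1dz_2}{(z_1-z_2)^2}$. It is singular exactly along $D\coloneqq\{(p_1,p_2)\in\Sigma^2:z(p_1)=z(p_2)\}$, which always contains the diagonal~$\Delta$. I claim that $D$ acquires further components as soon as $d\geq2$: picking $c\in\C P^1$ that is not a critical value of~$z$, the fibre $z^{-1}(c)$ consists of $d$ distinct points $q_1,\dots,q_d$, and every pair $(q_i,q_j)$ with $i\neq j$ lies in $D\setminus\Delta$. A local computation near such a pair --- writing $z_k-c=a_k t_k+O(t_k^2)$ in a local coordinate~$t_k$ with $a_k\neq0$ --- shows $z_1-z_2$ vanishes to first order on this component while $dz_1\,dz_2=a_1a_2\,dt_1\,dt_2+\cdots$ does not vanish, so $\omega^{(0)}_2$ has a genuine (double) pole there, away from the diagonal.

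Putting the two steps together, the assumption that $\omega^{(0)}_2$ has no poles off the diagonal forces $d=1$, hence $z$ is a biholomorphism and $\Sigma\cong\C P^1$ is rational. I expect the only delicate point --- and hence the main, though very mild, obstacle --- to be verifying that the off-diagonal components of~$D$ really do carry poles of~$\omega^{(0)}_2$, i.e.\ that the numerator $dz_1\,dz_2$ does not vanish along them and cancel the singularity; this is precisely what the computation at an unramified value~$c$ rules out, since there each $dz_k$ is a local unit. This is essentially the content of the discussion preceding the corollary, so once Proposition~\ref{prop:omega02KP} is in hand the argument is immediate.
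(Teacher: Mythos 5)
Your argument is correct and is essentially the paper's own reasoning: the paper likewise derives the corollary immediately from Proposition~\ref{prop:omega02KP} by noting that the polar divisor of $\tfrac{dz_1dz_2}{(z_1-z_2)^2}$ is the locus $z(p_1)=z(p_2)$, which has components off the diagonal unless $\deg z=1$, forcing $\Sigma\cong\C P^1$. Your local check at an unramified fibre, confirming that $dz_1\,dz_2$ is a unit there so the off-diagonal poles genuinely occur, is a correct filling-in of the detail the paper leaves implicit.
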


This condition on $\omega^{(0)}_2$ is always satisfied, by definition, in the set-up of standard (c.f. \cite{EO}) topological recursion:

\begin{theorem}\label{thm:TR-rational-curve-refined} 
If the differentials $\{\omega^{(g)}_n\}$ produced by standard topological recursion are KP integrable, then the spectral curve $\Sigma$ is rational. Moreover, in this case the kernel~$\bK$ is defined globally on~$\Sigma^2$ and the determinantal formulas written in Equation~\eqref{eq:omega-kappa-relation} hold globally on~$\Sigma^n$.
\end{theorem}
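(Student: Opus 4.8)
The plan is to deduce this from the results already established, chiefly Corollary~\ref{cor:no-other-poles} and Proposition~\ref{prop:bK-global}, after unpacking what ``standard topological recursion'' guarantees about the Bergman kernel $\omega^{(0)}_2$. The point is that by the very definition of topological recursion with a choice of Bergman kernel $B$ on a compact Riemann surface $\Sigma$, one takes $\omega^{(0)}_2 = B$, and $B$ is the canonical symmetric bidifferential of the second kind: it is holomorphic away from the diagonal and has a double pole with biresidue $1$ on the diagonal, with no residue. In particular $\omega^{(0)}_2$ has \emph{no poles other than on the diagonal}. This is precisely the hypothesis of Corollary~\ref{cor:no-other-poles}, so assuming the output differentials $\{\omega^{(g)}_n\}$ are KP integrable, that corollary immediately yields $\Sigma = \C P^1$. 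I would state this as the first step and spend a sentence recalling why $B$ has no off-diagonal poles (it follows either from the normalization $\oint_{A_i} B = 0$ with respect to a symplectic basis of cycles, or simply from $B$ being of the second kind; for $\C P^1$ it is forced to be $dz_1dz_2/(z_1-z_2)^2$ up to the quadratic-differential ambiguity, which is killed by requiring no $A$-periods once $g(\Sigma)=0$).

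The second step is to upgrade Proposition~\ref{prop:bK-global} and the determinantal formulas to \emph{genuinely global} statements. Proposition~\ref{prop:bK-global} already tells us $\bK$ is a well-defined formal-in-$\hbar$ series of meromorphic bidifferentials \emph{in a neighborhood of the diagonal}, independent of the point $o$ and the coordinate $x$; and the proof of Theorem~\ref{thm:KPglobal} showed that the determinantal formulas \eqref{eq:omega-kappa-relation} hold as equalities of such germs along the diagonal once KP integrability holds at one regular point. Now that $\Sigma = \C P^1$, take $x = z$ a global affine coordinate; then by Proposition~\ref{prop:omega02KP} we may in fact arrange $\omega^{(0)}_2 = dz_1 dz_2/(z_1-z_2)^2$ globally, and formula~\eqref{eq:bK-def} for $\bK$ involves only the globally meromorphic differentials $\omega^{(g)}_n$ together with this global regularizing term, so — as noted in the last line of the proof of Proposition~\ref{prop:bK-global} — $\bK$ is meromorphic in a neighborhood of the diagonal in $(\C P^1)^2$. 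The remaining task is to extend it to all of $(\C P^1)^2$: since $\C P^1 \times \C P^1$ minus the diagonal is connected and $\bK$ (more precisely its $\hbar$-coefficients) is meromorphic near the diagonal, one extends by the determinantal formula itself. Concretely, for $n \ge 3$ the identity $\omega_n = (-1)^{n-1}\sum_{\sigma\in C_n}\prod_i \bK(x_i,x_{\sigma(i)})$ with all $x_i$ distinct expresses products of $\bK$ at separated points in terms of the globally meromorphic $\omega_n$; choosing $n$ large and specializing variables, or using the $n=2$ relation $\bK(x_1,x_2)\bK(x_2,x_1) = -\omega_2 - \sqrt{dx_1dx_2}/(x_1-x_2)^2$ together with a local determination of $\bK$ near one point, pins down $\bK$ as a single-valued meromorphic section on $(\C P^1)^2$, and then \eqref{eq:omega-kappa-relation} holds globally by analytic continuation from the diagonal neighborhood where it is already known.

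I expect the main obstacle to be precisely this global extension/single-valuedness of $\bK$: near a simple pole of some $\omega^{(g)}_n$ the integral representation~\eqref{eq:bK-def} has a logarithmic branch point, so a priori $\bK$ is only defined on a cut neighborhood of the diagonal, and one must argue that the global object obtained from the determinantal formulas is nevertheless single-valued on $(\C P^1)^2$. The resolution is that KP integrability forces the relevant poles of the $\omega^{(g)}_n$ to be consistent with a single-valued $\bK$ — this is exactly what the $n\ge 2$ determinantal formulas encode — so the monodromy around such loci must be trivial; making this rigorous is the technical heart of the argument and will use that on $\C P^1$ the complement of the diagonal is simply connected in a suitable sense, together with the fact that $\bK(x_1,x_2)\bK(x_2,x_1)$ is manifestly single-valued. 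Everything else is bookkeeping: assembling the $\hbar$-expansion, checking that the global $\bK$ restricts to the germ from \eqref{eq:bK-def}, and recording that \eqref{eq:omega-kappa-relation} then holds on all of $\Sigma^n$.
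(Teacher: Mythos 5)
Your first step is exactly the paper's (implicit) argument: the Bergman kernel of standard topological recursion has poles only on the diagonal, so Corollary~\ref{cor:no-other-poles} gives rationality of~$\Sigma$ at once. That part is fine.

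The second part has a genuine gap, and you have in fact flagged it yourself: you never actually establish single-valuedness of~$\bK$ on all of $\Sigma^2$, you only describe a strategy (``making this rigorous is the technical heart'') based on analytic continuation through the determinantal formulas. That strategy does not obviously close: the cyclic products $\prod_i \bK(x_i,x_{\sigma(i)})$ appearing in~\eqref{eq:omega-kappa-relation} are invariant under the gauge transformation $\bK(x_1,x_2)\mapsto g(x_1)\bK(x_1,x_2)g(x_2)^{-1}$, so single-valuedness of all the $\omega_n$ does not by itself force single-valuedness of~$\bK$; and the $n=2$ relation only determines $\bK(x_1,x_2)\bK(x_2,x_1)$, leaving a sign/monodromy ambiguity that you would still have to kill. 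The missing idea is much simpler and is the one the paper uses: the correlation differentials of standard topological recursion have \emph{no residues} at any of their poles (their poles sit at the ramification points of~$x$ and are residue-free). Consequently the iterated integrals $\int_{z_2}^{z_1}\cdots\int_{z_2}^{z_1}\omega^{(g)}_n$ appearing in the exponent of the definition~\eqref{eq:bK-def} are single-valued functions of $(z_1,z_2)$ on the whole of the (now rational, hence simply connected) curve — there is no logarithmic branching to worry about in the first place. So formula~\eqref{eq:bK-def} already defines $\bK$ globally on~$\Sigma^2$, with no extension or continuation argument needed, and the determinantal formulas then hold globally because both sides are global meromorphic objects agreeing near the diagonal. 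You should replace your second step by this one observation; the rest of your write-up (the worry about simple poles and the proposed fix via monodromy of products) can be deleted.
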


\begin{proof}
The second statement follows from the fact that the topological recursion differentials have no residues so that the integrals entering the definition of $\bK$ under the exponent are univalued in this case.	
\end{proof}


\subsection{\texorpdfstring{$x-y$}{x-y} swap for Baker--Akhiezer kernel}
If the systems of differentials $\{\omega^{(g)}_n\}$ and hence $\{\omega^{\vee,(g)}_n\}$ are KP integrable, in the complicated combinatorial expression~\eqref{eq:MainFormulaSimple} of $x-y$ duality they both can be replaced by a simpler determinantal formulas~\eqref{eq:omega-kappa-relation}. This observation raises a natural question: what happens with the Baker--Akhiezer kernel under the $x-y$ swap? It proves out that there are explicit formulas relating $x-y$ dual kernels.

Let us consider two integral transforms which are inverse to each other.
\begin{proposition}\label{prop:identity-Gauss1} 
Let for some function $f^\vee(z)$
\begin{equation}\label{eq:etr}
f(z)=\tfrac{\ii}{\sqrt{2\pi\,\hbar}}
\int f^\vee(\chi)\;
y'(\chi)\;
e^{\frac1\hbar\left(x(z)(y(z)-y(\chi))+\int_z^{\chi}x\,dy\right)} \,d\chi.
\end{equation}
Then
\begin{equation}\label{eq:etri}
f^\vee(z)=\tfrac{1}{\sqrt{2\pi\,\hbar}}
\int f(\chi)\;
x'(\chi)\;
e^{-\frac1\hbar\left(y(z)(x(z)-x(\chi))+\int_z^{\chi}y\,dx\right)}\,d\chi.
\end{equation}
\end{proposition}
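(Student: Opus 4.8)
The plan is to show that the composition of the two integral transforms is the identity operator (as an operator acting on a suitable space of formal-type functions), which is enough since the claim is symmetric under the substitution $x \leftrightarrow y$, $f \leftrightarrow f^\vee$ (up to the constant $\ii$ which arises from swapping the role of the Gaussian exponent's sign). Concretely, I would substitute~\eqref{eq:etr} into the right-hand side of~\eqref{eq:etri}, obtaining a double integral over $\chi$ and an inner integration variable, and collect all the exponential factors. The combined exponent, after writing $\int_z^\chi x\,dy$ and $\int_{\chi'}^{\chi} y\,dx$ and using integration by parts $\int x\,dy + \int y\,dx = \Delta(xy)$, should organize into a single quadratic-looking form in which the inner variable appears as a genuine Gaussian, so that the stationary-phase/Gaussian integration in $\hbar$ is exact and produces $\delta$-like localization onto $\chi = z$.

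The key steps, in order: (i) set up the double integral and simplify the total exponent, carefully tracking the $x(z)(y(z)-y(\chi))$ and $y(z)(x(z)-x(\chi))$ boundary terms together with the two line integrals; (ii) change integration variable in the inner integral from the curve parameter to $y$ (or to $x$, depending on which transform is applied first), turning $y'(\chi)\,d\chi$ into $dy$, so that the exponent becomes exactly quadratic in the new variable after completing the square; (iii) perform the resulting Gaussian integral, using $\tfrac{\ii}{\sqrt{2\pi\hbar}}\int e^{-\frac{1}{2\hbar}(\,\cdot\,)^2}(\cdots) = $ evaluation at the critical point, which forces the outer variable to equal $z$ and cancels the two normalization constants $\tfrac{\ii}{\sqrt{2\pi\hbar}}$ and $\tfrac{1}{\sqrt{2\pi\hbar}}$ against the Jacobian; (iv) check that the prefactors $y'(\chi)$ and $x'(\chi)$ combine with the Jacobian of the stationary-phase evaluation to give exactly $1$, recovering $f(z)$. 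Throughout, the integrals are understood in the formal saddle-point sense (expansion in $\hbar$ around the critical point), consistent with how such Gaussian integrals are treated later in Section~\ref{sec:Gaussian}.

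I expect the main obstacle to be making precise the sense in which these integrals converge or are defined --- they are really formal Gaussian (stationary-phase) integrals, and the bookkeeping of the half-integer powers of $\hbar$, the factor of $\ii$, and the orientation/branch of the square root $\sqrt{2\pi\hbar}$ must be done consistently so that the two transforms are genuine two-sided inverses and not merely inverse up to a sign or constant. A secondary technical point is justifying the interchange of the order of integration and the change of variables $\chi \mapsto y(\chi)$ near points where $y'$ vanishes; but since the statement is local near the diagonal and we only need it at the level of formal expansions, this can be handled by working in a coordinate where the relevant map is a local isomorphism, deferring the general Gaussian-integral formalism to Section~\ref{sec:Gaussian}.
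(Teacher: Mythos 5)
Your proposal is correct and matches the paper's proof: the paper likewise composes the two transforms into a double integral, simplifies the exponent by integration by parts, changes variables via $y(\zeta)=y(z)+\sqrt{\hbar}\,\phi$ and $x(\chi)=x(z)+\sqrt{\hbar}\,\psi$ (absorbing the prefactors $y'$, $x'$), completes the square so the quadratic part becomes the cross term $-\phi\eta$, and then invokes the formal rule $\tfrac{\ii}{2\pi}\iint \xi^k\zeta^\ell e^{-\xi\zeta}\,d\xi\,d\zeta=\delta_{k\ell}k!$ to localize the remaining factor at $\phi=0$, recovering $f^\vee(z)$. The only minor imprecision is that the $\delta$-like localization forces the \emph{inner} integration variable (the $\zeta$ of~\eqref{eq:etr}) to equal $z$ rather than the outer $\chi$; otherwise your outline is exactly the argument the paper gives.
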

A proof of Proposition \ref{prop:identity-Gauss1} is given in Section \ref{sec:FormalGaussianIntegral}.

\begin{remark}
All integrals of this type are understood purely formally in the sense of asymptotic expansions for small absolute value of $\hbar$ near the critical points $\chi=z$. With this convention the expressions become formal Gaussian integrals and the coefficient of each power of~$\hbar$ of this integral is a finite order differential operator applied to~$f^\vee$ or $f$, respectively. We refer to Section~\ref{sec:Gaussian} for the precise definition and examples of how one can manipulate with these integrals within the framework given by this formal meaning.
\end{remark}

\begin{remark}
	We treat the functions in \eqref{eq:etr},\eqref{eq:etri} as symbolic expressions, c.f. \eqref{eq:example f}.
\end{remark}

\begin{remark}
Integral transformations considered in this section are inspired by and closely related to the Kontsevich matrix model and its generalizations \cite{Konts,KMMMZ,AdlervM,Kharchev-Marshakov,MiMoSeff96,AAH3}. Relation to matrix models
in more detail will be considered elsewhere.
\end{remark}

Proposition \ref{prop:identity-Gauss1} describes an invertible transformation for arbitrary functions $x(z)$ and $y(z)$ (there are still some restrictions imposed by Gaussian integrals, see Remark \ref{Rmk6.1} below). This type of transformation connects the Baker--Akhiezer kernels on the two sides of the $x-y$ duality.

\begin{theorem} \label{thm:transformation-KKernel} 
	The Baker--Akhiezer kernels $\bK$ and $\bK^\vee$ related by the $x-y$ swap are expressed in terms of one another by the following double  integrals:
	\begin{align}\label{eq:xyK-shortform}
		K(z_1,z_2)&=\tfrac{-\ii}{2\pi\hbar}
		\iint
		K^\vee(\chi_1,\chi_2)\;
		y'(\chi_1)y'(\chi_2)\, d\chi_1d\chi_2\times
		\\ \notag &\hskip3.0cm
		e^{-\frac{1}{\hbar}\big(x(z_2)(y(z_2)-y(\chi_1))+\int_{z_2}^{\chi_1} xdy\big)}e^{\frac{1}{\hbar}\big(x(z_1)(y(z_1)-y(\chi_2))+\int_{z_1}^{\chi_2} xdy\big)};
		\\ \label{eq:xyKvee-shortform}
	K^\vee(z_1,z_2)&=\tfrac{-\ii}{2\pi\hbar}
	\iint
	K(\chi_1,\chi_2)\;
	x'(\chi_1)x'(\chi_2) d\chi_1d\chi_2\times
	\\\notag &\hskip3.0cm
	e^{-\frac{1}{\hbar}\big(y(z_2)(x(z_2)-x(\chi_1))+\int_{z_2}^{\chi_1} ydx\big)}
	e^{\frac{1}{\hbar}\big(y(z_1)(x(z_1)-x(\chi_2))+\int_{z_1}^{\chi_2} ydx\big)}.
\end{align}
Here $\bK(z_1,z_2)=K(z_1,z_2)\sqrt{dx_1dx_2}$, $\bK^\vee(z_1,z_2)=K^\vee(z_1,z_2)\sqrt{dy_1dy_2}$, $x'=\partial_z x$, $y'=\partial_z y$.
\end{theorem}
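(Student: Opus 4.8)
The plan is to establish one of the two displayed identities, say~\eqref{eq:xyKvee-shortform} expressing $\bK^\vee$ through $\bK$; the other then follows by running the same argument with the roles of $x$ and $y$ interchanged, and, alternatively, by observing that the two exponential factors in~\eqref{eq:xyKvee-shortform} are, in the two tensor slots, the integral kernel of the transform~\eqref{eq:etri} and of its formal transpose, so that (since~\eqref{eq:etr} and~\eqref{eq:etri} are mutually inverse by Proposition~\ref{prop:identity-Gauss1}) applying the inverse transforms slot-wise turns~\eqref{eq:xyKvee-shortform} into~\eqref{eq:xyK-shortform}. Throughout one assumes $\{\omega^{(g)}_n\}$ is KP integrable --- there is nothing to prove otherwise --- so that $\{\omega^{\vee,(g)}_n\}$ is KP integrable by Theorem~\ref{thm:KP-duality}, both $\bK$ and $\bK^\vee$ are well defined as $\hbar$-series of meromorphic bi-half-differentials near the diagonal by Proposition~\ref{prop:bK-global}, and the determinantal formulas~\eqref{eq:omega-kappa-relation} hold for both systems. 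The singular part $1/(x_1-x_2)$ of $K$ is carried along separately; its transformation is exactly the single-variable content of Proposition~\ref{prop:identity-Gauss1}.

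To prove~\eqref{eq:xyKvee-shortform} I would feed the determinantal expansion of $\{\omega^{(g)}_n\}$ into the $x-y$ swap formula~\eqref{eq:MainFormulaSimple}. Each multiedge $e$ of index $|e|$ carries $\sum_{\tilde g}\hbar^{2\tilde g}\tilde\omega^{(\tilde g)}_{|e|}/\prod_j d\tilde x_j$, which by~\eqref{eq:omega-kappa-relation} (and by the $W_1$-formula from~\eqref{eq:W-K-relation} when $|e|=1$) is a finite sum over cyclic orderings of its legs of products of $K$'s, up to the $1/(\tilde x_i-\tilde x_j)^2$ term when $|e|=2$. After this substitution, the sum over bipartite graphs $\Gamma$ together with the choice of a cyclic order on every multiedge reorganizes into a sum over ``ribbon-type'' decorations of the $n$ labeled vertices, and the claim to verify is that, once the per-vertex operators $\sum_{r}\partial_{y}^{r}[u^{r}]\,\frac{dx}{dy}\frac{e^{-u(\cS(u\hbar\partial_{x})-1)y}}{u}$ and the per-leg operators $\tilde u\,\cS(\hbar\tilde u\partial_{\tilde x})$ are applied, every internal $[u^r]$, $\partial_y$ and $\cS$ operation collapses into a one-dimensional Gaussian integral and the total equals $(-1)^{n-1}\sum_{\sigma\in C_n}\prod_i K^\vee(y_i,y_{\sigma(i)})$ with $K^\vee(y_i,y_j)$ the double integral of $K$ displayed in~\eqref{eq:xyKvee-shortform}. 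Concretely, one checks that a single arc of the ribbon decoration, after all operators attached to it are evaluated, produces precisely the factor $\tfrac{-\ii}{2\pi\hbar}\iint K(\chi_1,\chi_2)\,x'(\chi_1)x'(\chi_2)\,e^{\frac1\hbar(\cdots)}\,d\chi_1 d\chi_2$; the bookkeeping here is that the boundary term $y(z)(x(z)-x(\chi))$ and the one-form integral $\int_z^\chi y\,dx$ in~\eqref{eq:etri} are assembled out of the vertex exponential $e^{-u(\cS(u\hbar\partial_x)-1)y}$ and the edge $\cS$-factors. All of this is done with the formal Gaussian-integral calculus of Section~\ref{sec:Gaussian}.

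A complementary, more conceptual viewpoint that organizes this computation is that the Baker--Akhiezer kernel is the reproducing bilinear kernel of the corresponding point of the Sato--Wilson Grassmannian, $K(z_1,z_2)=\sum_i \Phi_i(z_1)\Phi^{\dagger}_i(z_2)$ for a normalized basis $\{\Phi_i\}$ of the half-plane and a suitable dual family $\{\Phi^{\dagger}_i\}$. When there is a regular point at which $x$ has a simple zero and $y$ a simple pole, the $x-y$ swap is implemented by the $\widehat{GL}_\infty$-operator $\calD_{1+\theta}$, and what one must show is that conjugation by the corresponding group element carries each $\Phi_i$ to $\Phi^\vee_i$ by exactly the single-variable transform~\eqref{eq:etr} (and $\Phi^{\dagger}_i$ by its adjoint~\eqref{eq:etri}); summing the bilinear expansion slot-wise then produces the double integrals. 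The general case reduces to this: by Theorems~\ref{thm:KPglobal} and~\ref{thm:KP-duality} it is enough to verify~\eqref{eq:xyKvee-shortform} as an identity of formal expansions near a single conveniently chosen regular point, and the specialization of the basis-vector statement to the trivial input is recorded as Corollary~\ref{cor:Phi-TirivalDual}.

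The main obstacle, common to both routes, is the single resummation identity: that the intricate graph sum of~\eqref{eq:MainFormulaSimple} with its $\cS$-decorated vertices and edges, once the edges are put in determinantal form, telescopes to nested one-dimensional Gaussian integrals and reproduces the determinantal formula for $\{\omega^{\vee,(g)}_n\}$ with the kernel $K^\vee$ from~\eqref{eq:xyKvee-shortform} --- equivalently, that the $\widehat{GL}_\infty$-transformation behind the $x-y$ swap acts on Grassmannian basis vectors by exactly~\eqref{eq:etr}. Once this identity is in hand (it is essentially the content of Sections~\ref{sec:Gaussian} and~\ref{sec:ReformulationXY-duality}), the remaining steps --- the reduction to one direction, the separate treatment of the $1/(x_1-x_2)$ singular part, and the final matching with the two displayed formulas --- are routine.
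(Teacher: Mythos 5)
There is a genuine gap: the entire mathematical content of the theorem is concentrated in what you call ``the single resummation identity,'' and your proposal asserts it rather than proves it. Saying that the graph sum of~\eqref{eq:MainFormulaSimple}, after substituting the determinantal expansions, ``telescopes to nested one-dimensional Gaussian integrals'' and that the boundary terms $y(z)(x(z)-x(\chi))$ and $\int_z^\chi y\,dx$ ``are assembled out of'' the vertex and edge $\cS$-factors is a statement of the goal, not an argument; no mechanism is given for why the sum over bipartite graphs decorated with cyclic orders collapses to a sum over $n$-cycles, nor for how the per-vertex operator $\sum_r\partial_y^r[u^r]$ becomes the claimed double integral. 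The paper's route is entirely different and deliberately avoids this combinatorics: one observes that $\Omega_1(w,\bar w)=K(w,\bar w)$ \emph{by construction} (the one-vertex graph sum in~\eqref{eq:def-Omega} exponentiates to exactly~\eqref{eq:bK-def}), so Theorem~\ref{thm:transformation-KKernel} is literally the $n=1$ case of Theorem~\ref{thm:transformation-Omega}, which in turn is proved by the commutative-diagram argument of Proposition~\ref{prop:DiagfF} (resting on Lemma~\ref{lem:key} and changes of variables in formal Gaussian integrals) together with the $\bW\leftrightarrow\bW^\vee$ relations imported from the earlier work. In particular no KP integrability and no determinantal formulas enter the proof at all.

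Two further problems with the scaffolding around the missing step. First, invoking Theorem~\ref{thm:KP-duality} to conclude that $\{\omega^{\vee,(g)}_n\}$ is KP integrable is circular relative to the paper's logic: Theorem~\ref{thm:KP-duality} is itself deduced (via Theorem~\ref{thm:KPintegr-extended}) from Theorem~\ref{thm:transformation-KKernel}. Second, the ``more conceptual'' route through $\calD_{1+\theta}$ requires a regular point at which $x$ has a simple zero and $y$ a simple pole with $xy\to1$; such a point need not exist, and the paper explicitly flags its absence as the case in which the statement becomes deep. Your proposed reduction ``to a conveniently chosen regular point'' only handles the passage from a local to a global identity, not the existence of a point where the $\widehat{GL}_\infty$ description applies, and even granting such a point, the claim that $\calD_{1+\theta}$ acts on the basis vectors by the transform~\eqref{eq:etr} is again the substance of the theorem, left unproved.
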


The meaning of the involved integrals is purely formal, similarly to those of Proposition~\ref{prop:identity-Gauss1}.
We consider the integrals as asymptotic expansions in $\sqrt{\hbar}$ near the critical point, which is $\chi_1=z_2$, $\chi_2=z_1$. The contours are assumed to be chosen in such a way that the corresponding symbolic expressions are determined by the order of zeros of $(\chi_1-z_2)$ and $(\chi_2-z_1)$ near the critical point, see details in Section~\ref{sec:Gaussian}. Theorem~\ref{thm:transformation-KKernel}  is proved in Section~\ref{sec:extended-x-y-swap}.

\begin{remark} The fact that transformations given by~\eqref{eq:xyK-shortform} and~\eqref{eq:xyKvee-shortform} are inverse to each other is a direct corollary of Proposition~\ref{prop:identity-Gauss1}.
\end{remark}

\subsection{\texorpdfstring{$x-y$}{x-y} dual of a trivial case} An important special case of Theorem~\ref{thm:transformation-KKernel} concerns the situation when the $x-y$ dual side is known to be trivial.

We assume that $\omega^{\vee,(2)}_0 = \tfrac{dz_1dz_2}{(z_1-z_2)^2}$ and for all $(g,n)\not=(0,2)$ $\omega^{\vee,(g)}_n=0$. Here $z$ is a meromorphic function on $\Sigma$ that can serve as a coordinate at the point $o\in \Sigma$. In this case $Z^\vee_{o,z} = 1$ and  under the $x-y$ duality relation we have the following immediate corollary of Theorem~\ref{thm:transformation-KKernel}.

\begin{corollary} \label{cor:transformation-KKernel} The Baker--Akhiezer kernel is given by the double integral:
	\begin{align}\label{eq:xyK-shortform-cor}
		\frac{\bK(z_1,z_2)}{\sqrt{dz_1dz_2}}&=\tfrac{-\ii \sqrt{x'(z_1)x'(z_2)}}{2\pi\hbar}
		\iint
		\frac{\sqrt{y'(\chi_1)y'(\chi_2)}}{\chi_1-\chi_2}\, d\chi_1d\chi_2\times
		\\ \notag &\hskip3.0cm
		e^{-\frac{1}{\hbar}\big(x(z_2)(y(z_2)-y(\chi_1))+\int_{z_2}^{\chi_1} xdy\big)}e^{\frac{1}{\hbar}\big(x(z_1)(y(z_1)-y(\chi_2))+\int_{z_1}^{\chi_2} xdy\big)}.
\end{align}
\end{corollary}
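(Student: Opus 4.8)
The plan is to derive Corollary~\ref{cor:transformation-KKernel} as a direct specialization of Theorem~\ref{thm:transformation-KKernel}, so essentially no new work is needed beyond bookkeeping of the trivial dual data. First I would record that the hypotheses $\omega^{\vee,(0)}_2 = \tfrac{dz_1dz_2}{(z_1-z_2)^2}$ and $\omega^{\vee,(g)}_n = 0$ for all $(g,n)\neq(0,2)$ (together with the convention $\omega^{(0)}_1=0$ from~\eqref{eq:assume-omega01-equal0}) mean that the dual potential is $F^\vee_{o,z}=0$, hence $Z^\vee_{o,z}=\exp(F^\vee_{o,z})=1$. This is a (trivially) KP integrable system, so Theorem~\ref{thm:transformation-KKernel} applies and $\{\omega^{(g)}_n\}$ obtained by the $y$--$x$ swap is KP integrable by Theorem~\ref{thm:KP-duality}; in particular its Baker--Akhiezer kernel $\bK$ is defined and given by~\eqref{eq:xyK-shortform}.

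Next I would compute the dual kernel $\bK^\vee$ explicitly from its defining formula~\eqref{eq:bK-def} applied to the trivial system in the coordinate $z$. Since every $\omega^{\vee,(g)}_n$ with $2g-2+n>0$ vanishes and $\omega^{\vee,(0)}_2 - \omega^{\vee,(0),\mathrm{sing}}_2 = 0$ with the regularizing term taken in the coordinate $z$, the entire exponent in~\eqref{eq:bK-def} is zero, so
\begin{equation*}
\bK^\vee(z_1,z_2) = \frac{\sqrt{dz_1\,dz_2}}{z_1-z_2}, \qquad\text{i.e.}\qquad K^\vee(z_1,z_2) = \frac{1}{z_1-z_2},
\end{equation*}
where here $K^\vee$ is taken with respect to the coordinate $y$ via $\bK^\vee=K^\vee\sqrt{dy_1dy_2}$; but I must be careful about which coordinate is used. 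In the statement of Theorem~\ref{thm:transformation-KKernel} we have $\bK^\vee(z_1,z_2)=K^\vee(z_1,z_2)\sqrt{dy_1dy_2}$, so $K^\vee = \tfrac{1}{z_1-z_2}\cdot\tfrac{\sqrt{dz_1dz_2}}{\sqrt{dy_1dy_2}} = \tfrac{1}{z_1-z_2}\sqrt{\tfrac{dz_1}{dy_1}\tfrac{dz_2}{dy_2}} = \tfrac{1}{z_1-z_2}\big(y'(z_1)y'(z_2)\big)^{-1/2}$, where $y'=\partial_z y$. Here the roles of $x$ and $y$ in Theorem~\ref{thm:transformation-KKernel} are played by the functions of the problem: the dual (trivial) side carries the coordinate interpretation via $y$ exactly as in the discussion preceding the Corollary, where $Z^\vee_{o,z}=1$ means the $y$--$x$ swap sends the trivial input (built from $-xdy$, $dy_1dy_2/(y_1-y_2)^2$) to $\{\omega^{(g)}_n\}$.

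Then I would substitute this expression for $K^\vee$ into formula~\eqref{eq:xyK-shortform}:
\begin{equation*}
K(z_1,z_2)=\tfrac{-\ii}{2\pi\hbar}\iint \frac{\big(y'(\chi_1)y'(\chi_2)\big)^{-1/2}}{\chi_1-\chi_2}\, y'(\chi_1)y'(\chi_2)\, d\chi_1d\chi_2 \times (\text{exponential factors}),
\end{equation*}
and the prefactor $y'(\chi_1)y'(\chi_2)\cdot\big(y'(\chi_1)y'(\chi_2)\big)^{-1/2} = \big(y'(\chi_1)y'(\chi_2)\big)^{1/2} = \sqrt{y'(\chi_1)y'(\chi_2)}$ combines to exactly the numerator appearing in~\eqref{eq:xyK-shortform-cor}. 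Passing from $K$ to $\bK/\sqrt{dz_1dz_2}$ on the left requires multiplying by $\sqrt{dx_1dx_2}/\sqrt{dz_1dz_2} = \sqrt{x'(z_1)x'(z_2)}$, which produces the prefactor $\tfrac{-\ii\sqrt{x'(z_1)x'(z_2)}}{2\pi\hbar}$ in~\eqref{eq:xyK-shortform-cor}. The exponential factors are unchanged and match verbatim. This completes the derivation.

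The only genuinely delicate point — and the step I expect to require the most care — is the bookkeeping of half-differentials and the identification of which coordinate ($x$, $y$, or $z$) each kernel is normalized against, so that the square-root Jacobian factors $\sqrt{x'}$ and $\sqrt{y'}$ end up in exactly the right places; a sign or a reciprocal error here would be easy to make. One should also note that the trivial system has $z=\infty$ as a regular point (with $y=z^{-1}$ having the appropriate simple pole there), so that the formal Gaussian integrals in Theorem~\ref{thm:transformation-KKernel} are meaningful with critical point $\chi_1=z_2$, $\chi_2=z_1$; this is inherited directly from the hypotheses of that theorem and needs no separate verification. Everything else is a formal substitution, so the proof is short.
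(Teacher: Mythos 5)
Your derivation is correct and is exactly the substitution the paper intends (the paper states the result as an ``immediate corollary'' of Theorem~\ref{thm:transformation-KKernel} without writing out the computation): you compute $\bK^\vee$ from \eqref{eq:bK-def} for the trivial dual data in the coordinate $z$, convert to $K^\vee$ against $\sqrt{dy_1dy_2}$ to pick up the factor $\big(y'(\chi_1)y'(\chi_2)\big)^{-1/2}$, and substitute into \eqref{eq:xyK-shortform}, with the Jacobian $\sqrt{x'(z_1)x'(z_2)}$ accounting for the passage from $K$ to $\bK/\sqrt{dz_1dz_2}$. Two cosmetic blemishes that do not affect the argument: the displayed claim $K^\vee(z_1,z_2)=1/(z_1-z_2)$ is wrong as written (you correctly retract it in the very next clause), and the closing parenthetical about $z=\infty$ and $y=z^{-1}$ imports specifics of the higher BGW example that are not part of the general statement of the corollary.
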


Note that in the KP integrable case the kernel $\bK$ carries the complete information on the corresponding KP tau functions. Let $(o,z)$ be a regular point for all differentials and a local coordinate at this point, respectively. Then, the semi-infinite plane corresponding to the tau function $Z=Z_{o,z}$ is spanned by the Taylor coefficients of the expansion of $\bK(z_1,z_2)/\sqrt{dz_1dz_2}$ in~$z_2$. More explicitly, the Baker--Akhiezer kernel can be seen as a generating series for a basis  in the semi-infinite plane determining $Z$. Namely, choose an arbitrary basis $\Phi^*_{1-j}(z)$, $j=1,2,\dots$ in the space of regular power series in~$z$ and expand $\bK$ in this basis with respect to $z_2$,
\begin{equation}\label{eq:K-Phi-Phi-star}
\bK(z_1,z_2)=\sum_{i=1}^\infty \Phi_i(z_1)\Phi^*_{1-i}(z_2)\sqrt{dz_1dz_2}.
\end{equation}
Then the coefficients $\Phi_i$ of this expansion generate the corresponding semi-infinite plane as well as the corresponding decomposable vector $v=\Phi_1\wedge\Phi_2\wedge\dots$. A different basis $\Phi^*_{1-j}$ in the space of regular series leads to a different collection of vectors $\Phi_i$ but the linear span of these vectors is the same. Conversely, any basis $\Phi_i$ in the semi-infinite plane associated with the tau function can be realized by an expansion of the form~\eqref{eq:K-Phi-Phi-star}. Namely, the corresponding basis $\Phi^*_{1-i}$ in the space of regular series is orthogonal dual to the basis of functions~$\Phi_i$ in the following sense
\begin{align}\label{eq:ortho}
	\res_{z=0}\Phi_i (z) \Phi^*_j (z) dz = \delta_{i+j,1}.
\end{align}
For convenience, we will always assume that the chosen basis in the semi-infinite plane satisfies $\Phi_i (z) = z^{-i} (1+O(z))$ which is equivalent to the requirement  $\Phi^*_{1-i} (z) = z^{i-1} (1+O(z))$.


Then,  in the setting of Corollary~\ref{cor:transformation-KKernel},  and under an extra assumption of the local behavior of $x'y'$ at the point $o$, we have the following explicit formula for $\Phi_i$, $i\geq 1$.

\begin{corollary} \label{cor:Phi-TirivalDual} In the setting of Corollary~\ref{cor:transformation-KKernel}, assume $x'y'$ has a pole of degree $\geq 3$ at $z=0$. For the $x-y$ dual system $\{\omega^{(g)}_n\}$ the corresponding partition function $Z_{o,z}$ is a KP tau function corresponding to the semi-infinite plane spanned by
	\begin{align}	\label{eq:xyPhi-TrivialDual}
		\Phi_i(z)&=\sqrt{\tfrac{-x'(z)}{2\pi\hbar}}
		\int
		\chi^{-i}\;
		\sqrt{{y'(\chi)}} \;
		e^{\frac{1}{\hbar}\big(x(z)(y(z)-y(\chi))+\int_{z}^{\chi} xdy\big)}d\chi
,&i\ge 1.
	\end{align}
\end{corollary}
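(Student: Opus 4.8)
The plan is to start from the double-integral formula for the Baker--Akhiezer kernel in Corollary~\ref{cor:transformation-KKernel} and extract from it a factorization of the form~\eqref{eq:K-Phi-Phi-star}, from which the $\Phi_i$ can be read off. Concretely, I would rewrite the kernel of Corollary~\ref{cor:transformation-KKernel} by separating the $z_1$-dependence from the $z_2$-dependence. The integrand contains the factor $\frac{1}{\chi_1-\chi_2}$, which is the only obstacle to a complete separation of the two integration variables $\chi_1,\chi_2$; everything else already factors as a product of a $(z_1,\chi_2)$-piece and a $(z_2,\chi_1)$-piece times $\sqrt{x'(z_1)x'(z_2)}$ and $\sqrt{y'(\chi_1)y'(\chi_2)}$. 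So the first step is to expand $\frac{1}{\chi_1-\chi_2}$ in a way compatible with the formal/asymptotic meaning of these Gaussian integrals (the critical point being $\chi_1=z_2$, $\chi_2=z_1$): writing $\frac{1}{\chi_1-\chi_2}=\sum_{i\geq 1}\chi_2^{i-1}\chi_1^{-i}$ in the appropriate region, the double integral becomes $\sum_{i\geq1}$ of a product of two single integrals, one in $\chi_1$ depending on $z_2$ and one in $\chi_2$ depending on $z_1$.

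The $\chi_2$-integral depending on $z_1$ is, up to the prefactor $\sqrt{-x'(z_1)/(2\pi\hbar)}$ and the sign/normalization, exactly the right-hand side of~\eqref{eq:xyPhi-TrivialDual}, i.e.\ it equals $\Phi_i(z_1)/\sqrt{dz_1}$ with the stated definition; the $\chi_1$-integral depending on $z_2$ produces, again up to normalization, a regular power series in $z_2$, which plays the role of $\Phi^*_{1-i}(z_2)$. Matching this against the general principle recalled just before the corollary — that the Taylor coefficients in $z_2$ of $\bK(z_1,z_2)/\sqrt{dz_1 dz_2}$ span the semi-infinite plane of $Z_{o,z}$, and that any expansion of the form~\eqref{eq:K-Phi-Phi-star} produces a basis of that plane — gives the claim, provided two things are checked: (i) that the functions $\Phi_i$ so obtained really have the normalized asymptotics $\Phi_i(z)=z^{-i}(1+O(z))$ as $z\to o$, which is where the hypothesis that $x'y'$ has a pole of order $\geq 3$ at $z=0$ enters (it controls the leading saddle-point behavior of the integral so that the $\chi^{-i}$ insertion contributes exactly $z^{-i}(1+\cdots)$ and no lower-order terms mix in); and (ii) that the $\Phi_i$ are linearly independent, which again follows from (i) since they have distinct leading orders.

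I expect the main obstacle to be step~(i): carrying out the formal saddle-point / stationary-phase analysis of the single integral in~\eqref{eq:xyPhi-TrivialDual} carefully enough to verify the precise normalization $\Phi_i(z)=z^{-i}(1+O(z))$, including tracking the factor $\sqrt{-x'(z)/(2\pi\hbar)}$ against the Gaussian normalization $\sqrt{2\pi\hbar}$ coming from the integral, and confirming that the pole-order-$\geq 3$ hypothesis is exactly what makes the insertion $\chi^{-i}$ evaluate to the right leading power of $z$ without spurious corrections. This is essentially a computation within the formal Gaussian-integral calculus developed in Section~\ref{sec:Gaussian}, so it is routine in principle but delicate in the bookkeeping of powers of $\hbar$ and of $z$. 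A secondary, more conceptual point to be careful about is the legitimacy of expanding $\frac{1}{\chi_1-\chi_2}=\sum_{i\geq1}\chi_2^{i-1}\chi_1^{-i}$ term-by-term inside the formal double integral and exchanging the sum with the (asymptotic) integration; this should be justified by appealing to the order-of-vanishing conventions for the contours stated after Theorem~\ref{thm:transformation-KKernel}, which make the double integral literally a formal bilinear expression in the two separate single-integral transforms, so that the geometric-series expansion is meaningful coefficient-by-coefficient in $\hbar$.
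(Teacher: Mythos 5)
Your proposal follows essentially the same route as the paper: expand $\frac{1}{\chi_1-\chi_2}$ as a geometric series inside the double integral of Corollary~\ref{cor:transformation-KKernel}, factor the result into a sum $\sum_i\Phi_i(z_1)\Phi^*_{1-i}(z_2)$, and use the hypothesis on the pole order of $x'y'$ to verify via the formal Gaussian calculus that $\Phi_i(z)=z^{-i}(1+O(z))$ and $\Phi^*_{1-i}(z)=z^{i-1}(1+O(z))$, so that \eqref{eq:K-Phi-Phi-star} applies. One bookkeeping slip: since $\chi_2$ is the variable paired with $z_1$ (via the positive-sign exponential that appears in \eqref{eq:xyPhi-TrivialDual}), the expansion must put the singular powers on $\chi_2$, i.e.\ $\frac{1}{\chi_1-\chi_2}=-\sum_{i\ge1}\chi_1^{i-1}\chi_2^{-i}$ with the sign absorbed into $\sqrt{-x'}$, whereas your expansion $\sum_{i\ge1}\chi_2^{i-1}\chi_1^{-i}$ attaches $\chi_2^{i-1}$ to the $z_1$-integral and therefore cannot literally reproduce the $\chi^{-i}$ insertion of \eqref{eq:xyPhi-TrivialDual}; with the roles of the two factors corrected, the argument is exactly the paper's.
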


\begin{proof}
Indeed, rewrite an expansion of Equation~\eqref{eq:xyK-shortform-cor} as
\begin{align}\label{eq:xyK-shortform-cor-expanded}
		\frac{\bK(z_1,z_2)}{\sqrt{dz_1dz_2}}&=\tfrac{\sqrt{-x'(z_1)x'(z_2)}}{2\pi\hbar}
		\iint
		\sum_{i=1}^\infty \frac{\chi_1^{i-1}}{\chi_2^i}{\sqrt{y'(\chi_1)y'(\chi_2)}}\, d\chi_1d\chi_2\times
		\\ \notag &\hskip3.0cm
		e^{-\frac{1}{\hbar}\big(x(z_2)(y(z_2)-y(\chi_1))+\int_{z_2}^{\chi_1} xdy\big)}e^{\frac{1}{\hbar}\big(x(z_1)(y(z_1)-y(\chi_2))+\int_{z_1}^{\chi_2} xdy\big)}
		\\ \notag
&=\sum_{i=1}^\infty \Phi_i(z_1)\Phi^*_{1-i}(z_2),%
	\end{align}
where $\Phi_i$ is given by~\eqref{eq:xyPhi-TrivialDual} and
\begin{equation}
\Phi^*_{1-i}(z)=\sqrt{\tfrac{x'(z)}{2\pi\hbar}}
		\int  \chi^{i-1}\sqrt{y'(\chi)}\, d\chi
		e^{-\frac{1}{\hbar}\big(x(z)(y(z)-y(\chi))+\int_{z}^{\chi} xdy\big)}.
\end{equation}
The condition that $x'y'$ has a pole of degree $\geq  3$ at $z=0$ and the analysis of the corresponding Gaussian integrals imply that the second factor $\Phi^*_{1-i}(z_2)$ in the sum is regular in~$z_2$ and has the form $z_2^{i-1} (1+O(z_2))$. It follows that the first factor $\Phi_i(z_1)$ belongs to the corresponding semi-infinite plane and the same analysis shows that it has the form $z_1^{-j}(1+O(z_1))$ as required.
\end{proof}

\subsection{\texorpdfstring{$x-y$}{x--y} duality for dual points in the Sato--Wilson Grassmannian}

The results of Corollary~\ref{cor:Phi-TirivalDual} can be further extended without an assumption on the triviality of the dual side in terms of the so-called dual points on the Sato--Wilson Grassmannian. It is merely a different point of view on Theorem~\ref{thm:transformation-KKernel}.

Along with the kernel $\bK(z_1,z_2)$ consider also the dual kernel $\bK^*(z_1,z_2)=-\bK(z_2,z_1)$. In the expansion at a regular point~$o$ in the local coordinate~$z$ it corresponds to the dual tau function $Z^*(p_1,p_2,\dots)=Z(-p_1,-p_2,\dots)$. Expanding this kernel in $z_2$ we get
\begin{equation}\label{eq:K-star-Phi-star-Phi}
-\bK(z_2,z_1)=\bK^*(z_1,z_2)= \sum_{i=1}^\infty \Phi^*_i(z_1) \Phi_{1-i}(z_2)\sqrt{dz_1dz_2},
\end{equation}
where the functions $\Phi^*_i$, $i\ge1$ generate a basis in the semi-infinite plane corresponding to the tau function~$Z^*$ and $\Phi_{1-i}$, $i\ge1$, is the orthogonal dual basis in the space of regular power series. Remark that the choices of bases for the expansions~\eqref{eq:K-Phi-Phi-star} and~\eqref{eq:K-star-Phi-star-Phi} are independent. But for any choices we get a basis $\Phi_i$, $i\in\Z$ in the space of Laurent series in~$z$ and the orthogonal dual in the sense of~\eqref{eq:ortho} basis $\Phi_i^*$, $i\in\Z$, both satisfying $\Phi_i(z)=z^{-i}(1+O(z))$, $\Phi_i^*(z)=z^{-i}(1+O(z))$.

With this notation, the transformation rule for the Baker--Akhiezer kernel given in Theorem~\ref{thm:transformation-KKernel} has the following corollary.

\begin{corollary} \label{cor:phi-phi-phi-phi} Assume $x'y'$ has a pole of degree $\geq 3$ at $z=0$. Let $\Phi_i^{*,\vee}(z)$, $i\ge1$ have the form $z^{-i}(1+O(z))$ and generate the semi-infinite plane corresponding to the tau function $Z^{*,\vee}(p_1,p_2,\dots)=Z^{\vee}(-p_1,-p_2,\dots)$ where $Z^\vee$ is $x-y$ dual to the tau function $Z=Z_{o,z}$. Then the following vectors $\Phi_i$, $i\ge1$, have the form $\Phi_i(z)=z^{-i}(1+O(z))$ and generate the semi-infinite plane corresponding to the tau function~$Z$:
	\begin{align}	\label{eq:xy-for-Phi-dual}
		\Phi_i(z)&=\sqrt{\tfrac{-x'(z)}{2\pi\hbar}}
		\int
		\Phi_i^{*,\vee}(\chi)\;
		\sqrt{{y'(\chi)}} \;
		e^{\frac{1}{\hbar}\big(x(z)(y(z)-y(\chi))+\int_{z}^{\chi} xdy\big)}d\chi , & i\ge 1.
\end{align}
\end{corollary}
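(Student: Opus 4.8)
The plan is to obtain~\eqref{eq:xy-for-Phi-dual} by inserting the basis expansions of the two Baker--Akhiezer kernels into the integral transformation of Theorem~\ref{thm:transformation-KKernel} and separating the resulting double integral. First I would set up the dual side: from the given normalized basis $\{\Phi_i^{*,\vee}\}_{i\ge1}$ of the semi-infinite plane of $Z^{*,\vee}$ and its orthogonal-dual basis $\{\Phi_{1-i}^\vee\}_{i\ge1}$ of regular series in the sense of~\eqref{eq:ortho}, the discussion following Corollary~\ref{cor:transformation-KKernel} gives $\bK^{*,\vee}(z_1,z_2)=\sum_{i\ge1}\Phi_i^{*,\vee}(z_1)\Phi_{1-i}^\vee(z_2)\sqrt{dz_1dz_2}$, and since $\bK^{*,\vee}(z_1,z_2)=-\bK^\vee(z_2,z_1)$ this reads $\bK^\vee(z_1,z_2)/\sqrt{dz_1dz_2}=-\sum_{i\ge1}\Phi_{1-i}^\vee(z_1)\Phi_i^{*,\vee}(z_2)$. (In the trivial-dual case of Corollary~\ref{cor:Phi-TirivalDual} this specializes to $1/(\chi_1-\chi_2)=-\sum_{i\ge1}\chi_1^{i-1}\chi_2^{-i}$, i.e.\ $\Phi_{1-i}^\vee(\chi)=\chi^{i-1}$, $\Phi_i^{*,\vee}(\chi)=\chi^{-i}$.)

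Next I would substitute this expansion into~\eqref{eq:xyK-shortform}, rewritten in the $\sqrt{dz}$-normalization exactly as in the passage from~\eqref{eq:xyK-shortform} to~\eqref{eq:xyK-shortform-cor}, namely $\bK(z_1,z_2)/\sqrt{dz_1dz_2}=\tfrac{-\ii}{2\pi\hbar}\sqrt{x'(z_1)x'(z_2)}\iint\big(\bK^\vee(\chi_1,\chi_2)/\sqrt{d\chi_1d\chi_2}\big)\sqrt{y'(\chi_1)y'(\chi_2)}\,d\chi_1d\chi_2\,e^{-\frac1\hbar(x(z_2)(y(z_2)-y(\chi_1))+\int_{z_2}^{\chi_1}xdy)}e^{\frac1\hbar(x(z_1)(y(z_1)-y(\chi_2))+\int_{z_1}^{\chi_2}xdy)}$. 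The first exponential depends only on $(z_2,\chi_1)$ and the second only on $(z_1,\chi_2)$, so within the $i$-th summand the double integral factors into a product of two single integrals; distributing the prefactor as $\tfrac{\ii}{2\pi\hbar}\sqrt{x'(z_1)x'(z_2)}=\sqrt{\tfrac{-x'(z_1)}{2\pi\hbar}}\cdot\sqrt{\tfrac{x'(z_2)}{2\pi\hbar}}$ (the sign flip coming from the $-\sum$ above) one gets $\bK(z_1,z_2)/\sqrt{dz_1dz_2}=\sum_{i\ge1}\Phi_i(z_1)\Phi_{1-i}^*(z_2)$ with $\Phi_i$ exactly~\eqref{eq:xy-for-Phi-dual} and $\Phi_{1-i}^*(z)=\sqrt{\tfrac{x'(z)}{2\pi\hbar}}\int\Phi_{1-i}^\vee(\chi)\sqrt{y'(\chi)}\,e^{-\frac1\hbar(x(z)(y(z)-y(\chi))+\int_z^\chi xdy)}d\chi$.

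It remains to check that this is the canonical expansion of $\bK$. Expanding $\Phi_{1-i}^\vee(\chi)=\chi^{i-1}(1+O(\chi))$ and $\Phi_i^{*,\vee}(\chi)=\chi^{-i}(1+O(\chi))$ into power series in $\chi$ and interchanging the (locally finite at $z=0$) sums with the formal Gaussian integrals, the required asymptotics follow termwise from the analysis already carried out in the proof of Corollary~\ref{cor:Phi-TirivalDual}: under the hypothesis that $x'y'$ has a pole of order $\ge 3$ at $z=0$, the Gaussian integral against $\chi^{-i}$, $i\ge1$, produces $z^{-i}(1+O(z))$, and against $\chi^{k}$, $k\ge0$, produces $z^{k}(1+O(z))$. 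Summing, $\Phi_{1-i}^*(z)=z^{i-1}(1+O(z))$ is a regular series and $\Phi_i(z)=z^{-i}(1+O(z))$; hence $\{\Phi_{1-i}^*\}_{i\ge1}$ is a normalized topological basis of $\C[[z]]$, and by the discussion following Corollary~\ref{cor:transformation-KKernel} the coefficients $\Phi_i$ of the expansion of $\bK/\sqrt{dz_1dz_2}$ in this basis form the normalized basis of the semi-infinite plane of $Z=Z_{o,z}$, which is the assertion.

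The hard part is this last step: controlling the formal Gaussian integrals near the critical point $\chi=z$ as $z\to0$, where the pole-order hypothesis is essential, since it forces the Gaussian width (of order $\sqrt{\hbar}\,z^{(\text{pole order})/2}$) to be smaller than the distance from $\chi=z$ to the singularity of $\chi^{-i}$ at the origin. This is exactly the analysis of Section~\ref{sec:Gaussian} invoked in Corollary~\ref{cor:Phi-TirivalDual}; the only new input is that the integrand here is a general series $\chi^{-i}(1+O(\chi))$ rather than a single monomial, and its higher-order corrections only contribute to subleading terms in $z$. One should also record that the factorization of the double integral is legitimate in the formal-Gaussian sense, which holds because both the integrand and its critical point $(\chi_1,\chi_2)=(z_2,z_1)$ factorize, in agreement with the remark following Theorem~\ref{thm:transformation-KKernel}.
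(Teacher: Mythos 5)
Your proposal is correct and follows essentially the same route as the paper's proof: expand $\bK^\vee$ via \eqref{eq:K-star-Phi-star-Phi}, substitute into \eqref{eq:xyK-shortform}, factor the double integral into the product of single integrals defining $\Phi_i$ and $\Phi^*_{1-i}$, and use the pole-order hypothesis on $x'y'$ to check that $\Phi_i(z)=z^{-i}(1+O(z))$ and $\Phi^*_{1-i}(z)=z^{i-1}(1+O(z))$. Your sign bookkeeping and the termwise asymptotic check are consistent with (and slightly more detailed than) the argument given in the paper.
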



\begin{proof}
We proceed in the same way as in the proof of Corollary~\ref{cor:Phi-TirivalDual}. Using an expansion~\eqref{eq:K-star-Phi-star-Phi} for the $x-y$ dual kernel, we expand the left hand side of Equation~\eqref{eq:xyK-shortform} as
\begin{align}
	\frac{\bK(z_1,z_2)}{\sqrt{dz_1dz_2}}&=\tfrac{\sqrt{-x'(z_1)x'(z_2)}}{2\pi\hbar}
	\iint\Bigl(\sum_{i=1}^\infty \Phi^{*,\vee}_i(\chi_2)\Phi^{\vee}_{1-i}(\chi_1)\Bigr)
	y'(\chi_1)y'(\chi_2)\, d\chi_1d\chi_2\times
	\\ \notag &\hskip3.0cm
	e^{-\frac{1}{\hbar}\big(x(z_2)(y(z_2)-y(\chi_1))+\int_{z_2}^{\chi_1} xdy\big)}
    e^{\frac{1}{\hbar}\big(x(z_1)(y(z_1)-y(\chi_2))+\int_{z_1}^{\chi_2} xdy\big)}
\\\notag & = \sum_{i=1}^\infty \Phi_i(z_1)\Phi^*_{1-i}(z_2),
\end{align}
where $\Phi_i$ is given by~\eqref{eq:xy-for-Phi-dual} and 
\begin{equation}
\Phi^*_{1-i}(z)=\sqrt{\tfrac{x'(z)}{2\pi\hbar}}
		\int
		\Phi_{1-i}^{\vee}(\chi)\;
		\sqrt{{y'(\chi)}} \;
		e^{-\frac{1}{\hbar}\big(x(z)(y(z)-y(\chi))+\int_{z}^{\chi} xdy\big)}d\chi.
\end{equation}
The condition on the order of pole of $x'y'$ at $z=0$ implies that $\Phi_i$ and $\Phi^*_{1-i}$ have the form $z_1^{-i}(1+O(z_1))$ and $z_2^{i-1}(1+O(z_2))$, respectively, and hence, this expansion of the kernel~$\bK$ provides a basis in the corresponding semi-infinite plane.
\end{proof}

\begin{remark} In the context of the path integral solutions to Douglas equation one considers the so-called $p-q$ duality, which describes a relation between two minimal models coupled to two-dimensional topological gravity, see~\cite{Kharchev-Marshakov}.
In our language it should correspond to the swap of $x$ and $y$ defined to be polynomials of fixed finite degrees $p$ and $q$, respectively. We claim that in these special cases the statements of Corollary 3.17 provide the $p-q$ duality proposed in~\cite{Kharchev-Marshakov}, cf.~Equation~(27) in op. cit.
\end{remark}

\subsection{Topological recursion with non-ramified \texorpdfstring{$y$}{y}}

\label{sec:toporec-nonramified-y}

A typical source of situations covered by Corollaries~\ref{cor:Phi-TirivalDual} and \ref{cor:transformation-KKernel} comes from the theory of topological recursion. Assume $\Sigma=\C P^1$, $z$ is a global coordinate on $\C P^1$, and $y(z)=(az+b)/(cz+d)$ with $ad-bc\neq 0$. Under some conditions on $x$ and $y$ the theory of topological recursion assigns to this input a system of differentials $\{\omega^{(g)}_n\}$. These differentials are regular at the points $o\in \C P^1$ where $dx\not=0$.

\begin{theorem} \label{thm:TR-main} Let $\{\omega^{(g)}_n\}$ be a system of differentials constructed by topological recursion for the input consisting of $\Sigma=\C P^1$ with a global coordinate $z$ and $y(z)=(az+b)/(cz+d)$. Let $z=0$ be a regular point for the system of differentials. Then $Z_{0,z}$ is a KP tau function. Moreover, if $x'y'$ has a pole of degree at least $3$ at $z=0$, then the tau function $Z_{0,z}$ is described by the point of the semi-infinite Grassmannian spanned by $\Phi_i$'s given by Equation~\eqref{eq:xyPhi-TrivialDual}.
\end{theorem}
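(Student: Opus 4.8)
The plan is to reduce the statement to previously established results by exhibiting the topological recursion differentials as the $y-x$ swap of a trivial system, then to apply Theorem~\ref{thm:KP-duality} and Corollary~\ref{cor:Phi-TirivalDual}. First I would normalize the coordinate: since $y(z)=(az+b)/(cz+d)$ with $ad-bc\neq 0$ is a M\"obius transformation, composing $z$ with a linear fractional change of coordinate on $\C P^1$ (which, by the remark after Theorem~\ref{thm:KPglobal}, acts by a Virasoro element and hence preserves KP integrability and the relevant structures) we may assume $y$ is an affine coordinate on $\C P^1$. In that situation the description recalled in the paragraph before Theorem~\ref{thm:r-BGW} applies verbatim: the topological recursion output $\{\omega^{(g)}_n\}$ is obtained from the trivial input $\omega^{(0),\vee}_1=-x\,dy$, $\omega^{(0),\vee}_2=dy_1dy_2/(y_1-y_2)^2$, $\omega^{(g),\vee}_n=0$ for $2g-2+n>0$, by the $y-x$ swap relation \eqref{eq:MainFormulaSimple} with the roles of $x$ and $y$ interchanged. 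Equivalently, the dual system $\{\omega^{\vee,(g)}_n\}$ of $\{\omega^{(g)}_n\}$ is exactly this trivial system, so $Z^\vee_{o,y}=1$ in the notation of Section~\ref{sec:intro2KP}.

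Next I would invoke Theorem~\ref{thm:KP-duality}: the trivial system is manifestly KP integrable (its only nonzero differential beyond $\omega^{(0),\vee}_1$ is $\omega^{(0),\vee}_2=dy_1dy_2/(y_1-y_2)^2$, which is the $(0,2)$ form in the normalized coordinate and yields the partition function $1$), hence $\{\omega^{(g)}_n\}$ is KP integrable in the sense of Theorem~\ref{thm:KPglobal}. By hypothesis $z=0$ is a regular point, so Theorem~\ref{thm:KPglobal} guarantees that $Z_{0,z}$ is a KP tau function; this proves the first assertion. For the second assertion, observe that the situation is precisely the setting of Corollary~\ref{cor:transformation-KKernel}: the dual system satisfies $\omega^{\vee,(2)}_0=dy_1dy_2/(y_1-y_2)^2$ (after normalization of $y$) and vanishes otherwise. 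Under the extra hypothesis that $x'y'$ has a pole of degree $\geq 3$ at $z=0$, Corollary~\ref{cor:Phi-TirivalDual} applies and yields that $Z_{0,z}$ corresponds to the semi-infinite plane spanned by the vectors $\Phi_i$ of Equation~\eqref{eq:xyPhi-TrivialDual}.

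A few points require care. The reduction to $y$ affine must be checked to be compatible with the hypothesis on $x'y'$: a M\"obius change $z\mapsto \tilde z$ rescales both $x'$ and $y'$ by the Jacobian of the change, so $x'y'$ picks up the square of a nonvanishing-at-$0$ factor and the order of its pole at $z=0$ is unchanged; similarly the normalization of $y$ from $(az+b)/(cz+d)$ to an affine function is a post-composition that does not move the point $z=0$ provided $c\cdot 0+d\neq 0$, and if $d=0$ one first conjugates by a linear fractional transformation moving the pole of $y$ away from the base point, again without affecting regularity of the $\omega^{(g)}_n$ at the base point or the pole order of $x'y'$ there. The remaining subtlety is the identification of the topological recursion output with the $y-x$ swap of the trivial system, which for $\Sigma=\C P^1$ with $y$ affine is \cite[Section 7]{alexandrov2022universal}; I would cite this directly rather than reprove it. The main obstacle, and the place where one must be slightly delicate, is precisely this bookkeeping of the coordinate normalizations and of which side of the swap is ``trivial'' --- once that is set up correctly the theorem is an immediate consequence of Theorem~\ref{thm:KP-duality}, Theorem~\ref{thm:KPglobal}, and Corollary~\ref{cor:Phi-TirivalDual}, with no new computation needed.
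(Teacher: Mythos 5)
Your overall route is the same as the paper's: identify the topological recursion output as the $y-x$ swap of the trivial system, deduce KP integrability from Theorem~\ref{thm:KP-duality} (resp.\ Theorem~\ref{thm:KPglobal} for the choice of base point), and read off the Grassmannian point from Corollary~\ref{cor:Phi-TirivalDual}. The M\"obius normalization of $y$ is harmless but also not really needed, since the relevant fact is simply that $dy$ has no zeros for a linear fractional $y$, so the dual system is trivial; the cited identification already covers this case directly.

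There is, however, one genuine gap: the identification of the topological recursion differentials with the $y-x$ swap of the trivial system, which you propose to ``cite directly'' from \cite[Section 7]{alexandrov2022universal}, is only established there under the extra assumption that $x$ is rational and all zeros of $dx$ are \emph{simple}. The theorem (and its main applications) do not assume this: for instance, for $x=z^{-r}/r$, $y=-z$, the differential $dx$ has a zero of order $r-1$ at $z=\infty$, where moreover $y$ has a pole, so one is in the Bouchard--Eynard regime of topological recursion and the reference does not apply as stated. The paper closes this gap by a deformation argument: embed the given $x$ into a family $x_\epsilon$ with $x_0=x$ such that all zeros of $dx_\epsilon$ are simple for $\epsilon\neq0$, and use that topological recursion is compatible with the limit $\epsilon\to0$ (\cite[Theorem 5.8]{limits}) while the $x-y$ swap formula is compatible with this limit by construction. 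Without this step your argument only proves the theorem in the simple-ramification case, which excludes precisely the examples (Theorem~\ref{thm:r-BGW}, Corollary~\ref{cor:ex}) that the theorem is designed to cover.
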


\begin{proof} For such $y$'s the $x-y$ swap relation can in fact replace topological recursion under an extra assumption that $x$ is a rational function and all zeros of $dx$ are simple. In this situation the $x-y$ swap relation actually provides us with explicit closed formulas for the system of differentials, see~\cite[Section 7]{alexandrov2022universal}.
	
	In the case when not all zeros of $dx$ are simple one has to consider the so-called Bouchard--Eynard version of topological recursion, see~\cite{B-E}, in combination with possibly having the pole of $y$ at a zero of $dx$, cf.~\cite{ChekhovNorbury}. One can embed these cases in the families of functions $x_\epsilon$ such that the given $x$ is equal to $x_0$ and for $\epsilon\not=0$ all zeros of $dx_\epsilon$ are simple.
	
	In this situation topological recursion is compatible with taking the limit $\epsilon\to 0$~\cite[Theorem 5.8]{limits}, and the $x-y$ formula is compatible with the limit $\epsilon\to 0$ by construction. This allows us to wave the assumption that all zeros of $dx$ are simple and still apply Corollary~\ref{cor:Phi-TirivalDual}.
\end{proof}

As a corollary of this theorem, we obtain the following explicit formulas in two important sets of examples (in these examples the limit procedure on the side of topological recursion is described in detail and proved in~\cite{charbonnier2022shifted,chidambaram2023relations}).

\begin{corollary}\label{cor:ex} (1) Let $x=z^{-r}/r$, $y=-z^{-1}$, $r\geq 2$. In this case the associated partition function $Z_{0,z}$ is known to be the string solution of the $r$-th Gelfand--Dickey hierarchy. The corresponding point on the semi-infinite Grassmannian is given by
\begin{equation}
			\Phi_i(z)=\tfrac{1}{\sqrt{2\pi\hbar z^{r+1}}}
	\int
	\chi^{-i-1}
	e^{\frac{1}{r\,(r+1)\,\hbar}\big(
z^{-r}-rz^{-r-1}(\chi-z)-\chi^{-r}
\big)}d\chi,\qquad i\ge1. 	
\end{equation}
(2) Let $x=z^{-r}/r$, $y=-z$, $r\geq 2$. It is the case of higher BGW discussed in Theorem~\ref{thm:r-BGW}. The corresponding point on the semi-infinite Grassmannian is given by
\begin{equation}\label{eq:hBGW}
	\Phi_i(z)=\tfrac{\ii}{\sqrt{2\pi\hbar z^{r+1}}}
	\int
	\chi^{-i}
	e^{\frac{1}{r\,(r-1)\,\hbar}\big(
-z^{1-r}+(r-1)z^{-r}(\chi-z)+\chi^{1-r}
\big)}d\chi,\qquad i\ge1. 	
\end{equation}
\end{corollary}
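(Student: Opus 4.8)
The plan is to read this off from Theorem~\ref{thm:TR-main}. First I would check the hypotheses of that theorem for each of the two inputs. In both cases $\Sigma=\C P^1$ with global coordinate~$z$, and $y$ is fractional linear: $y=-z^{-1}$ in case~(1) and $y=-z$ in case~(2). The point $z=0$ is regular for the associated system of topological recursion differentials, since the ramification of $x=z^{-r}/r$ sits entirely at $z=\infty$ --- in the coordinate $w=1/z$ one has $x=w^r/r$, $dx=w^{r-1}\,dw$, so $dx$ has a pole rather than a zero at $z=0$ --- while $\omega^{(0)}_2=\tfrac{dz_1dz_2}{(z_1-z_2)^2}$ has only its diagonal pole; hence all $\omega^{(g)}_n$ with $2g-2+n>0$ are regular at $(0,\dots,0)$. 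The extra condition of Theorem~\ref{thm:TR-main} also holds: with $x'=-z^{-r-1}$ we get $x'y'=-z^{-r-3}$ in case~(1) and $x'y'=z^{-r-1}$ in case~(2), both with a pole of order $\ge3$ at $z=0$ because $r\ge2$. (For $r\ge3$ the zero of $dx$ at $z=\infty$ is non-simple, which is precisely the situation Theorem~\ref{thm:TR-main} handles through the limiting argument of \cite{limits}; the matching limit procedure on the topological recursion side for these two families is the one worked out in \cite{charbonnier2022shifted,chidambaram2023relations}.)

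Once the hypotheses are in place, Theorem~\ref{thm:TR-main} gives that $Z_{0,z}$ is a KP tau function whose point of the semi-infinite Grassmannian is spanned by the $\Phi_i$ of~\eqref{eq:xyPhi-TrivialDual}, so the remaining work is to make~\eqref{eq:xyPhi-TrivialDual} explicit. I would substitute the chosen $x$ and $y$: the prefactor $\sqrt{-x'(z)/(2\pi\hbar)}$ becomes $1/\sqrt{2\pi\hbar z^{r+1}}$; the factor $\sqrt{y'(\chi)}$ equals $\chi^{-1}$ in case~(1) and $\ii$ in case~(2), so the integrand is $\chi^{-i-1}$, respectively $\ii\,\chi^{-i}$, against the exponential; and the exponent $x(z)(y(z)-y(\chi))+\int_z^\chi x\,dy$ is evaluated by a single elementary integration. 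A choice of the branch of the square root, together with an elementary reparametrization of the contour where convenient (which changes only the basis of the semi-infinite plane, not the plane itself), brings the answer to the stated shape; a sanity check is that $\Phi_i(z)=z^{-i}(1+O(z))$, as demanded by the setup of Corollary~\ref{cor:Phi-TirivalDual}.

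It then remains to identify the tau functions themselves. In case~(2), after inverting $z$ the input $(x,y)=(z^{-r}/r,\,-z)$ becomes the data of Theorem~\ref{thm:r-BGW}, so $Z_{0,z}$ is the higher Br\'ezin--Gross--Witten ($r$-BGW) tau function --- equivalently, by~\cite{chidambaram2023relations}, its logarithm generates the intersection numbers of $\psi$-classes against the $r$-theta classes $\Theta^r_{g,n}$. In case~(1), inverting $z$ turns $(x,y)=(z^{-r}/r,\,-z^{-1})$ into $(w^r/r,\,-w)$, the spectral curve whose (Bouchard--Eynard) topological recursion is known to produce the string solution of the $r$-th Gelfand--Dickey hierarchy --- via the shift procedure of~\cite{charbonnier2022shifted} for general $r$, and classically the Airy curve / Witten--Kontsevich for $r=2$; that supplies the stated description.

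The hard part here is not really hard: it is the bookkeeping in the middle paragraph --- the signs, the branches of the square roots, and the exact form of the exponent, possibly after a contour change. The two genuinely nontrivial ingredients are imported rather than reproved: that Theorem~\ref{thm:TR-main}, hence Corollaries~\ref{cor:transformation-KKernel} and~\ref{cor:Phi-TirivalDual} together with the $\epsilon\to0$ compatibility of~\cite{limits}, still applies at the non-simple zero of $dx$ at $z=\infty$; and the classical identifications of the third paragraph, which require only matching normalization conventions with the sources when quoting them.
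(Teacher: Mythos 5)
Your proposal is correct and takes exactly the paper's route: the paper offers no argument for this corollary beyond invoking Theorem~\ref{thm:TR-main} (with the limit procedure on the topological-recursion side delegated to \cite{charbonnier2022shifted,chidambaram2023relations}) and substituting the given $x$ and $y$ into \eqref{eq:xyPhi-TrivialDual}, which is precisely what you do, and your verification of the hypotheses (fractional linear $y$, regularity of the differentials at $z=0$ since the zeros of $dx$ sit at $z=\infty$, and a pole of $x'y'$ of order $\ge 3$) is sound. One caveat on your middle paragraph: in case (2) the direct substitution reproduces \eqref{eq:hBGW} verbatim, but in case (1) it yields the exponent $\tfrac{1}{r(r+1)\hbar}\bigl(-rz^{-r-1}+(r+1)z^{-r}\chi^{-1}-\chi^{-r-1}\bigr)$, whose Hessian at the critical point $\chi=z$ is $x'(z)y'(z)=-z^{-r-3}$ and whose saddle-point asymptotics give $\Phi_i=z^{-i}(1+O(z))$ as required, whereas the printed exponent $\tfrac{1}{r(r+1)\hbar}\bigl((r+1)z^{-r}-rz^{-r-1}\chi-\chi^{-r}\bigr)$ has Hessian $-z^{-r-2}$ and leading behaviour $z^{-i-1/2}$, and no reparametrization of $\chi$ reconciles the two --- so the displayed formula in part (1) appears to contain a misprint rather than there being a gap in your argument.
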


Theorem~\ref{thm:r-BGW} is essentially a reformulation of the part $(2)$ of Corollary~\ref{cor:ex}.
\begin{proof}[Proof of Theorem~\ref{thm:r-BGW}]
	Indeed, up to the difference of conventions, expression describing the point  \eqref{eq:hBGW}  of the semi-infinite Grassmannian
	coincides with the expression of the point of Grassmannian for the higher Br\'ezin--Gross--Witten tau function, given by \cite[Equation (4.7)]{alexandrov2023higher}. More specifically, to get the description \cite{alexandrov2023higher} one should consider the inverse parameter $z^{-1}$ instead of $z$ and then divide \eqref{eq:hBGW} by $z$.
	It remains to note that \eqref{eq:hBGW}  uniquely specify the tau function, therefore it is given by the matrix integral \eqref{eq:MMBGW}.
\end{proof}

Corollary \ref{cor:ex} can be immediately extended to the case where $x=P(z^{-1})$ with arbitrary polynomial $P$. In particular, if we take $x=z^{-r}/r-\epsilon z^{-1}$ and $y=-z$ with $r\geq 2$, then the set of the differentials constructed with
topological recursion describes intersection numbers with the deformed $r$-theta class \cite[Theorem D]{chidambaram2023relations}. We conclude that these differentials are KP integrable, moreover, the KP tau function $Z_{0,z}$ is associated with the point of the Grassmannian, generated by \eqref{eq:xyPhi-TrivialDual} with these $x$ and $y$, that is
\begin{equation}
			\Phi_i(z)=\sqrt{\tfrac{-z^{-r-1}+\epsilon z^{-2}}{2\pi\hbar}}
	\int
	\chi^{-i}
	e^{\frac{1}{\hbar}\left(\frac{1}{r\,(r-1)}\left(
-z^{1-r}+(r-1)z^{-r}(\chi-z)+\chi^{1-r}\right)+\epsilon\left(1-\chi/z+\log\left(\chi/z\right)\right)
\right)}d\chi. 	
\end{equation}

Note that this deformation coincides with the generalized higher BGW tau function, considered in \cite[Section 5]{alexandrov2023higher}. Indeed, if we consider it as a deformation of the spectral curve from Theorem \ref{thm:r-BGW}, namely $x=z^r/r-\epsilon z$, $y=z^{-1}$, then this
spectral curve can be described as $r y^{r-1}(xy+\epsilon)=1$, which coincides with the spectral curve suggested in \cite[Section 5]{alexandrov2023higher} for the generalized higher BGW tau function if one identifies $r=m+1$ and $\epsilon=S$. Moreover, the tau function $Z_{\infty,\eta^{-1}}$ for $\eta=(z^r-\epsilon r z)^{1/r}$ coincides, up to a $(0,1)$ contribution, with the generalized higher BGW tau function. The proof is based on the identification of the points of Grassmannian and is a direct generalization of proof of Theorem \ref{thm:r-BGW} where one should carefully treat the change of local coordinate.

\begin{remark}If $y(z)$ is linear fractional, then the triviality of the dual topological recursion follows from the fact that $dy$ has no zeros. However, there are other cases when the dual topological recursion is trivial and Corollary~\ref{cor:Phi-TirivalDual} is applied. It might happen that $dy$ gets zero at some point but $dx$ has a high order pole at the same point so that this point does not contribute to the poles of the dual topological recursion. For example, for the spectral curve
\begin{equation}
x=z^r,\quad y=z^{-s},\quad r>s>0,
\end{equation}
the Bouchard--Eynard version of topological recursion often makes no sense. But one can consider the $x-y$ swap relation as a properly replacement for topological recursion, which produces KP integrable differentials since the $x-y$ dual differentials are trivial. 
These tau functions should describe interesting enumerative geometry invariants, cf.~\cite{chidambaram2023relations} and~\cite[Section~7]{borot2022higher}.
\end{remark}

\begin{remark}
	It would be interesting to compare the second statement of Corollary~\ref{cor:ex} (and its immediate extensions for arbitrary polynomials $x=P(z^{-1})$) with the recently announced results of Shuai Guo, Ce Ji, and Qingsheng Zhang. In their talk ``On a generalization of Witten's conjecture through spectral curves'' at the conference ``Integrable systems and their applications" hold in Sochi, September 11-15, 2023, Shuai Guo asserted that in these cases they proved constrained KP integrability of the corresponding tau functions. At the moment of the present revision a preliminary version of their paper is available at~\cite{guo2023generalization}.
\end{remark}

\section{Introduction, part 3: the structure of the \texorpdfstring{$x-y$}{x-y} swap relation} \label{sec:intro3FullPicture}

The goal of this section is to go deeper into the structure of $x-y$ swap relation between two systems of differentials, $\{\omega^{(g)}_n\}$ and $\{\omega^{\vee,(g)}_n\}$, without any reference to integrability (and only at the very end of this section we explain what its results mean in connection to KP integrability).  In particular, we present in this section an (extended) reformulation of the $x-y$ swap relation which does not involve the complicated combinatorics of the sums over graphs of the original formulation.

The only assumptions that we have at the moment are the ones we made in Section~\ref{sec:n-pt-diff}.

\subsection{Extra objects of \texorpdfstring{$x-y$}{x--y} swap relation}

We associate with two systems of differentials and extra functions $x$ and $y$ an extended system of functions related to the spectral curve that we assemble into the following diagram:

\begin{equation}\label{eq:XYdiagram}
\begin{aligned}
{\ }   \\
\xymatrix@C=30pt{
	\{\omega^{(g)}_n\} \ar @ <0.5ex> [r] \ar @/^2pc/ [rr] & \{\bW^{(g)}_n\} \ar @<0.5ex> [l] \ar [ld] \ar  @<0.5ex> [d] & \{\Omega_n\} \ar[l] \ar @<0.5ex> [d] \\
	\{\omega^{\vee,(g)}_n\}\ar @ <0.5ex> [r] \ar @/_2pc/ [rr] & \{\bW^{\vee,(g)}_n\} \ar @<0.5ex> [l]  \ar [lu] \ar @<0.5ex> [u]& \{\Omega^{\vee}_n\} \ar[l] \ar @<0.5ex> [u]
}
\\
{\ }
\end{aligned}
\end{equation}

The arrows in this diagram are certain transformations defined by explicit relations presented below and producing the object at the head of the arrow from the corresponding object on the tail. The main statement of this section claims that this diagram is commutative. It means that we can use transformations represented by some of the arrows as the definitions of the corresponding objects and then the remaining arrows represent universal functional identities.
Remark that the identities implied by the left square in this diagram are actually treated in~\cite{alexandrov2022universal} under slightly different notations, while the identities implied by the right square of the diagram is an essential novelty of the present paper.
The definitions of all involved objects is going to be ad hoc initially, but then they will be justified by the way they clarify the internal structure of Equation~\eqref{eq:MainFormulaSimple} and by the role they play in the KP integrable case.
So let us introduce all objects and arrows in this diagram.

First of all we have:
\begin{multline} \label{eq:def-bW}
	 \bW^{(g)}_n = \bW^{(g)}_n(z_{\llbracket n \rrbracket},u_{\llbracket n\rrbracket}) \coloneqq
	 \coeff \hbar {2g}\prod_{i=1}^n \frac{e^{- u_i(\cS(u_i\hbar\partial_{x_i})-1) y_i}}{u_i}
 \sum_{\Gamma} \frac{\hbar^{2g(\Gamma)}}{|\mathrm{Aut}(\Gamma)|}
\\
\prod_{e\in E(\Gamma)}
\prod_{j=1}^{|e|}\restr{(\tilde u_j, \tilde x_j)}{ (u_{e(j)},x_{e(j)})} \tilde u_j \cS(\tilde u_j \hbar  \partial_{\tilde x_j}) \sum_{\tilde g=0}^\infty \frac{\hbar^{2\tilde g}\tilde \omega^{(\tilde g)}_{|e|}(\tilde z_{\llbracket |e|\rrbracket})}  {\prod_{j=1}^{|e|} {d\tilde x_j}}.
\end{multline}
Here as usual $x_i = x(z_i)$, $\tilde x_i = x(\tilde z_i)$, and $y_i=y(z_i)$. The sum is taken over all connected graphs $\Gamma$ with $n$ labeled vertices and multiedges of index $\geq 1$ with the same reservations as in~\eqref{eq:MainFormulaSimple}. In particular, we exclude the contribution of $(0,1)$ edges in the graphs --- this contribution is accounted in the first factor of the formula and it corresponds to the convention $\omega^{(0)}_1=-y\,dx$. The functions $\bW^{(g)}_n$ are symmetric functions in $n$ pairs of variables $(z_i,u_i)$. We regard $z_i$ as a point of the $i$th factor in~$\Sigma^n$ while $u_i$ is just a formal parameter. With the exception of $\bW^{(0)}_1=\frac{1}{u_1}$  corresponding to leading term of the trivial graph, $\bW^{(g)}_n$ is a polynomial in $u_1,\dots,u_n$ whose coefficients are meromorphic functions on $\Sigma^n$ represented as finite symbolic expressions in $\omega^{(g)}_n$'s, $x'$ and $y'$.

We use exactly the same definition in the $x-y$ dual case, with the roles of $x$ and $y$ interchanged:
\begin{multline} \label{eq:def-bW-vee}
	\bW^{\vee,(g)}_n = \bW^{\vee,(g)}_n(z_{\llbracket n \rrbracket},v_{\llbracket n\rrbracket}) \coloneqq
	\coeff \hbar {2g}
\prod_{i=1}^n \frac{e^{-v_i (\cS(v_i\hbar \partial_{y_i})-1)x_i}}{v_i}
	 \sum_{\Gamma} \frac{\hbar^{2g(\Gamma)}}{|\mathrm{Aut}(\Gamma)|}
   \\
	\prod_{e\in E(\Gamma)}
		 \prod_{j=1}^{|e|}\restr{(\tilde v_j, \tilde y_j)}{ (v_{e(j)},y_{e(j)})} \tilde v_j \cS(\tilde \hbar v_j  \partial_{\tilde y_j}) \sum_{\tilde g=0}^\infty \frac{\hbar^{2\tilde g}\tilde \omega^{\vee,(\tilde g)}_{|e|}(\tilde z_{\llbracket |e|\rrbracket})}  {\prod_{j=1}^{|e|} {d\tilde y_j}}.
\end{multline}

In a similar way, we define
\begin{multline} \label{eq:def-Omega}
	\Omega_n = \Omega_n(w_{\llbracket n \rrbracket},\bar w_{\llbracket n\rrbracket}) \coloneqq
	 \prod_{i=1}^n \frac{1}{x_i-\bar x_i}
	\sum_{\Gamma} \frac{1}{|\mathrm{Aut}(\Gamma)|}
	\\
    \prod_{e\in E(\Gamma)}
		\prod_{j=1}^{|e|}\restr{(\tilde w_j, \tilde {\bar w}_j)}{ (w_{e(j)},\bar w_{e(j)})}
	\int_{\tilde{\bar w}_1}^{\tilde w_1}\!\!\!\int_{\tilde{ \bar w}_2}^{\tilde w_2}\!\!\!\!\cdots \!\!\int_{\tilde {\bar w}_{|e|}}^{\tilde w_{|e|}} \sum_{\tilde g=0}^\infty \hbar^{2\tilde g- 2+|e|} \tilde \omega^{(\tilde g)}_{|e|}(\tilde z_{\llbracket |e|\rrbracket})
\end{multline}
and
\begin{multline} \label{eq:def-Omega-vee}
	\Omega^{\vee}_n = \Omega^{\vee}_n(w_{\llbracket n \rrbracket},\bar w_{\llbracket n\rrbracket}) \coloneqq
	 \prod_{i=1}^n \frac{1}{y_i-\bar y_i}
	 \sum_{\Gamma} \frac{1}{|\mathrm{Aut}(\Gamma)|}
   \\
	\prod_{e\in E(\Gamma)}
	\prod_{j=1}^{|e|}\restr{(\tilde w_j, \tilde {\bar w}_j)}{ (w_{e(j)},\bar w_{e(j)})}
	\int_{\tilde{\bar w}_1}^{\tilde w_1}\!\!\!\int_{\tilde{ \bar w}_2}^{\tilde w_2}\!\!\!\!\cdots \!\!\int_{\tilde {\bar w}_{|e|}}^{\tilde w_{|e|}} \sum_{\tilde g=0}^\infty \hbar^{2\tilde g-2 + |e|} \tilde\omega^{\vee,(\tilde g)}_{|e|}(\tilde z_{\llbracket |e|\rrbracket}).
\end{multline}
These are the functions on $\Sigma^{2n}$ defined in the neighborhoods of the product of diagonals in $\Sigma^2$ corresponding to the pairs of variables $(w_i,\bar w_i)$.

Comparing the definitions of $\bW^{(g)}_n$ and $\Omega_n$ we see that $\bW^{(g)}_n$ can be obtained from $\Omega_n$ by the substitution
$w_i=e^{\frac{\hbar u_i}2 \partial_{x_i}}z_i$, $\bar w_i=e^{-\frac{\hbar u_i}2 \partial_{x_i}}z_i$ with an additional account of $(0,1)$ contribution. Indeed, using that
\begin{equation}
\int\limits_{e^{-\frac{\hbar u}2 \partial_{x}}z}^{e^{\frac{\hbar u}2 \partial_{x}}z}=u\hbar \cS(u\hbar\partial_x)\frac{1}{dx},
\qquad e^{\pm\frac{\hbar u}2 \partial_{x}}x(z)=x(z)\pm\tfrac{u\hbar}{2},
\end{equation}
we obtain
	\begin{equation} \label{eq:Omega-to-bW}
		\bW_n^{(g)} (z_{\llbracket n \rrbracket})=\coeff\hbar{2g-2+n}  \prod_{i=1}^n e^{- u_i(\cS(u_i\hbar\partial_{x_i})-1) y_i} \restr{(w_i,\bar w_i)}{\Big(e^{\frac{\hbar u_i}2 \partial_{x_i}}z_i,e^{-\frac{\hbar u_i}2 \partial_{x_i}}z_i\Big)} \Omega_n (w_{\llbracket n \rrbracket},\bar w_{\llbracket n\rrbracket}),
\end{equation}
and, similarly,
\begin{equation}  \label{eq:Omega-to-bW-vee}
				\bW^{\vee,g}_n (z_{\llbracket n \rrbracket})=\coeff\hbar{2g-2+n}  \prod_{i=1}^n  e^{- v_i(\cS(v_i\hbar\partial_{y_i})-1) x_i}  \restr{(w_i,\bar w_i)}{\Big(e^{\frac{\hbar v_i}2 \partial_{y_i}}z_i,e^{-\frac{\hbar v_i}2 \partial_{y_i}}z_i\Big)} \Omega^\vee_n (w_{\llbracket n \rrbracket},\bar w_{\llbracket n\rrbracket}).
	\end{equation}

\begin{remark}
The substitution $w_i=e^{\frac{\hbar u_i}2 \partial_{x_i}}z_i$, $\bar w_i=e^{-\frac{\hbar u_i}2 \partial_{x_i}}z_i$ means that the arguments $(z_i,u_i)$ of the function $\bW^{(g)}_n$ represent a point in a neighborhood of the diagonal in~$\Sigma^2$, and the substitution provides a parametrization of this neighborhood.
\end{remark}

Notice that for $2g-2+n\geq 0$ we have
\begin{align}
	  \label{eq:omega-vee-from-Omega}
	\frac{\omega^{(g)}_n (z_{\llbracket n \rrbracket})}{\prod_{i=1}^n dx_i}  & = \restr{u_{\llbracket n \rrbracket}}{0}\bW^{(g)}_n (z_{\llbracket n \rrbracket}, u_{\llbracket n \rrbracket})
=\coeff\hbar{2g-2+n}\restr{w_{\llbracket n \rrbracket},\bar w_{\llbracket n \rrbracket}}{z_{\llbracket n \rrbracket}}\Omega_n;
\\ 	  \label{eq:omega-from-Omega}
\frac{\omega^{\vee,(g)}_n (z_{\llbracket n \rrbracket})}{\prod_{i=1}^n dy_i}  & = \restr{v_{\llbracket n \rrbracket}}{0}\bW^{\vee, (g)}_n (z_{\llbracket n \rrbracket}, v_{\llbracket n \rrbracket})
=\coeff\hbar{2g-2+n}\restr{w_{\llbracket n \rrbracket},\bar w_{\llbracket n \rrbracket}}{z_{\llbracket n \rrbracket}}\Omega^\vee_n.
\end{align}
Also notice that the relation of $x-y$ duality~\eqref{eq:MainFormulaSimple} and its dual counterpart take with the introduced notations the following form
\begin{align}
  \label{eq:omega-vee-from-bW}
	\frac{\omega^{\vee,(g)}_n(z_{\llbracket n \rrbracket}) }{\prod_{i=1}^n dy_i} & = (-1)^n\prod_{i=1}^n \bigg(\sum_{r_i=0}^\infty \partial_{y_i}^{r_i} [u_i^{r_i}] \frac{dx_i}{dy_i} \bigg) \bW^{(g)}_n (z_{\llbracket n \rrbracket}, u_{\llbracket n \rrbracket});
\\ \label{eq:omega-from-bW-vee}
	\frac{\omega^{(g)}_n (z_{\llbracket n \rrbracket})}{\prod_{i=1}^n dx_i} & = (-1)^n \prod_{i=1}^n \bigg(\sum_{r_i=0}^\infty \partial_{x_i}^{r_i} [v_i^{r_i}] \frac{dy_i}{dx_i}  \bigg) \bW^{\vee,(g)}_n (z_{\llbracket n \rrbracket}, v_{\llbracket n \rrbracket}).
\end{align}

All this together represents the following part of Diagram~\eqref{eq:XYdiagram}.
\begin{equation} \label{eq:XY-diag-triv}
\begin{aligned}
~\\
\xymatrix@C=30pt{
	\{\omega^{(g)}_n\} \ar @ <0.5ex> [r] \ar @/^2pc/ [rr] & \{\bW^{(g)}_n\} \ar @<0.5ex> [l] \ar [ld]  & \{\Omega_n\} \ar[l]  \\
	\{\omega^{\vee,(g)}_n\}\ar @ <0.5ex> [r] \ar @/_2pc/ [rr] & \{\bW^{\vee,(g)}_n\} \ar @<0.5ex> [l]  \ar [lu] & \{\Omega^{\vee}_n\} \ar[l]
}
\\~
\end{aligned}
\end{equation}

The missing arrows are described in the next section.

\begin{remark}
Relations~\eqref{eq:omega-vee-from-Omega} and~\eqref{eq:omega-from-Omega} suggest that a more invariant version of the functions $\bW^{(g)}_n$, $\bW^{\vee,(g)}_n$,  $\Omega_n$,  $\Omega^\vee_n$ would be the differentials $\bW^{(g)}_n\prod_{i=1}^n dx_i$, $\bW^{\vee,(g)}_n\prod_{i=1}^n dy_i$ and half-differentials $\Omega_n\prod_{i=1}^n\sqrt{dx_id\bar x_i}$,  $\Omega^\vee_n\prod_{i=1}^n\sqrt{dy_id\bar y_i}$, respectively. However, in order to keep the exposition more explicit we prefer to deal with functions rather than with (half)differentials.
\end{remark}

\subsection{Extended \texorpdfstring{$x-y$}{x-y} swap relations}

\label{sec:extended-x-y-swap}

It proves out that there are equations relating the functions $\bW^{g}_n$ and $\bW^{\vee,(g)}_n$, or, respectively, $\Omega_n$ and $\Omega^\vee_n$ directly, avoiding complicated combinatorics of summation over graphs. These equations form the missing part of Diagram~\eqref{eq:XYdiagram}:
\begin{align}
	\xymatrix@C=30pt{
		\{\omega^{(g)}_n\}  & \{\bW^{(g)}_n\} \ar @<0.5ex> [d]  & \{\Omega_n\}  \ar @<0.5ex> [d] \\
		\{\omega^{\vee,(g)}_n\} & \{\bW^{\vee,(g)}_n\}\ar @<0.5ex> [u] & \{\Omega^{\vee}_n\}  \ar @<0.5ex> [u]
	}
\end{align}

\begin{theorem}\label{thm:transformation-Omega}
The $x-y$ swap relations admit the following extensions:
	\begin{align}
\label{eq:bW-from-bW-vee}
	\bW^{(g)}_n (z_{\llbracket n \rrbracket}, u_{\llbracket n \rrbracket}) & = (-1)^n\prod_{i=1}^n \bigg(e^{u_i y_i }\sum_{r_i=0}^\infty \partial_{x_i}^{r_i} [v_i^{r_i}] \frac{dy_i}{dx_i} e^{ -u_i y_i} \bigg) \bW^{\vee,(g)}_n (z_{\llbracket n \rrbracket}, v_{\llbracket n \rrbracket}),
	\\ \label{eq:bW-vee-from-bW}
	\bW^{\vee,(g)}_n (z_{\llbracket n \rrbracket}, v_{\llbracket n \rrbracket}) & =  (-1)^n\prod_{i=1}^n \bigg(e^{v_i x_i }\sum_{r_i=0}^\infty \partial_{y_i}^{r_i} [u_i^{r_i}] \frac{dx_i}{dy_i} e^{ -v_i x_i} \bigg) \bW^{(g)}_n (z_{\llbracket n \rrbracket}, u_{\llbracket n \rrbracket}),
\\ \label{eq:Omega-int}
	\Omega_n(w_{\llbracket n \rrbracket},\bar w_{\llbracket n \rrbracket})
&= \Bigl(\frac{-\ii}{2\pi\hbar}\Bigr)^n
	\int\!\!\cdots\!\!\!\int
	\Omega^\vee_n(\chi_{\llbracket n \rrbracket},\bar \chi_{\llbracket n \rrbracket})
	\prod_{i=1}^n y'(\chi_i)y'(\bar \chi_i)  \prod_{i=1}^n  d\chi_id\bar \chi_i\times
	\\ \notag &\qquad
	\prod_{i=1}^n e^{-\frac{1}{\hbar}\big(x(\bar w_i))(y(\bar w_i)-y(\chi_i))+\int_{\bar w_i}^{\chi_i} xdy\big)}
                  e^{\frac{1}{\hbar}\big(x(w_i)(y(w_i)-y(\bar \chi_i))+\int_{w_i}^{\bar \chi_i} xdy\big)},
	\\
	\label{eq:Omega-vee-int}
	\Omega^\vee_n(w_{\llbracket n \rrbracket},\bar w_{\llbracket n \rrbracket})
&= \Bigl(\frac{-\ii}{2\pi\hbar}\Bigr)^n
\int\!\!\cdots\!\!\!\int
\Omega_n(\chi_{\llbracket n \rrbracket},\bar \chi_{\llbracket n \rrbracket})
\prod_{i=1}^n {x'(\chi_i)x'(\bar \chi_i)}  \prod_{i=1}^n  d\chi_id\bar \chi_i\times
\\ \notag &\qquad
\prod_{i=1}^n e^{-\frac{1}{\hbar}\big(y(\bar w_i))(x(\bar w_i)-x(\chi_i))+\int_{\bar w_i}^{\chi_i} ydx\big)}
              e^{\frac{1}{\hbar}\big(y(w_i)(x(w_i)-x(\bar \chi_i))+\int_{w_i}^{\bar \chi_i} ydx\big)},
\end{align}
where the formal Gaussian integrals are understood in the same way as in Theorem~\ref{thm:transformation-KKernel}.
\end{theorem}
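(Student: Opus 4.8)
The plan is to establish the two polynomial-operator identities \eqref{eq:bW-from-bW-vee}, \eqref{eq:bW-vee-from-bW} first, by a combinatorial resummation of the defining graph sums, and then to transfer them to the Gaussian-integral identities \eqref{eq:Omega-int}, \eqref{eq:Omega-vee-int} through the dictionary \eqref{eq:Omega-to-bW}--\eqref{eq:Omega-to-bW-vee} between the $\bW$'s and the $\Omega$'s. All four families of objects \eqref{eq:def-bW}--\eqref{eq:def-Omega-vee} are manifestly interchanged by the involution swapping $x$ with $y$, $\{\omega^{(g)}_n\}$ with $\{\omega^{\vee,(g)}_n\}$, $u_i$ with $v_i$, and $(w_i,\bar w_i)$ with $(\chi_i,\bar\chi_i)$, so \eqref{eq:bW-from-bW-vee} and \eqref{eq:bW-vee-from-bW} are exchanged under this involution, and so are \eqref{eq:Omega-int} and \eqref{eq:Omega-vee-int}; it is thus enough to prove, say, \eqref{eq:bW-vee-from-bW} and \eqref{eq:Omega-vee-int}. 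That the two $\Omega$-relations are then automatically mutually inverse is immediate from Proposition~\ref{prop:identity-Gauss1}, since the transformation in \eqref{eq:Omega-int}--\eqref{eq:Omega-vee-int} is a composition of $2n$ commuting single-variable transforms, each of the type covered by Proposition~\ref{prop:identity-Gauss1}; the same then follows for the two $\bW$-relations through the dictionary. I also note that, comparing the $n=1$ instance of \eqref{eq:def-Omega} with \eqref{eq:bK-def}, one has $\Omega_1=\bK/\sqrt{dx_1dx_2}$ (and likewise $\Omega^\vee_1=\bK^\vee/\sqrt{dy_1dy_2}$), so that \eqref{eq:Omega-vee-int} for $n=1$ is exactly Theorem~\ref{thm:transformation-KKernel}; the general $n$ does not reduce to this case, and the argument below treats all $n$ uniformly.

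To prove \eqref{eq:bW-vee-from-bW} I would unfold the graph sum in the definition \eqref{eq:def-bW-vee} of $\bW^{\vee,(g)}_n$: substitute into each edge-weight $\tilde\omega^{\vee,(\tilde g)}_{|e|}$ the $x-y$ swap formula \eqref{eq:MainFormulaSimple}, and then into each resulting $\omega$-factor its own graph expansion \eqref{eq:def-bW}, so that $\bW^{\vee,(g)}_n$ becomes a sum over graphs whose edges are themselves decorated by graphs. By the standard substitution principle for sums over graphs with multiedges --- the orders of the automorphism groups at the two levels multiply --- this double sum collapses to the single sum over graphs decorated by the original $\omega^{(\tilde g)}_m$'s occurring in \eqref{eq:def-bW}. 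Carried out in parallel with this, the edge operators $\tilde v_j\cS(\hbar\tilde v_j\partial_{\tilde y_j})$, the prefactors $e^{-v_i(\cS(v_i\hbar\partial_{y_i})-1)x_i}$ and $e^{-u_j(\cS(u_j\hbar\partial_{\tilde x_j})-1)\tilde y_j}$, the coefficient extractions $[u_j^{r_j}]$, and the factors $1/u_j$, $1/v_j$ telescope, using only the elementary conjugation identities for $\cS$ that already underlie \eqref{eq:Omega-to-bW}, into precisely $(-1)^n\prod_i\bigl(e^{v_ix_i}\sum_{r_i}\partial_{y_i}^{r_i}[u_i^{r_i}]\tfrac{dx_i}{dy_i}e^{-v_ix_i}\bigr)$ applied to the graph sum of $\bW^{(g)}_n$. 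This is one of the identities of the left square of Diagram~\eqref{eq:XYdiagram}; as remarked there, such identities are established --- up to a change of notation --- in~\cite{alexandrov2022universal}, and I would reproduce the argument in the precise form needed here in Section~\ref{sec:ReformulationXY-duality}, being careful about the $(0,1)$ and $(0,2)$ conventions built into the prefactors and about the exceptional treatment of $\tilde\omega^{(0)}_2$.

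To transfer this to \eqref{eq:Omega-vee-int}, observe that the substitution $x(w_i)=x(z_i)+\tfrac{\hbar u_i}2$, $x(\bar w_i)=x(z_i)-\tfrac{\hbar u_i}2$ of \eqref{eq:Omega-to-bW} is a formal change of coordinates near the diagonal of $\Sigma^2$ in each variable, so \eqref{eq:Omega-to-bW} inverts to an explicit per-variable operator $\mathcal{T}_x^{-1}$ with $\Omega_n=\mathcal{T}_x^{-1}\bigl(\sum_g\hbar^{2g-2+n}\bW^{(g)}_n\bigr)$, and likewise \eqref{eq:Omega-to-bW-vee} gives $\Omega^\vee_n=\mathcal{T}_y^{-1}\bigl(\sum_g\hbar^{2g-2+n}\bW^{\vee,(g)}_n\bigr)$ with $\mathcal{T}_y^{-1}$ built from the $y$-coordinate. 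Composing with \eqref{eq:bW-vee-from-bW} yields $\Omega^\vee_n=\bigl(\prod_i\mathcal{L}^{(i)}\bigr)\Omega_n$ for a single-variable operator $\mathcal{L}^{(i)}$ assembled from the $\cS$-exponentials and the operator of \eqref{eq:bW-vee-from-bW}, and it remains to recognize $\mathcal{L}^{(i)}$ as the paired formal Gaussian transform $\tfrac{-\ii}{2\pi\hbar}\iint(\cdot)\,y'(\chi_i)y'(\bar\chi_i)\,d\chi_id\bar\chi_i\,e^{-\frac1\hbar(\ldots)}e^{\frac1\hbar(\ldots)}$ of \eqref{eq:Omega-vee-int}. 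This one does by factoring $\mathcal{L}^{(i)}$ into its $w_i$- and its $\bar w_i$-part and writing each factor --- a formal differential operator of unbounded order --- as a formal Gaussian integral against its symbol, which is the very mechanism behind Proposition~\ref{prop:identity-Gauss1}. I expect this last point to be the main obstacle of the proof: the equivalence between a formal differential operator of unbounded order and a formal Gaussian integral is delicate, requiring one to pin down the asymptotic expansion, the integration contour near the critical point $\chi_i=z_i$, and the order-by-order matching of powers of $\hbar$ --- it is precisely for this that the formal-Gaussian-integral calculus is developed in detail in Section~\ref{sec:Gaussian}, where the bulk of the actual work of the theorem resides. By contrast the graph-theoretic resummation of the second step is essentially routine and can be borrowed from~\cite{alexandrov2022universal}.
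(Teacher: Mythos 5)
Your proposal is correct and follows essentially the same route as the paper: the $\bW$-relations are taken from the graph-resummation results of \cite{alexandrov2022universal}, and the $\Omega$-relations are obtained by conjugating them through the substitutions \eqref{eq:Omega-to-bW}--\eqref{eq:Omega-to-bW-vee} and identifying the resulting per-variable composite operator with the paired formal Gaussian transform, which is exactly the content of the paper's Proposition~\ref{prop:DiagfF}. You also correctly locate the real work in that last identification --- the passage from $\sum_{r}\partial_y^r[u^r]$ to a double Gaussian integral (Lemma~\ref{lem:key}) and the subsequent changes of variables --- which the paper carries out in Section~\ref{sec:Gaussian}.
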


The first two relations of this theorem concerning the functions $\bW^{(g)}_n$ and $\bW^{\vee,(g)}_n$ are proved in a bit different formulation in~\cite{alexandrov2022universal}.
An extended reference to the op.~cit. and a proof of the other two relations is given in Section~\ref{sec:ReformulationXY-duality} below.

\begin{proof}[Proof of Theorem~\ref{thm:transformation-KKernel}] This theorem is a direct corollary of Theorem~\ref{thm:transformation-Omega}. Indeed, by construction, $\Omega_1(w,\bar w) = K(w,\bar w)$ and $\Omega^\vee_1(w,\bar w) = K^\vee(w,\bar w)$.
\end{proof}

\subsection{Extended determinantal formulas}\label{Sec:proofof2.5}

We want to stress that in the whole Section~\ref{sec:intro3FullPicture} up to this point and in particular in Theorem~\ref{thm:transformation-Omega} we don't use any assumption on integrability. Meanwhile, in the KP integrable case some of the relations admit simplification. Namely, we have the following strengthening of Theorem~\ref{thm:KP-duality}.

\begin{theorem}\label{thm:KPintegr-extended}
Assume that the system of differentials $\{\omega^{(g)}_n\}$ is KP integrable in the sense of Theorem~\ref{thm:KPglobal}. Then the functions $\Omega_n$ and the $x-y$ dual functions $\Omega^\vee_n$ admit determinantal presentations
\begin{align}\label{eq:Omega-Kdet}
\Omega_n(w_{\llbracket n \rrbracket},\bar w_{\llbracket n \rrbracket})
 &=(-1)^{n-1} \sum_{\sigma\in C_n} \prod_{i=1}^n K(w_{i},\bar w_{\sigma(i)}), & n\geq 1,
\\\label{eq:Omega-vee-Kdet}
\Omega^\vee_n(w_{\llbracket n \rrbracket},\bar w_{\llbracket n \rrbracket})
 &=(-1)^{n-1} \sum_{\sigma\in C_n} \prod_{i=1}^n K^\vee(w_{i},\bar w_{\sigma(i)}), & n\geq 1,
\end{align}
where $K(w,\bar w)=\frac{\bK(w,\bar w)}{\sqrt{dx(w)\,dx(\bar w)}}$ and $K^\vee(w,\bar w)=\frac{\bK^\vee(w,\bar w)}{\sqrt{dy(w)\,dy(\bar w)}}$ are the same as in Theorem~\ref{thm:transformation-KKernel}.
\end{theorem}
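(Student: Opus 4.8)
The plan is to identify $\Omega_n$ with a connected fermionic correlator and then apply Wick's theorem. Both sides of~\eqref{eq:Omega-Kdet} are globally defined in a neighbourhood of the product of diagonals (for the left‑hand side this is the content of Proposition~\ref{prop:bK-global} together with the construction of $\Omega_n$ from the globally defined $\omega^{(g)}_n$), so it suffices to verify the identity as an equality of formal expansions at one regular point $o$ in one local coordinate $x$, where the fermionic formalism is available. Fix such $o$ and $x$; by KP integrability $\tau=Z_{o,x}$ is a KP tau function, corresponding to a decomposable vector $v=\Phi_1\wedge\Phi_2\wedge\cdots$ in the semi‑infinite wedge space, and $K(z_1,z_2)=\bK(z_1,z_2)/\sqrt{dx_1\,dx_2}$ is the associated Baker--Akhiezer kernel, which in the fermionic language is the normalized two‑point function $\langle\psi(z_1)\psi^*(z_2)\rangle_v$ of free fermions built from the modes $\psi_i,\psi^*_i$ of the excerpt, normalized so as to reproduce~\eqref{eq:K-def}.

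The first step is to show that $\Omega_n$, as defined by the graph sum~\eqref{eq:def-Omega}, coincides with the connected (cumulant) correlator $\langle\!\langle\prod_{i=1}^n\psi(w_i)\psi^*(\bar w_i)\rangle\!\rangle_v$. This is a bosonization computation: one uses the vertex‑operator identity $\psi(w)\psi^*(\bar w)=\tfrac{1}{x(w)-x(\bar w)}\,{:}\exp\!\big(\int_{\bar w}^{w}\mathfrak{J}\big){:}$, where $\mathfrak{J}={:}\psi\psi^*{:}$ is the $U(1)$‑current, expands each exponential, and observes that the connected current correlators reproduce the $\omega$'s: $\langle\!\langle\mathfrak{J}(\tilde z_1)\cdots\mathfrak{J}(\tilde z_m)\rangle\!\rangle_v=\omega_m/\prod_j d\tilde x_j$ for $m\ge1$ (with $\omega^{(0)}_1=0$ as in~\eqref{eq:assume-omega01-equal0}), the bare contraction $d\tilde x_1\,d\tilde x_2/(\tilde x_1-\tilde x_2)^2$ being contained in the $m=2$ correlator and removed by the normal ordering exactly when the two currents sit inside the same vertex operator --- which is precisely the rule ``$\tilde\omega^{(0)}_2=\omega^{(0)}_2-\omega^{(0),\mathrm{sing}}_2$ if $e(1)=e(2)$'' in~\eqref{eq:def-Omega}. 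Matching the combinatorics of the cumulant expansion (sum over connected interaction graphs weighted by $1/|\mathrm{Aut}(\Gamma)|$, the integration ranges $\int_{\bar w_i}^{w_i}$ coming from the bosonized exponents and the scalar prefactor $\prod_i(x_i-\bar x_i)^{-1}$ from the vertex operators) then reproduces~\eqref{eq:def-Omega} term by term. This step is essentially the Fock‑space analysis of~\cite{alexandrov2022universal} (cf.\ the remark following~\eqref{eq:MainFormulaSimple}), which I would adapt rather than redo.

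The second step is then immediate: for free fermions the full correlator is determinantal, $\langle\prod_{i=1}^n\psi(w_i)\psi^*(\bar w_i)\rangle_v=\det\big(K(w_i,\bar w_j)\big)_{i,j=1}^n$, and its cumulant part retains only the permutations whose functional graph on the $n$ sites is connected --- the single $n$‑cycles, each contributing with sign $(-1)^{n-1}$ (this is the point‑split version of Lemma~\ref{lem:determinantal-formulas}; cf.~\cite{Zhou}). Hence $\Omega_n=\langle\!\langle\prod_i\psi(w_i)\psi^*(\bar w_i)\rangle\!\rangle_v=(-1)^{n-1}\sum_{\sigma\in C_n}\prod_{i=1}^n K(w_i,\bar w_{\sigma(i)})$, which is~\eqref{eq:Omega-Kdet}. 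Once~\eqref{eq:Omega-Kdet} is established, the dual identity~\eqref{eq:Omega-vee-Kdet} follows from it via the Gaussian‑integral transform~\eqref{eq:Omega-vee-int} of Theorem~\ref{thm:transformation-Omega}: substituting $\Omega_n=(-1)^{n-1}\sum_{\sigma\in C_n}\prod_i K(\chi_i,\bar\chi_{\sigma(i)})$ into the right‑hand side, the $n$‑fold integral factorizes over the pairs $(\chi_i,\bar\chi_i)$ into $n$ copies of the $n=1$ transform~\eqref{eq:xyKvee-shortform}, which by Theorem~\ref{thm:transformation-KKernel} carries $K$ to $K^\vee$; reassembling the factors along each cycle yields $(-1)^{n-1}\sum_{\sigma\in C_n}\prod_i K^\vee(w_i,\bar w_{\sigma^{-1}(i)})$, i.e.~\eqref{eq:Omega-vee-Kdet} after relabelling $\sigma\mapsto\sigma^{-1}$. (Alternatively one simply repeats Steps~1--2 with $x$ and $y$ interchanged, using that $\{\omega^{\vee,(g)}_n\}$ is KP integrable by Theorem~\ref{thm:KP-duality}.)

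The main obstacle is Step~1: the precise, term‑by‑term identification of the graph sum~\eqref{eq:def-Omega} with the bosonic cumulant expansion of the fermionic correlator. All of the bookkeeping --- automorphism weights $1/|\mathrm{Aut}(\Gamma)|$ against the symmetry factors of the cumulant expansion, the exclusion of $(0,1)$‑edges against the choice $\omega^{(0)}_1=0$, the $\omega^{(0),\mathrm{sing}}_2$ subtraction against normal ordering, and the matching of the integration ranges and of the scalar prefactors --- must be carried out carefully, and this is exactly the kind of Fock‑space computation already performed in~\cite{alexandrov2022universal}. Once Step~1 is in place the remaining steps are routine.
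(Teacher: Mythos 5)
Your proposal follows essentially the same route as the paper's proof: identify $\Omega_n$ with the connected fermionic correlator of the bilinears $\psi(w_i)\psi^*(\bar w_i)$ via bosonization (matching the graph sum~\eqref{eq:def-Omega}), apply the Wick formula to obtain the cyclic determinantal form, and then transport the result to the dual side by substituting it into the Gaussian-integral transform~\eqref{eq:Omega-vee-int} of Theorem~\ref{thm:transformation-Omega} and factorizing into copies of the kernel transform of Theorem~\ref{thm:transformation-KKernel}. One caveat: your parenthetical alternative for~\eqref{eq:Omega-vee-Kdet} (repeating the argument with $x$ and $y$ interchanged, invoking Theorem~\ref{thm:KP-duality}) would be circular in the paper's logical order, since Theorem~\ref{thm:KP-duality} is itself deduced from the present theorem; your primary route via the integral transform is the right one.
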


\begin{proof}
For the first equality we should use computations in the Fock space and apply Wick formula, the same as for the proof of~\eqref{eq:W-K-relation}. To this end, we just observe that in a local expansion in the KP integrable case we can compute the connected correlation functions associated with the following expression
\begin{align}\label{eq:Omega-fermion}
	\langle 0| \prod_{i=1}^n \bigg(\sum_{l_i\in \Z} \psi_{l_i} \bar w_i^{-l_i-1}  \sum_{r_i\in \Z} \psi^*_{r_i} w_i^{r_i}\bigg) v
\end{align}
(here $v$ is the corresponding decomposable vector as in~\eqref{eq:decomposable-vector}) in two different ways. The first way is to rewrite each factor $\sum_{l_i\in \Z} \psi_{l_i} \bar w_i^{-l_i-1}  \sum_{r_i\in \Z} \psi^*_{r_i} w_i^{r_i}$ as a bosonic operator and use the argument in~\cite[Section 3 and 4]{BDKS-first}, which gives the left hand side of~\eqref{eq:Omega-Kdet}. On the other hand, one can compute~\eqref{eq:Omega-fermion} using the Wick formula, and this gives the right hand side of~\eqref{eq:Omega-Kdet}.

Now to prove \eqref{eq:Omega-vee-Kdet} we will use Theorem \ref{thm:transformation-Omega}. Substitute the right hand side of \eqref{eq:Omega-Kdet} to \eqref{eq:Omega-vee-int}. Then the determinant expression in the right hand side of \eqref{eq:Omega-Kdet} allows to apply the $x-y$ duality transformation for $K$ from Theorem \ref{thm:transformation-KKernel}, and finally we obtain exactly \eqref{eq:Omega-vee-Kdet}.

\end{proof}

Now we are ready to prove one of the main results of the paper, using Theorem \ref{thm:transformation-Omega}, which we prove in Section \ref{sec:ReformulationXY-duality}.
\begin{proof}[Proof of Theorem~\ref{thm:KP-duality}]
If $\{\omega^{(g)}_n\}$ is KP integrable then we have~\eqref{eq:Omega-Kdet} and hence~\eqref{eq:Omega-vee-Kdet}. Restricting to the product of diagonals $w_i=\bar w_i$ we establish the validity of the determinantal formulas for the $x-y$ dual differentials $\{\omega^{\vee(g)}_n\}$, which implies, in turn, their KP integrability.
\end{proof}


\part{Details and proofs}

\section{Semi-classical limit equation and implications}\label{sec:quasiclassical}

In this Section we prove Proposition~\ref{prop:omega02KP} in a bit more technical but convenient formulation. Recall that the KP integrability property for a system of differentials can be reformulated as a system of equations~\eqref{eq:omega-kappa-relation} and~\eqref{eq:bK-def} on these differentials. In particular, let us consider the leading term in~$\hbar$ of~\eqref{eq:omega-kappa-relation} for $n=2$:
\begin{align}\label{eq:omega02eq}
	{\omega^{(0)}_2}=e^{ \int\limits_{x_2}^{x_1}\int\limits_{x_2}^{x_1}
		(\omega^{(0)}_2-\omega^{(0)\rm sing}_2)}
	{ \omega^{(0)\rm sing}_2},
\end{align}
where
\begin{equation}
	\omega^{(0)\rm sing}_2=\frac{dx_1dx_2}{(x_1-x_2)^2}.
\end{equation}
This equality can be considered as an equation on the unknown bidifferential~$\omega^{(0)}_2$. Proposition~\ref{prop:omega02KP} is obtained by solving this equation. Namely, we reformulate Proposition~\ref{prop:omega02KP} in the following more explicit form.

\begin{lemma}\label{lem:omega02KP} 
	 Consider a bi-differential $\omega^{(0)}_2$ on $\Sigma^2$, which has a double pole on the diagonal with bi-residue $1$. Assume that $o\in \Sigma$ is a regular point for $\omega^{(0)}_2$. Let $x$ be a local coordinate at $o$. Then all solutions of Equation~\eqref{eq:omega02eq} are given by
	\begin{equation}\label{eq:omega02z2}
		\omega^{(0)}_2=\frac{dz_1dz_2}{(z_1-z_2)^2},\quad z_i=z(x_i),
	\end{equation}
	where $z(x)$ is certain Laurent series at $o$ that extends as a global meromorphic function~$z$ on $\Sigma$.
	
	For a given $\omega^{(0)}_2$ that resolves Equation~\eqref{eq:omega02eq}
	the corresponding function~$z$ is defined uniquely up to a linear fractional transformation.
\end{lemma}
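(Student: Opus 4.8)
The plan is to solve Equation~\eqref{eq:omega02eq} directly, treating it as a functional equation for the bidifferential $\omega^{(0)}_2$ expanded as a Laurent series at the regular point $o$. First I would write $\omega^{(0)}_2 = \frac{dx_1\,dx_2}{(x_1-x_2)^2} + \sum_{k,l\geq 1} c_{kl}\,k\,l\,x_1^{k-1}x_2^{l-1}\,dx_1\,dx_2$ with symmetric coefficients $c_{kl}=c_{lk}$, and introduce the antiderivative $\phi(x_1,x_2) := \int_{x_2}^{x_1}\!\int_{x_2}^{x_1}(\omega^{(0)}_2-\omega^{(0),\rm sing}_2) = \sum_{k,l\geq 1} c_{kl}(x_1^k-x_2^k)(x_1^l-x_2^l)$, which is a well-defined power series vanishing to second order on the diagonal. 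Equation~\eqref{eq:omega02eq} then reads $\omega^{(0)}_2 = e^{\phi}\,\frac{dx_1\,dx_2}{(x_1-x_2)^2}$. The key observation is that this is exactly the addition-type identity~\eqref{eq:omega02eq-solution}: if we set $z(x) := x\cdot\exp(\sum \text{(something)})$ — more precisely, if we look for $z(x)$ with $z(x)=x+O(x^2)$ such that $\frac{dz_1\,dz_2}{(z_1-z_2)^2} = e^{\phi(x_1,x_2)}\frac{dx_1\,dx_2}{(x_1-x_2)^2}$ — then comparing with~\eqref{eq:omega02eq-solution} forces $\phi(x_1,x_2) = \int_{x_2}^{x_1}\!\int_{x_2}^{x_1}\big(\frac{dz_1\,dz_2}{(z_1-z_2)^2}-\frac{dx_1\,dx_2}{(x_1-x_2)^2}\big)$, i.e. the ansatz is automatically consistent.

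The substantive step is therefore to show that such a $z(x)$ exists and is unique up to post-composition with a M\"obius transformation. For existence, I would argue order by order: writing $z(x) = x(1 + a_1 x + a_2 x^2 + \dots)$, the equation $\log\frac{(x_1-x_2)^2 z'(x_1)z'(x_2)}{(z(x_1)-z(x_2))^2} = \phi(x_1,x_2)$ determines the $a_i$ recursively from the $c_{kl}$, because the left-hand side at each total degree in $(x_1,x_2)$ involves the top $a_i$ linearly with invertible (in fact, triangular with nonzero diagonal) coefficient, while the symmetry $c_{kl}=c_{lk}$ guarantees the consistency of the overlapping constraints. Concretely one expands $\frac{1}{(z(x_1)-z(x_2))^2}$ and matches coefficients; I expect that the cross terms $x_1^k x_2^l$ with $k,l\geq 1$ exactly encode the $c_{kl}$, and the ``diagonal-adjacent'' terms $x_1^{m}$, $x_2^{m}$ (with the other variable to the zeroth power) are absent on both sides since $\phi$ vanishes on the diagonal and so does the logarithm of the M\"obius-type quotient. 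This matching is the place where one must be careful, and it is the main obstacle: one has to verify that the apparent $\binom{m+1}{2}$-worth of equations at degree $m$ is consistent and leaves precisely a one-parameter family (the choice of $a_1$, corresponding to the ambiguity $z\mapsto z/(1-a_1 z)$ type M\"obius freedom that preserves $z=x+O(x^2)$ only up to reparametrization) — in fact, allowing arbitrary leading behavior, a three-parameter M\"obius freedom.

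For uniqueness up to M\"obius transformations: if $z_1(x)$ and $z_2(x)$ both satisfy $\frac{dz_a(x_1)\,dz_a(x_2)}{(z_a(x_1)-z_a(x_2))^2} = \omega^{(0)}_2$, then $\frac{dz_1(x_1)\,dz_1(x_2)}{(z_1(x_1)-z_1(x_2))^2} = \frac{dz_2(x_1)\,dz_2(x_2)}{(z_2(x_1)-z_2(x_2))^2}$, and setting $w = z_1\circ z_2^{-1}$ this says $\frac{dw_1\,dw_2}{(w_1-w_2)^2} = \frac{d\zeta_1\,d\zeta_2}{(\zeta_1-\zeta_2)^2}$ as bidifferentials in the variable $\zeta=z_2$; a standard argument (compare~\eqref{eq:omega02eq-solution} with $y\leftrightarrow$ identity, or differentiate $\log$ twice) shows $w$ must be a M\"obius transformation. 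Finally, for the globality claim: the series $z(x)$ is defined at $o$, but since $\omega^{(0)}_2$ is a globally meromorphic bidifferential on $\Sigma^2$, the relation $\omega^{(0)}_2 = \frac{dz_1\,dz_2}{(z_1-z_2)^2}$ lets us continue $z$ analytically along any path in $\Sigma$ (wherever $dz\neq 0$), with monodromy lying in $\mathrm{PGL}_2$ by the uniqueness statement; but a $\mathrm{PGL}_2$-valued monodromy of a meromorphic local section can be trivialized after replacing $z$ by $\gamma\circ z$ for a suitable $\gamma$, or one directly observes that $dz/z$-type logarithmic derivatives of $z$ are single-valued meromorphic $1$-forms on $\Sigma$ (the Schwarzian and lower invariants of $z$ are globally meromorphic), forcing $z$ itself to be a globally meromorphic function on $\Sigma$. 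I would carry out the semi-classical limit argument (that the leading $\hbar$-order of~\eqref{eq:omega-kappa-relation} at $n=2$ is indeed~\eqref{eq:omega02eq}) as a preliminary remark, since it is immediate from the definition of $\bK$ in~\eqref{eq:bK-def}.
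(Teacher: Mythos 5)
Your overall strategy (power-series analysis at $o$, M\"obius uniqueness, then globalization) is in the same spirit as the paper's, and the uniqueness step via the Schwarzian is fine, but the other two steps have genuine gaps. The most serious one is in the existence step. The system $\log\frac{(x_1-x_2)^2 z'(x_1)z'(x_2)}{(z(x_1)-z(x_2))^2}=\phi(x_1,x_2)$ is heavily overdetermined: at total degree $d$ exactly one new unknown enters, while the space of symmetric degree-$d$ polynomials vanishing to second order on the diagonal has dimension $\lfloor d/2\rfloor$. The symmetry $c_{kl}=c_{lk}$ does \emph{not} ``guarantee the consistency of the overlapping constraints'': already at degree $4$ you get two independent conditions on $(c_{13},c_{22})$ and only the single new unknown $a_4$, so for generic symmetric $\phi$ no such $z$ exists. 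The consistency you need is exactly the content of Equation~\eqref{eq:omega02eq}; you correctly flag this as ``the main obstacle'' but then leave it unresolved, which is to assume the very statement being proved. The paper closes this gap by linearizing \eqref{eq:omega02eq} degree by degree (Equation~\eqref{eq:inductionH02}) and diagonalizing the resulting operator $A_d$: its eigenvalues are $0$ (with one-dimensional eigenspace) and $i(d-i)$, $i=1,\dots,\lfloor d/2\rfloor$, so the solution space at each degree is exactly one-dimensional and is accounted for by the reparametrization $z\mapsto z+\epsilon z^d$; normalizing the coordinate so that $c_{1,i}=0$ then kills this freedom. Some such spectral analysis of the linearized equation is unavoidable, and without it your induction does not go through.

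The second gap is in the globalization. A $\mathrm{PGL}_2$-valued monodromy of a local solution cannot in general be ``trivialized after replacing $z$ by $\gamma\circ z$'' (it can only be conjugated), and global meromorphy of the Schwarzian of $z$ does not force $z$ to be single-valued: the developing map of a projective structure on a higher-genus curve is precisely a map whose Schwarzian is a globally defined projective connection while the map itself has nontrivial monodromy. Likewise, single-valuedness of $d(z^{-1})=-dz/z^2$ does not give single-valuedness of $z^{-1}$, because of periods. The paper's argument is different and avoids all of this: for every nearby point $o'$ the form $dz/(z-t)^2$ (with $t=z(o')$ small) is again of the form $\omega^{(0)}_2(\cdot,v')$ for a tangent vector at $o'$, hence is a globally defined meromorphic $1$-form, and the ratio $\frac{dz/z^2}{dz/(z-t)^2}=(1-tz^{-1})^2$ is a globally meromorphic function, polynomial in $t$, whose coefficient of $t$ exhibits $z^{-1}$ as a global meromorphic function. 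You would need to replace your monodromy argument by something of this kind.
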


\begin{proof}
	
At the first step of the proof we are going to solve this equation in the space of formal symmetric power expansions in~$x_1$ and~$x_2$ (as it is done in~\cite[Proposition B.2]{TakasakiTakebe95}). In fact, it follows from~\eqref{eq:omega02eq-solution} that the equation of Lemma is independent of the choice of local coordinate~$x$. This implies immediately  that any bidifferential of the form~\eqref{eq:omega02z2} does satisfy this equation. So, for the first step of the argument it is sufficient to show that~\eqref{eq:omega02eq} has no solutions other than those of the form~\eqref{eq:omega02z2}.

Moreover, since Equation~\eqref{eq:omega02eq} is invariant with respect to a change of local coordinate, we can choose the local coordinate in any convenient way such that the given solution~$\omega^{(0)}_2$ takes as simple form as possible. Denote by $\alpha$ the $1$-form obtained from $\omega^{(0)}_2$ by specifying any nonzero tangent vector at  the point $o\in\Sigma$ for the second argument in $\omega^{(0)}_2$. Then $\alpha$ has a pole of order~$2$ at~$o$ with the trivial residue, and we can choose a local coordinate~$z$ such that $\alpha=\frac{dz}{z^2}$, that is,
\begin{equation}\label{eq:choice-loc-z}
	z=-\Big(\int \alpha\Big)^{-1},	
\end{equation}
with an arbitrary choice of the integration constant. Then, for the expansion
\begin{equation}\label{eq:H02z1}
	\omega^{(0)}_2=\frac{dz_1dz_2}{(z_1-z_2)^2}+d_1d_2H^{(0)}_2,\qquad H^{(0)}_2=\sum_{i,j=1}^\infty c_{i,j}z_1^iz_2^j,
\end{equation}
our choice of the local coordinate~$z$ is equivalent to requirement $c_{1,i}=c_{i,1}=0$ for all~$i\ge 1$.

Let us solve~\eqref{eq:omega02eq} inductively by computing homogeneous terms of $H^{(0)}_2$ in~\eqref{eq:H02z1} step by step. Denote by $h_d(z_1,z_2)=\sum_{i=1}^{d-1}c_{i,d-i}z_1^iz_2^{d-i}$ the degree~$d$ term of $H^{(0)}_2$. Then for $h_d(z_1,z_2)$ we obtain an equation of the following kind
\begin{equation} \label{eq:inductionH02}
	\frac{\partial^2h_d(z_1,z_2)}{\partial z_1\partial z_2}-\frac{h_d(z_1,z_1)-h_d(z_1,z_2)-h_d(z_2,z_1)+h_d(z_2,z_2)}{(z_1-z_2)^2}=\text{r.h.s}.
\end{equation}
where the right hand side is expressed in terms of the polynomials $h_{d'}$ for $d'<d$ computed in the previous steps of the induction.

This equation implies a system of linear equations for $c_{i,j}$, $i,j\geq 1$, $i+j=d$. We are only interested in the solutions of this system of equations that satisfy $c_{1,d-1}=c_{d-1,1}=0$ (as it is demanded by our choice of the coordinate $z$) and is symmetric under the swap of indices, $c_{i,j}=c_{j,i}$.
In order to prove that there are no non-trivial solutions of this linear system that satisfy these properties let us consider the left hand side of~\eqref{eq:inductionH02} as an operator $A_d$ acting in the vector space of degree $d-2$ homogeneous symmetric polynomials in $z_1,z_2$ and taking $z_1^{-1}z_2^{-1}h_d(z_1,z_2)$ to the left hand side of~\eqref{eq:inductionH02}. It is easy to find explicitly the eigenbasis of~$A_d$. Namely, it is checked by direct substitution that the polynomial $v_{d,0}=\frac{z_1^{d-1}-z_2^{d-1}}{z_1-z_2}=\sum_{i=0}^{d-2}z_1^iz_2^{d-2-i}$ is an eigenvector of~$A_d$ with the eigenvalue~$0$, and the following polynomial
\begin{equation}
	v_{d,i}=(z_1z_2)^{i-1}\Bigl(\frac{z_1^j-z_2^j}{z_1-z_2}-\frac{j}{2}(z_1^{j-1}+z_2^{j-1})\Bigr),
	\qquad j=d+1-2i,
\end{equation}
is an eigenvector of~$A_d$ with the eigenvalue $i(d-i)$ for $i=1,\dots,\lfloor\frac{d}{2}\rfloor$.
It follows that $A_d$ has one-dimensional kernel spanned by $v_{d,0}$ and it is non-degenerate on the span of the remaining eigenvectors consisting of symmetric polynomials vanishing on the diagonal $z_1=z_2$.

Now we can return to the analysis of Equation~\eqref{eq:inductionH02}. If we want to use it to determine $h_d(z_1,z_2)$ assuming that $h_{d'}\equiv 0$ for $d'<d$ (which is trivially true for $d'<2$), then we see from the above analysis of the structure of the operator $A_d$ that the only non-trivial solution that we get is proportional to $z_1z_2v_{d,0}=\sum_{i=1}^{d-1} z_1^i z_2^{d-i}$. Since by our choice of the coordinate $z$ we have an additional assumption that the coefficients of $z_1 z_2^{d-1}$ and $z_1^{d-1}z_2$ must be equal to zero, we conclude that  $h_d(z_1,z_2)$ must be equal to zero as well. This holds for any $d\geq 2$.
So, we conclude $H^{(0)}_2=0$ and $\omega^{(0)}_2=\frac{dz_1dz_2}{(z_1-z_2)}$ for the specified choice of the local coordinate, as claimed.

\begin{remark}\label{rem:qterms} Note that though we insisted above on a convenient choice of a coordinate $z$ near $o$, we could also just derive the same change of variables perturbatively trivializing the coefficients of the zero eigenvectors of $A_d$, $d\geq 2$, in any coordinate. To this end just note that
	\begin{equation}
		\frac{d(z_1 + \epsilon z_1^d) d(z_2+\epsilon z_2^d)}{(z_1 + \epsilon z_1^d-(z_2+ \epsilon z_2^d))^2} = \frac{dz_1dz_2}{(z_1-z_2)^2}+\epsilon d_1d_2 \sum_{i=1}^{d-1} z_1^i z_2^{d-i} + O(\epsilon).
	\end{equation}
	Note, however, in our approach we defined $z$ as an actual local coordinate near $o\in \Sigma$, not just as a formal power series in $x$, so we obtained a stronger statement than we initially intended to do at the first step of the proof.
\end{remark}

So, we obtained that the equality~\eqref{eq:omega02z2} holds in a suitably defined local coordinate $z$ in a neighborhood of the point~$(o,o)\in\Sigma^2$. At the second step of the proof we want to show that the introduced local coordinate~$z$ extends as a global meromorphic function, which would imply that~\eqref{eq:omega02z2} extends as an equality of globally defined meromorphic bidifferentials on~$\Sigma$.

By construction, there exists an open neighborhood $o\in U\subseteq \Sigma$ such that $z$ is a local coordinate on $U$ and
\begin{equation} \label{eq:w02-normalform-onU}
	\omega^{(0)}_2\vert_{U\times U} = \frac{dz_1dz_2}{(z_1-z_2)^2}.
\end{equation}
The local coordinate $z$ is defined up to a fractional linear transformation $z\to az/(cz+d)$, $a,c,d\in \mathbb{C}$, $ad\not=0$ (it is the effect of the choices of tangent vector and integration constant hidden in~\eqref{eq:choice-loc-z}). From the fact that $\omega^{(0)}_2$ is globally defined it follows that $dz/z^2$ extends to a global meromorphic form on $\Sigma$.

Applying the same construction of a convenient local coordinate to any other point $o'\in U$ and using  Equation~\eqref{eq:w02-normalform-onU} for $\omega^{(0)}_2$ on $U\times U$, we obtain that for each fractional linear transformation $w=(az +b) / (cz+d)$, $a,b,c,d\in \mathbb{C}$, $ad-bc\not=0$, $w(o')=0 \Leftrightarrow z(o')=-b/a$, that is, $-b/a$ is sufficiently close to $0$, we still have that $dw/w^2$ extends to a global meromorphic form on $\Sigma$.

So, not only $dz/z^2$ extends to a globally defined meromorphic form on $\Sigma$, but any fractional linear transformation of $z$ sufficiently close to identity does the same. It is a very strong property that is sufficient to derive that $z$ is a global meromorphic function itself. Indeed,
it means in particular that
\begin{equation}
	\frac{dz}{(z-t)^2}
\end{equation}
is a global defined meromorphic form for any $t$ small enough, hence
\begin{equation}
	\frac{\frac{dz}{z^2}}{	\frac{dz}{(z-t)^2}} = (1-tz^{-1})^2
\end{equation}
is a globally defined meromorphic function for any $t$ small enough. It is a polynomial in $t$, so the coefficients of $t$ and $t^2$ are globally defined meromorphic functions. Hence, $z^{-1}$ (the coefficient of $t$) and, therefore, $z$ are globally defined meromorphic functions. Lemma~\ref{lem:omega02KP} and Proposition~\ref{prop:omega02KP} are proved.
\end{proof}

\section{Computations with Gaussian integrals} \label{sec:Gaussian}

\subsection{A formal Gaussian integral}

\label{sec:FormalGaussianIntegral}

By a Gaussian integral we mean an expression of the form
\begin{equation}\label{eq:Gaussdef}
\int\!\!\dots\!\!\int P(\xi_1,\dots,\xi_n)e^{-Q(\xi_1,\dots,\xi_n)}d\xi_1\dots d\xi_n,
\end{equation}
where $Q$ is a nondegenerate quadratic form and $P$ is a polynomial, or a power series in auxiliary variable~$\sqrt{\hbar}$ with polynomial coefficients.
We treat Gaussian integrals formally, one can think of them using the following computational rules merely as definitions:
\begin{align}\label{eq:Gaussbasic}
	&\tfrac1 {\sqrt{2\pi}}  \int \xi^k e^{-a \frac{\xi^2}{2}} d\xi \coloneqq \begin{cases}
		0, & k = 1, 3, 5, \dots, \\
		(k-1)!! a^{-\frac {k+1}{2}}, & k=0,2,4,\dots;
	\end{cases}
	\\ \label{eq:DoubleGauss}
	&\tfrac\ii {{2\pi}}  \iint \xi^k\zeta^\ell e^{-\xi\zeta} d\xi d\zeta \coloneqq \delta_{k\ell} k!,
\end{align}
here $(-1)!!=1$. In all computations below one can think of a suitable choice of contours that makes this formulas work, and the terms of this shape appear in the formal $\sqrt{\hbar}$ expansion.

The main tool in dealing with a formal Gaussian integral is its invariance under a change of variables. The allowed changes should consist of a linear invertible transformation plus a series in positive powers of~$\sqrt{\hbar}$ with polynomial coefficients. For example, Equation~\eqref{eq:DoubleGauss} can be recovered from~\eqref{eq:Gaussbasic} by a linear transform reducing the quadratic function~$-\xi\zeta$ to a diagonal form and reducing thereby a multiple integral to a repeated one.

\begin{remark} Note that each change of variables in the Gaussian integral make sense only once one performs the corresponding re-parametrization of the integration cycle (or rather the local germ of the integration cycle near the critical point), with the proper choice of orientation. In order to keep the computations below readable, we decided to omit the discussion of the integration cycles in the computations; it is rather straightforward to reconstruct them at each step.
\end{remark}

In order to illustrate how a typical computation with the Gaussian integrals works, let us explain the meaning of the integral~\eqref{eq:etr},
\begin{equation}\label{eq:etr2}
f(z)=\tfrac{\ii}{\sqrt{2\pi\,\hbar}}
\int f^\vee(\chi)\;
y'(\chi)\;
e^{\frac1\hbar\left(x(z)(y(z)-y(\chi))+\int_z^{\chi}x\,dy\right)}\,d\chi.
\end{equation}
Applying the Taylor expansion of the exponent at the point $\chi=z$ we compute
\begin{equation}
\tfrac{1}{\hbar}\Bigl(x(z)(y(z)-y(\chi))+\int_z^{\chi}x\,y'\,dz\Bigr)=\frac12x'(z)y'(z)\tfrac{(\chi-z)^2}{\hbar}+O((\chi-z)^3).
\end{equation}
This expression is a quadratic form in $\xi=\tfrac{\chi-z}{\sqrt{\hbar}}$ plus higher order corrections involving positive powers of~$\sqrt{\hbar}$. Therefore, applying the shift $\chi=z{+}\sqrt{\hbar}\xi$ we can rewrite this integral as
\begin{equation}\label{eq:f-fvee}
f(z)=\tfrac{\ii}{\sqrt{2\pi}}
\int f^\vee(z{+}\sqrt{\hbar}\,\xi)\;y'(z{+}\sqrt{\hbar}\,\xi)\;
e^{\sum_{k\ge2}\hbar^{\frac{k}{2}-1}a_k(z)\frac{\xi^k}{k!}}
d\xi,
\end{equation}
where
\begin{equation}
	a_k(z)\coloneqq\partial_z^{k-1}(xy')-xy^{(k)}.
\end{equation}
The summand in the exponent with $k=2$ is quadratic in~$\xi$ and the higher order terms contain positive powers of~$\sqrt{\hbar}$. Thus, the obtained integral takes a standard form of Gaussian integral and can be computed in the expansion in $\sqrt{\hbar}$ using the rule~\eqref{eq:Gaussbasic} with $a=-x'(z)y'(z)$. In fact, the coefficients of the half-integer powers of~$\hbar$ involve odd powers of $\xi$ only and do not contribute to the integral. As a result, $f(z)$ is a series in nonnegative integer powers of $\hbar$ whose coefficients are certain rational combinations of derivatives of different order of the functions $f^\vee$, $x$, $y$ and a common factor $\sqrt{y'/x'}$:

\begin{equation}\label{eq:example f}
	f(z)=\frac{\sqrt{y'}}{\sqrt{x'}}\Bigl(
	f^\vee+\bigl(
	\tfrac{f^\vee x^{(3)}}{8 (x^{\prime })^2 y^{\prime }}-\tfrac{f^\vee y^{(3)}}{8 x^{\prime } (y^{\prime })^2}-\tfrac{5 f^\vee (x^{\prime\prime })^2}{24 (x^{\prime })^3 y^{\prime }}+\tfrac{f^\vee x^{\prime\prime } y^{\prime\prime }}{24 (x^{\prime })^2 (y^{\prime })^2}+\tfrac{f^\vee (y^{\prime\prime })^2}{6 x^{\prime } (y^{\prime })^3}+\tfrac{f^{\vee\prime } x^{\prime\prime }}{2 (x^{\prime })^2 y^{\prime }}-\tfrac{f^{\vee\prime\prime }}{2 x^{\prime } y^{\prime }}
	\bigr)\hbar+O(\hbar^2)
	\Bigr).
\end{equation}

\begin{remark}\label{Rmk6.1}
From \eqref{eq:f-fvee} we see that the integral transform sending $f^\vee$ to $f$ depends on $x'(z)$ and $y'(z)$ only, or, equivalently, on $dx$ and $dy$.

This transform acts pointwise in a sense that the value of $f$ at the point~$z$ depends on the values of $f^\vee$ in a vicinity of the point~$z$ only. The value $f(z)$ is not always defined at zeros and poles of $dx$ and $dy$, and in the complement to these points the coefficient of each particular power of~$\hbar$ of this transform acts as a finite order differential operator with holomorphic coefficients.

To be precise, Equation~\eqref{eq:Gaussbasic} shows that $f$ involves negative half-integer powers of $x'$ and $y'$. In other words, for any rational exponents~$a$ and~$b$ the function $f^\vee(x')^a(y')^b$ is globally meromorphic if and only if $f(x')^{a+1/2}(y')^{b+1/2}$ is globally meromorphic on~$\Sigma$.
\end{remark}

Let us now apply a change of variables from the local coordinate $\chi$ to the local coordinate $\phi$ defined by an implicit equation
\begin{equation}\label{eq:chi-to-phi}
y(\chi)=y(z)+\sqrt{\hbar} \phi.
\end{equation}
More explicitly, this change is given by $\chi=e^{\phi\sqrt\hbar\partial_y}z$ and its action on any function $g$ is given by $g(\chi)=e^{\phi\sqrt\hbar\partial_y}g(z)$. With this change the integral~\eqref{eq:etr2} takes the following form
\begin{equation}\label{eq:f-fveey}
\begin{aligned}
f(z)&=\tfrac{\ii}{\sqrt{2\pi}}
\int f^\vee(\chi)\;
e^{\sum_{k\ge2}\hbar^{\frac{k}{2}-1}\partial_y^{k-1}x\frac{\phi^k}{k!}}
d\phi,\qquad
f^\vee(\chi)=e^{\phi\sqrt\hbar\partial_y}f^\vee(z).
\end{aligned}
\end{equation}
Here we again use the Taylor expansion of the function $g(\chi) = \int\limits_z^{\chi} xdy$. The term in the exponent with $k=2$ is quadratic in~$\phi$ and the higher order terms involve positive powers of~$\hbar$. Therefore, we obtained a standard form of a Gaussian integral~\eqref{eq:Gaussbasic} with $a=-\frac{dx}{dy}$.

Alternatively, we can use another coordinate~$\psi$ given by
\begin{equation}\label{eq:chi-to-psi}
x(\chi)=x(z)+\sqrt\hbar\psi,
\end{equation}
and we compute for this coordinate applying integration by parts in the exponent
\begin{equation}\label{eq:f-fveex}
\begin{aligned}
f(z)&=\tfrac{\ii}{\sqrt{2\pi}}
\int f^\vee(\chi)\;
y'(\chi)\;
e^{-\frac1\hbar\left((x(z)-x(\chi))y(\chi))+\int_z^{\chi}y\,dx\right)}\,d\chi
\\&=\tfrac{\ii}{\sqrt{2\pi}}
\int f^\vee(\chi)\;
\frac{y'(\chi)}{x'(\chi)}\;
e^{\sum_{k\ge2}\hbar^{\frac{k}{2}-1}(k-1)\partial_x^{k-1}y\frac{\psi^k}{k!}}
d\psi,
\\ f^\vee(\chi)\;
\frac{y'(\chi)}{x'(\chi)}&=
e^{\psi\sqrt\hbar\partial_x}\Bigl(f^\vee(z)\;
\frac{y'(z)}{x'(z)}\Bigr).
\end{aligned}
\end{equation}
Equivalence of~\eqref{eq:f-fvee}, \eqref{eq:f-fveey}, and~\eqref{eq:f-fveex} follows from invariance of Gaussian integrals under the changes. Note that~\eqref{eq:f-fveey} shows, in particular, that the transformation sending $f^\vee$ to $f$ is independent of the choice of the coordinate~$z$.

In a similar way, the Gaussian integral realizing the inverse transformation~\eqref{eq:etri} can be represented in either of the following equivalent forms
\begin{equation}
\begin{aligned}
f^\vee(z)&=\tfrac{1}{\sqrt{2\pi}}
\int f(z{+}\sqrt\hbar\xi)\;
x'(z{+}\sqrt\hbar\xi)\;
e^{-\sum_{k\ge2}\hbar^{\frac{k}{2}-1}a^\vee_k(z)\frac{\xi^k}{k!}}\,d\xi
\\&=\tfrac{1}{\sqrt{2\pi}}
\int f(\chi)\;
e^{-\sum_{k\ge2}\hbar^{\frac{k}{2}-1}\partial_x^{k-1}y\frac{\psi^k}{k!}}
d\psi
\\&=
\tfrac{1}{\sqrt{2\pi}}
\int f(\chi)\;
\frac{x'(\chi)}{y'(\chi)}\;
e^{-\sum_{k\ge2}\hbar^{\frac{k}{2}-1}(k-1)\partial_y^{k-1}x\frac{\phi^k}{k!}}
d\phi,
\end{aligned}
\end{equation}
where $\chi$, $\psi$, and $\phi$, are related by the changes~\eqref{eq:chi-to-phi} and~\eqref{eq:chi-to-psi}, and
\begin{equation}
		a^\vee_k(z)\coloneqq\partial_z^{k-1}(yx')-yx^{(k)}.
\end{equation}

All other Gaussian integrals considered in the previous sections are treated in a similar way. Note, for example, that the transformation of Theorem~\ref{thm:transformation-KKernel} represented by double Gaussian integral involves factors of the form $\sqrt{x'_1x'_2y'_1y'_2}$ that have a well defined branch if $z_1$ and $z_2$ are close to one another so that this transformation is defined globally in a vicinity of the diagonal in~$\Sigma^2$.

\begin{proof}[Proof of Proposition~\ref{prop:identity-Gauss1}]
The composition of transformations~\eqref{eq:etr} and~\eqref{eq:etri} is represented, by construction, by the double Gaussian integral
\begin{equation}
\begin{aligned}
\tfrac{\ii}{2\pi\,\hbar}&
\iint
f^\vee(\zeta)\;
y'(\zeta)\;
x'(\chi)\;
e^{\frac1\hbar\left(x(\chi)(y(\chi)-y(\zeta))+\int_\chi^{\zeta}x\,dy\right)}
e^{-\frac1\hbar\left(y(z)(x(z)-x(\chi))+\int_z^{\chi}y\,dx\right)}\,d\zeta\,d\chi
\\&=\tfrac{\ii}{2\pi\,\hbar}
\iint
f^\vee(\zeta)\;
y'(\zeta)\;
x'(\chi)\;
e^{\frac1\hbar\left(-x(\chi)(y(\zeta)-y(z))+\int_z^{\zeta}x\,dy\right)}\,d\zeta\,d\chi
\\&=\tfrac{\ii}{2\pi\,\hbar}
\iint
f^\vee(\zeta)\;
e^{-\phi\psi+\frac1\hbar\left(-x(z)(y(\zeta)-y(z))+\int_z^{\zeta}x\,dy\right)}
d\phi\,d\psi,
\end{aligned}
\end{equation}
where in the second line we use integration by parts and in the third line changes of local coordinates defined by the implicit equations $y(\zeta)=y(z)+\sqrt\hbar \phi$, and $x(\chi)=x(z)+\sqrt\hbar\psi$.
The quadratic part of the exponent is $-\phi\bigl(\psi-\frac12\frac{dx}{dy}\phi\bigr)$. One can make a further change considering $\eta=\psi-\frac12\frac{dx}{dy}\phi$ as a new variable instead of $\psi$. Thus, we reduced the integral to the form~\eqref{eq:DoubleGauss}. The factor in front of $e^{-\phi\eta}d\phi d\eta$ does not involve $\eta$, and by~\eqref{eq:DoubleGauss}, the integral is equal to the value of this factor at $\phi=0$ which is $f^\vee(z)$.
\end{proof}

\subsection{Key identity} The main tool relating different forms of $x-y$ duality transformation is contained in the following identity.

\begin{lemma}\label{lem:key}
Let $y=y(z)$ be a meromorphic function and $g(z,u)$ be a series in $\hbar$ whose coefficients are meromorphic functions on the spectral curve depending on an additional parameter $u$ in a polynomial way. Assume that $g(z,u/\sqrt\hbar)$ involves non-negative powers of $\hbar$ only. Then the following identity holds
\begin{equation}\label{eq:key}
\sum_{r\ge0}\partial_y^r[u^r]g(z,u)=\frac{\ii}{2\pi}\iint g(\zeta,u)\,y'(\zeta)\,e^{-u\,(y(\zeta)-y(z))}du\,d\zeta.
\end{equation}
The double integral on the right hand side is understood in the sense of asymptotic expansion at small absolute value of $\hbar$ at $\zeta=z$.
\end{lemma}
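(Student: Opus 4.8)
The plan is to reduce the right-hand side of~\eqref{eq:key} to the left-hand side by carrying out the $u$-integral first, treating it as a formal (one-dimensional) Gaussian-type integral in the spirit of~\eqref{eq:DoubleGauss}, and then recognizing the resulting $\zeta$-integral as a realization of the differential operator $\sum_{r\ge 0}\partial_y^r[u^r]$. First I would introduce the local coordinate $\phi$ via $y(\zeta)=y(z)+\sqrt{\hbar}\,\phi$, exactly as in~\eqref{eq:chi-to-phi}, so that $\zeta=e^{\phi\sqrt{\hbar}\,\partial_y}z$ and $y'(\zeta)\,d\zeta=\sqrt{\hbar}\,d\phi$. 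After this change the exponent becomes $-u(y(\zeta)-y(z))=-u\sqrt{\hbar}\,\phi$, and it is natural to also rescale $u$; write $u=\tilde u/\sqrt{\hbar}$ (this is exactly the hypothesis that $g(z,u/\sqrt{\hbar})$ has only non-negative powers of $\hbar$, which makes $g(\zeta,\tilde u/\sqrt{\hbar})=e^{\phi\sqrt{\hbar}\,\partial_y}g(z,\tilde u/\sqrt{\hbar})$ a well-defined power series in $\sqrt{\hbar}$). Then $du=d\tilde u/\sqrt{\hbar}$ and the product of Jacobians $\sqrt{\hbar}\cdot\hbar^{-1/2}=1$ disappears, so the right-hand side becomes $\tfrac{\ii}{2\pi}\iint g(\zeta,\tilde u/\sqrt{\hbar})\,e^{-\tilde u\phi}\,d\tilde u\,d\phi$ with $\zeta$ now a function of $\phi$.

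Next I would expand $g$ in powers of its second argument, say $g(\zeta,u)=\sum_{r\ge 0} g_r(\zeta)\,u^r$ where the coefficients $g_r(\zeta)$ are themselves power series in $\hbar$ whose coefficients are meromorphic on $\Sigma$ (this is the polynomial-in-$u$ hypothesis, so the sum over $r$ is finite for each power of $\hbar$). Writing $g_r(\zeta)=e^{\phi\sqrt{\hbar}\,\partial_y}g_r(z)$ and substituting into the double integral, the $\tilde u$-integral is of the type~\eqref{eq:DoubleGauss}: for each monomial $\phi^k$ appearing in the $\sqrt{\hbar}$-expansion of $g_r(\zeta)=\sum_k \tfrac{(\sqrt{\hbar}\,\partial_y)^k}{k!}g_r(z)\,\phi^k$, one has $\tfrac{\ii}{2\pi}\iint (\tilde u/\sqrt{\hbar})^r\phi^k e^{-\tilde u\phi}\,d\tilde u\,d\phi=\hbar^{-r/2}\delta_{rk}\,k!$. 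Hence only the term $k=r$ survives, contributing $\hbar^{-r/2}\,r!\cdot\tfrac{(\sqrt{\hbar})^r}{r!}\partial_y^r g_r(z)=\partial_y^r g_r(z)$. Summing over $r$ gives $\sum_{r\ge 0}\partial_y^r g_r(z)=\sum_{r\ge 0}\partial_y^r[u^r]g(z,u)$, which is exactly the left-hand side of~\eqref{eq:key}.

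The main obstacle — really the only subtle point — is to make the formal manipulations legitimate: one must check that the double integral on the right, interpreted as an asymptotic expansion near the critical point $\zeta=z$, genuinely reduces to the repeated integral after the change of variables, with the correct orientation of the local germ of the integration cycle, and that the interchange of the (finite, for each power of $\hbar$) sum over $r$ with the integral is justified. I would handle this exactly as in the proof of Proposition~\ref{prop:identity-Gauss1} and the discussion following~\eqref{eq:Gaussbasic}, invoking the invariance of formal Gaussian integrals under changes of variables that are linear plus higher-order corrections in $\sqrt{\hbar}$; the rescaling $u\mapsto u/\sqrt{\hbar}$ together with $y(\zeta)=y(z)+\sqrt{\hbar}\,\phi$ is precisely such a change, and the hypothesis on $g(z,u/\sqrt{\hbar})$ guarantees that the integrand stays within the admissible class (power series in $\sqrt{\hbar}$ with polynomial coefficients in the Gaussian variables). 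Once this bookkeeping is in place, the computation above is essentially a single application of~\eqref{eq:DoubleGauss}.
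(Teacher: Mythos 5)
Your proposal is correct and is essentially the paper's own proof run in reverse: the paper goes from the left-hand side to the integral by inserting $[u^r]g=\frac{\ii}{2\pi}\iint\frac{v^r}{r!}g\,e^{-uv}\,du\,dv$ and resumming into the shift operator $e^{v\sqrt{\hbar}\partial_y}$, while you undo the same change of variables $y(\zeta)=y(z)+\sqrt{\hbar}\phi$, $u=\tilde u/\sqrt{\hbar}$ and expand term by term to apply~\eqref{eq:DoubleGauss}. The bookkeeping of Jacobians, the $\hbar$-powers, and the role of the hypothesis on $g(z,u/\sqrt{\hbar})$ all match the paper's argument.
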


\begin{proof}
Using~\eqref{eq:DoubleGauss} we compute:
\begin{equation}
\begin{aligned}
\sum_{r\ge0}\partial_y^r&[u^r]g(z,u)
=\sum_{r\ge0}\hbar^{r/2}\partial_y^r[u^r]g(z,u/\sqrt\hbar)
\\&=\sum_{r\ge0}\hbar^{r/2}\partial_y^r\frac{\ii}{2\pi}\iint\frac{v^r}{r!}g(z,u/\sqrt\hbar)\,e^{-u\,v}du\,dv
\\&=\frac{\ii}{2\pi}\iint e^{v\sqrt\hbar\partial_y}g(z,u/\sqrt\hbar)\,e^{-u\,v}du\,dv.
\\&=\frac{\ii}{2\pi}\iint g(\zeta,u/\sqrt\hbar)\,e^{-u\,v}du\,dv
\\&=\frac{\ii}{2\pi}\iint g(\zeta,u)\,e^{-u\,\sqrt\hbar v}du\,\sqrt\hbar dv.
\end{aligned}
\end{equation}
where $\zeta=\zeta(z,v)=e^{v\sqrt\hbar\partial_y}z$. Let us consider an expression for $\zeta$ in terms of $v$ as a change of variables in the integral. Equivalently, this change is determined by an implicit equation $y(\zeta)=y(z)+v\sqrt\hbar$ implying that $y'(\zeta)d\zeta=\sqrt\hbar\,dv$. The obtained integral expression written in coordinates~$u$ and~$\zeta$ takes the form exactly as in Lemma.
\end{proof}

\begin{remark}
A rescaling of the parameter~$u$ involving non-integer powers of~$\hbar$ is needed for the change of variables in the Gaussian integral to make sense. Recall that an allowed change involves a linear invertible transformation plus a series in positive powers of~$\hbar$ with polynomial coefficients. In that sense, the actual variables of integration in the integral~\eqref{eq:key} for which this integral takes a standard form~\eqref{eq:Gaussdef} of a formal Gaussian integral are $\bigl(\sqrt\hbar\,u,\frac{\zeta-z}{\sqrt\hbar}\bigr)$
\end{remark}

\subsection{Identities with Gaussian integrals}
In this section we formulate and prove an important set of identities needed for the proof of Theorem~\ref{thm:transformation-Omega}.
Let us have a look at the right square of Diagram~\eqref{eq:XYdiagram}. Observe that all its transformations are products of independent transformations acting on variables indexed by $1,2,\dots,n$. Therefore, in order to prove commutativity of this square it is sufficient to prove commutativity of the corresponding transformations for one particular index. The commutativity of this reduced set of transformations acting on functions in just two variables is checked in this section.

Consider the following diagram of transformations relating four functions in two variables
\begin{align}\label{eq:f-F-fvee-Fvee-Diagram}
	\xymatrix@C=30pt{
		f(z,u) \ar @<0.5ex> [d]  & F(w,\bar w) \ar [l]  \ar @<0.5ex> [d] \\
		f^\vee(z,v)\ar @<0.5ex> [u] & F^\vee(w,\bar w) \ar[l]  \ar @<0.5ex> [u]
	}
\end{align}
where the transformations represented by arrows are defined by
	\begin{align}\label{eq:F-to-f}
		f (z,u) & =e^{- u(\cS(u\hbar\partial_{x})-1) y}F\Big(e^{\frac{\hbar u}2 \partial_{x}}z,e^{-\frac{\hbar u}2 \partial_{x}}z\Big),
\\\label{eq:Fv-to-fv}
				f^\vee(z,v) & = e^{- v(\cS(v\hbar\partial_{y})-1) x}   F^\vee\Big(e^{\frac{\hbar v}2 \partial_{y}}z,e^{-\frac{\hbar v}2 \partial_{y}}z\Big),
\\\label{eq:fv-to-f}
	f(z, u) & = -e^{u y }\sum_{r=0}^\infty \partial_{x}^{r} \; e^{ -u y} \frac{dy}{dx}  [v^{r}] f^{\vee} (z,v),
	\\\label{eq:f-to-fv}
	f^\vee (z, v) & =  -e^{v x }\sum_{r=0}^\infty \partial_{y}^{r} \; e^{ -v x} \frac{dx}{dy} [u^{r}] f (z, u),
\\\label{eq:Fv-to-F}
	F(w,\bar w)
&= \frac{-\ii}{2\pi\hbar}
	\iint
	F^\vee(\chi,\bar \chi)
	y'(\chi)y'(\bar \chi)    d\chi d\bar \chi\times
	\\ \notag &\qquad
	e^{-\frac{1}{\hbar}\big(x(\bar w)(y(\bar w)-y(\chi))+\int_{\bar w}^{\chi} xdy\big)}
                  e^{\frac{1}{\hbar}\big(x(w)(y(w)-y(\bar \chi))+\int_{w}^{\bar \chi} xdy\big)},
	\\\label{eq:F-to-Fv}
	F^\vee(w,\bar w)
&=\frac{-\ii}{2\pi\hbar}
\iint
F(\chi,\bar \chi)
{x'(\chi)x'(\bar \chi)}    d\chi d\bar \chi\times
\\ \notag &\qquad
 e^{-\frac{1}{\hbar}\big(y(\bar w)(x(\bar w)-x(\chi))+\int_{\bar w}^{\chi} ydx\big)}
              e^{\frac{1}{\hbar}\big(y(w)(x(w)-x(\bar \chi))+\int_{w}^{\bar \chi} ydx\big)}.
\end{align}

\begin{proposition}\label{prop:DiagfF}
The transformations forming Diagram~\eqref{eq:f-F-fvee-Fvee-Diagram} commute.
\end{proposition}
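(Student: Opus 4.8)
The plan is to show that the diagram~\eqref{eq:f-F-fvee-Fvee-Diagram} commutes by reducing everything to the key identity of Lemma~\ref{lem:key} together with the Gaussian-integral manipulations already set up in Section~\ref{sec:FormalGaussianIntegral}. The diagram has four nodes and four arrows arranged as a square, so commutativity amounts to two independent checks: first, that the two composites $f\rightsquigarrow f^\vee$ (the left vertical pair) are mutually inverse — but this is essentially tautological since~\eqref{eq:fv-to-f} and~\eqref{eq:f-to-fv} are the one-variable shadows of~\eqref{eq:omega-vee-from-bW}--\eqref{eq:omega-from-bW-vee} whose invertibility is established in~\cite{alexandrov2022universal}; and, likewise, that~\eqref{eq:Fv-to-F} and~\eqref{eq:F-to-Fv} are mutually inverse, which is Proposition~\ref{prop:identity-Gauss1} applied in each variable separately. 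The substantive content is the compatibility of the horizontal arrows (the $\Omega$-to-$\bW$ type substitutions~\eqref{eq:F-to-f}, \eqref{eq:Fv-to-fv}) with the vertical arrows, i.e.\ that going $F\to f\to f^\vee$ equals going $F\to F^\vee\to f^\vee$.

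First I would rewrite the arrow~\eqref{eq:F-to-f} using Lemma~\ref{lem:key}. Observe that the ``naive'' substitution $F\big(e^{\frac{\hbar u}{2}\partial_x}z, e^{-\frac{\hbar u}{2}\partial_x}z\big)$ is, up to the prefactor $e^{-u(\cS(u\hbar\partial_x)-1)y}$, precisely the structure that appears in~\eqref{eq:Omega-to-bW}; and the operator $\sum_{r\ge0}\partial_y^r[u^r]$ that governs~\eqref{eq:fv-to-f} is exactly the object Lemma~\ref{lem:key} converts into a double Gaussian integral against $e^{-u(y(\zeta)-y(z))}y'(\zeta)$. So the composite $F\to f\to f^\vee$ becomes a triple integral: one pair of Gaussian variables coming from the $x\to y$ swap via Lemma~\ref{lem:key}, and the $(w,\bar w)$-parametrization of the diagonal neighborhood turning the operator-shift form of~\eqref{eq:F-to-f} back into an $\Omega$-type object on $\Sigma^2$. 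The plan is then to identify this with the composite $F\to F^\vee\to f^\vee$, where $F\to F^\vee$ is the double Gaussian integral~\eqref{eq:F-to-Fv} and $F^\vee\to f^\vee$ is the substitution~\eqref{eq:Fv-to-fv}, again rewritten as an integral.

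The heart of the matter is then a Gaussian-integral identity: two iterated integrals, obtained by composing the transformations along the two sides of the square, must agree. I would prove this by the same method used in the proof of Proposition~\ref{prop:identity-Gauss1} — combine the exponents, use integration by parts in the exponent to move $x\,dy\leftrightarrow y\,dx$ (the identity $x(w)(y(w)-y(\chi))+\int_w^\chi x\,dy = -\big(y(\chi)(x(w)-x(\chi))+\int_w^\chi y\,dx\big)+\text{boundary}$), change local coordinates via the implicit equations $y(\chi)=y(z)+\sqrt\hbar\,\phi$ and $x(\chi)=x(z)+\sqrt\hbar\,\psi$ as in~\eqref{eq:chi-to-phi}--\eqref{eq:chi-to-psi}, and collapse the resulting $\delta$-type double Gaussian~\eqref{eq:DoubleGauss} to eliminate one pair of integration variables. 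The bookkeeping of prefactors $e^{\pm u y}$, $e^{\pm v x}$, $\cS$-operators, and the Jacobian factors $x',y'$ (which enter with half-integer powers, cf.\ Remark~\ref{Rmk6.1}) is where one has to be careful; the square roots $\sqrt{x'y'}$ match up because all four functions in~\eqref{eq:f-F-fvee-Fvee-Diagram} are being evaluated near the diagonal where a consistent branch exists.

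\textbf{Main obstacle.} The expected difficulty is not conceptual but organizational: one must track the interplay between the polynomial parameter $u$ (resp.\ $v$) and the auxiliary $\sqrt\hbar$-grading carefully enough that every change of variables is of the allowed type (linear invertible plus higher order in $\sqrt\hbar$, cf.\ the Remark after Lemma~\ref{lem:key}), in particular the rescaling $u\mapsto u/\sqrt\hbar$ needed for Lemma~\ref{lem:key} to apply, and the fact that the operator-exponential $e^{\frac{\hbar u}{2}\partial_x}$ has to be reinterpreted as a coordinate shift before it can be fed into a Gaussian integral. A clean way to organize this is to prove the statement first for the ``building block'' where $F$ is a single monomial $w^a\bar w^b$ (so that all sums over $r$ and over graphs are absent), verify the commutation there by explicit Gaussian evaluation, and then invoke linearity and the polynomial-in-$u$ structure to conclude in general. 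Once the one-variable case is done, commutativity of the full right square of~\eqref{eq:XYdiagram}, and hence the missing arrows in Theorem~\ref{thm:transformation-Omega}, follow by taking the $n$-fold product of the reduced transformations as noted at the start of Section~\ref{sec:Gaussian}.
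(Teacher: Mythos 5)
Your plan is correct and follows essentially the same route as the paper's proof: the two pairs of vertical arrows are handled by Proposition~\ref{prop:identity-Gauss1} and the results of~\cite{alexandrov2022universal}, and the substantive step — that $F\to f\to f^\vee$ agrees with $F\to F^\vee\to f^\vee$ — is carried out exactly as you propose, by converting $\sum_{r}\partial_y^r[u^r]$ into a double Gaussian integral via Lemma~\ref{lem:key} and matching it against the other composite through integration by parts in the exponent and the implicit coordinate changes. The only cosmetic difference is that the paper works with an arbitrary $F$ directly (the change of variables $(\chi,\bar\chi)=(e^{\frac{u\hbar}{2}\partial_{x(\zeta)}}\zeta,e^{-\frac{u\hbar}{2}\partial_{x(\zeta)}}\zeta)$ handles general $F$ at once), so your reduction to monomials is unnecessary.
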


\begin{proof}
The fact that the transformations relating~$F$ and $F^\vee$ are inverse to one another follows from Proposition~\ref{prop:identity-Gauss1} proved in the previous section. The fact that the transformations \eqref{eq:fv-to-f} and \eqref{eq:f-to-fv} relating $f$ and $f^\vee$ are inverse to one another is proved implicitly in~\cite{alexandrov2022universal}. It follows also from the commutativity of the remaining part of the diagram since the transformations represented by horizontal arrows are invertible (at least in the asymptotic expansion in~$\hbar$). Thus, the most essential part of the proof of Proposition is the statement that the vertical transformations of the diagram commute with the horizontal ones. To prove that, we assume that $F$ is an arbitrary function (that is, an arbitrary series in~$\hbar$ whose coefficients are meromorphic functions on~$\Sigma^2$ defined in a vicinity of the diagonal). Define~$f$ and~$F^\vee$ by~\eqref{eq:F-to-f} and~\eqref{eq:F-to-Fv}, respectively, and we have to prove that the two expressions for $f^\vee$ given by~\eqref{eq:f-to-fv} and~\eqref{eq:Fv-to-fv} agree.

By Lemma~\ref{lem:key} an expression~\eqref{eq:f-to-fv} for $f^\vee$ can be represented as the following Gaussian integral
\begin{equation}\label{eq:fv-Gauss1}
\frac{-\ii}{2\pi}\iint
e^{ v\,(x(z)- x(\zeta))} f(\zeta,u)\,x'(\zeta)\,e^{-u\,(y(\zeta)-y(z))}du\,d\zeta.
\end{equation}

On the other hand, an expression for~$f^\vee$ implied by~\eqref{eq:F-to-Fv} and~\eqref{eq:Fv-to-fv} can be represented by the following Gaussian integral
\begin{equation}\label{eq:fv-Gauss2}
\begin{aligned}
e^{- v(\cS(v\hbar\partial_{y})-1) x}&\restr{(w,\bar w)}{(e^{\frac{\hbar v}2 \partial_{y}}z,e^{-\frac{\hbar v}2 \partial_{y}}z)}
\frac{-\ii}{2\pi\hbar}
\iint
F(\chi,\bar \chi)
{x'(\chi)x'(\bar \chi)}    d\chi d\bar \chi\times
\\&\qquad
 e^{-\frac{1}{\hbar}\big(y(\bar w)(x(\bar w)-x(\chi))+\int_{\bar w}^{\chi} ydx\big)}
              e^{\frac{1}{\hbar}\big(y(w)(x(w)-x(\bar \chi))+\int_{w}^{\bar \chi} ydx\big)}
 \\=&
 \restr{(w,\bar w)}{(e^{\frac{\hbar v}2 \partial_{y}}z,e^{-\frac{\hbar v}2 \partial_{y}}z)}
\frac{-\ii}{2\pi\hbar}
\iint
F(\chi,\bar \chi)
{x'(\chi)x'(\bar \chi)}    d\chi d\bar \chi\times
\\ &\qquad
e^{\frac{1}{\hbar}\big(\hbar v x(z)-\int_{\bar w}^w xdy
 -y(\bar w)(x(\bar w)-x(\chi))-\int_{\bar w}^{\chi} ydx
              +y(w)(x(w)-x(\bar \chi))+\int_{w}^{\bar \chi} ydx\big)}
 \\=&
 \restr{(w,\bar w)}{(e^{\frac{\hbar v}2 \partial_{y}}z,e^{-\frac{\hbar v}2 \partial_{y}}z)}
\frac{-\ii}{2\pi\hbar}
\iint
F(\chi,\bar \chi)
{x'(\chi)x'(\bar \chi)}    d\chi d\bar \chi\times
\\ &\qquad
e^{\frac{1}{\hbar}\big(\hbar v x(z)
 +y(\bar w)x(\chi)
              -y(w)x(\bar \chi)-\int_{\bar\chi}^{\chi} ydx\big)}
\\=&
\frac{-\ii}{2\pi\hbar}
\iint
F(\chi,\bar \chi)
{x'(\chi)x'(\bar \chi)}    d\chi d\bar \chi\times
\\ &\qquad
e^{v (x(z)
 -\frac{1}{2}x(\chi)
              -\frac{1}{2}x(\bar\chi))+\frac{1}{\hbar}\big(y(z)(x(\chi)-x(\bar\chi))-\int_{\bar\chi}^{\chi} ydx\big)}.
\end{aligned}
\end{equation}
Here in the first equality we use the identity $v \hbar\cS(v \hbar \partial_y) x=\restr{(w,\bar w)}{(e^{\frac{\hbar v}2 \partial_{y}}z,e^{-\frac{\hbar v}2 \partial_{y}}z)}\int\limits_{\bar w} ^{w} x dy$. In the second equality we again use an integration by parts and in the last equality we used that for the specified substitution we have $y(w)=y(z)+\frac{v\hbar}{2}$, $y(\bar w)=y(z)-\frac{v\hbar}{2}$.

In order to compare the obtained integral with~\eqref{eq:fv-Gauss1} we apply the change of variables defined by $(\chi,\bar\chi)=(e^{\frac{u\hbar}{2}\frac{d}{dx(\zeta)}}\zeta, e^{-\frac{u\hbar}{2}\frac{d}{dx(\zeta)}}\zeta)$. For this change we find
\begin{align}
x(\chi)=x(\zeta)+\tfrac{u\hbar}{2},\quad
x(\bar\chi)=x(\zeta)-\tfrac{u\hbar}{2},
\\x'(\chi)x'(\bar\chi)\,d\chi\,d\bar\chi=
\hbar\,x'(\zeta)d\zeta\,du.
\end{align}
Then taking $F(\chi,\bar\chi)$ from \eqref{eq:F-to-f} and using the identity
$
u \hbar\cS(u \hbar \partial_x) y=\restr{(\chi,\bar \chi)}{(e^{\frac{\hbar u}2 \partial_{x}}z,e^{-\frac{\hbar u}2 \partial_{x}}z)}\int\limits_{\bar \chi} ^{\chi} y dx	
$
 we obtain
\begin{equation}
	F(\chi,\bar\chi)=e^{-u\,y(\zeta)+\frac1{\hbar}\int_{\bar\chi}^\chi y\,dx}f(\zeta,u).
\end{equation}
And finally the integral on the right hand side of~\eqref{eq:fv-Gauss2} takes exactly the form~\eqref{eq:fv-Gauss1}. This completes the proof of Proposition~\ref{prop:DiagfF}.
\end{proof}

\begin{remark}
To make sense to the computations in the proof, we imply that all involved variables are regarded with certain scales. In order to da that the integrals involved take the standard form~\eqref{eq:Gaussdef} of formal Gaussian integral on each step of computations. For example, for the integrals~\eqref{eq:F-to-f} and~\eqref{eq:Fv-to-fv} we assume that $\frac{\chi-\bar w}{\sqrt\hbar}=O(\hbar^0)$, $\frac{\bar\chi-w}{\sqrt\hbar}=O(\hbar^0)$. For the integral~\eqref{eq:fv-Gauss1} we assume $u\sqrt\hbar=O(\hbar^{0})$, $\frac{\zeta-z}{\sqrt\hbar}=O(\hbar^0)$. With these scales all the changes have allowed form of a linear transform plus terms of positive order in~$\hbar$ with polynomial coefficients.
\end{remark}

\section{Refined \texorpdfstring{$x-y$}{x--y} duality} \label{sec:ReformulationXY-duality}

In this section we give the proofs of all relations in Theorem~\ref{thm:transformation-Omega}, using the results of~\cite{alexandrov2022universal} and Proposition~\ref{prop:DiagfF} above.

\begin{proof}[Proof of Theorem~\ref{thm:transformation-Omega}]

First, we show how the first two relations of  Theorem~\ref{thm:transformation-Omega} follow directly from~\cite[Proposition 4.8]{alexandrov2022universal}.	
	
 Recall the definition of $\bW^{(g)}_n$ and notation $\cW^{x,(g)}_{m+1,0}$ in~\cite{alexandrov2022universal}. For each $i=1,\dots,n$ we can redevelop the sum over graphs in $\bW^{(g)}_n$ to become a homogeneous linear combination of $e^{\tilde u_ix_i}\cW^{x,(\tilde g)}_{\tilde m+1,0} (\tilde u_i; \tilde z_{\llbracket \tilde m \rrbracket}; z_i; \emptyset)$. The variables $\tilde z_{\llbracket m \rrbracket}$ in this case are further specialized via attachment of the corresponding multi-edges to the vertices labeled by $j\not= i$. \cite[Equation (4.7)]{alexandrov2022universal} implies that the $i$-th factor in Equation~\eqref{eq:bW-vee-from-bW} transforms each of these $e^{\tilde u_ix_i}\cW^{x,(\tilde g)}_{\tilde m+1,0} (\tilde u_i; \tilde z_{\llbracket m \rrbracket}; z_i;\emptyset)$ into $e^{u_iy_i}\cW^{y,(\tilde g)}_{\tilde m,1} (u_i; \tilde z_{\llbracket m \rrbracket}; z_i;\emptyset)$.
	
Applying this observation for each $i=1,\dots,n$, we obtain a sum over the same graphs that we can redevelop into a homogeneous linear combination of $e^{ u_iy_i}\cW^{y,(\tilde g)}_{0, 1+\tilde m} (u_i; \emptyset; z_i;\tilde z_{\llbracket \tilde m \rrbracket})$ for each $i=1,\dots,n$, which then coincides with the same linear expansion of $\bW^{\vee,(g)}_n$. This proves Equation~\eqref{eq:bW-vee-from-bW}. Equation~\eqref{eq:bW-from-bW-vee} is fully analogous.

Now the relation between $\Omega_n$ and $\bW_n = \sum_{g=0}^\infty \hbar^{2g-2+n}\bW^{(g)}_n $ is almost tautological in the framework of Proposition~\ref{prop:DiagfF}. Indeed, $\bW_n (z_{\llbracket n \rrbracket}, u_{\llbracket n \rrbracket})$  is related to $\Omega_n(w_{\llbracket n \rrbracket},\bar w_{\llbracket n \rrbracket})$
by Equation~\eqref{eq:Omega-to-bW}, which is exactly the application of Equation~\eqref{eq:F-to-f} in each pair of the variables $w_i,\bar w_i$. The same holds for $\Omega^\vee_n$ and $\bW^\vee_n$, cf.~Equations~\eqref{eq:Omega-to-bW-vee} and~\eqref{eq:Fv-to-fv}. Thus Proposition~\ref{prop:DiagfF} applied in each pair of variables $w_i,\bar w_i$, $i=1,\dots,n$ implies that Equations~\eqref{eq:Omega-int} and~\eqref{eq:Omega-vee-int} use as vertical arrows between $\Omega_n$ and $\Omega^\vee_n$ in
\begin{equation}\label{eq:XYdiagram-W-Omega}
	\begin{aligned}
		{\ }   \\
		\xymatrix@C=30pt{
		 \{\bW^{(g)}_n\}   \ar  @<0.5ex> [d] & \{\Omega_n\} \ar[l] \ar @<0.5ex> [d] \\
		  \{\bW^{\vee,(g)}_n\}    \ar @<0.5ex> [u]& \{\Omega^{\vee}_n\} \ar[l] \ar @<0.5ex> [u]
		}
		\\
		{\ }
	\end{aligned}
\end{equation}
complete this diagram to a commutative one. Then~\eqref{eq:XY-diag-triv} implies that the horizontal arrows in~\eqref{eq:XYdiagram-W-Omega} do not lose any information, that is, the commutativity of~\eqref{eq:XYdiagram-W-Omega} implies that Equations~\eqref{eq:Omega-int} and~\eqref{eq:Omega-vee-int} hold.
\end{proof}

\printbibliography

\end{document}